\DeclareFontFamily{U}{mathx}{\hyphenchar\font45}
\DeclareFontShape{U}{mathx}{m}{n}{
	<5> <6> <7> <8> <9> <10>
	<10.95> <12> <14.4> <17.28> <20.74> <24.88>
	mathx10
}{}
\DeclareSymbolFont{mathx}{U}{mathx}{m}{n}
\DeclareMathAccent{\widecheck}{0}{mathx}{"71}
\DeclareMathAccent{\wideparen}{0}{mathx}{"75}
\DeclareFontFamily{OT1}{pzc}{}
\DeclareFontShape{OT1}{pzc}{m}{it}{<-> s * [1.200] pzcmi7t}{}
\DeclareMathAlphabet{\mathpzc}{OT1}{pzc}{m}{it}
\newcommand\functorop[1][l]{\csname#1functor\endcsname}
\newcommand\lfunctorop[3]{%
	\setbox0=\hbox{$#2$}%
	\kern\wd0%
	\ensurestackMath{\Centerstack[c]{#1\\ \mathllap{#2\;\,}\mathclap{\DownArrow}\\#3}}%
}		
\newcommand\rfunctorop[3]{%
	\setbox0=\hbox{$#2$}%
	\ensurestackMath{\Centerstack[c]{#1\\\mathclap{\UpArrow}\mathrlap{\,\;#2}\\#3}}%
	\kern\wd0%
}
\newcommand\UpArrow{\rotatebox[origin=c]{90}{$\longrightarrow$\,}}
\newcommand\DownArrow{\rotatebox[origin=c]{-90}{$\longrightarrow$\,}}
\newcommand\functor[1][l]{\csname#1functor\endcsname}
\newcommand\lfunctor[3]{%
	\setbox0=\hbox{$#2$}%
	\kern\wd0%
	\ensurestackMath{\Centerstack[c]{#1\\ \mathllap{#2\;\,}\mathclap{\DownArrow}\\#3}}%
}
\newcommand\rfunctor[3]{%
	\setbox0=\hbox{$#2$}%
	\ensurestackMath{\Centerstack[c]{#1\\\mathclap{\DownArrow}\mathrlap{\,\;#2}\\#3}}%
	\kern\wd0%
}
\newcommand\functormapsto{\mathrel{\ensurestackMath{\Centerstack[c]{\longmapsto\\ \\\longmapsto}}}}
\newcommand{\hyp}{\catname{Hyp}}
\newcommand{\hyps}{\catname{Hyp}_{\Sigma}}
\newcommand{\gr}{\catname{Graph}}
\newcommand{\dgr}{\catname{SGraph}}
\newcommand{\dg}{\catname{DAG}}
\newcommand{\rt}{\mathsf{dclosed_s}}
\newcommand{\rta}{\mathsf{dclosed}}
\newcommand{\rtd}{\mathsf{dclosed_{d}}}
\newcommand{\catname}[1]{\mathbf{#1}}
\newcommand{\tg}[0]{\catname{TG}_{\Sigma}}
\newcommand{\ari}[0]{\mathsf{ar}}
\newcommand{\lgt}[0]{\mathsf{length}}
\newcommand{\arr}[1]{\mathsf{Mor}(\catname{#1})}
\newcommand{\mono}[1]{\mathsf{Mono}(\catname{#1})}
\newcommand{\reg}[1]{\mathsf{Reg}(\catname{#1})}
\newcommand{\pro}{\mathsf{prod}}
\newcommand{\pred}[1]{{\downarrow}#1}
\newcommand{\id}[1]{\mathsf{id}_{#1}}
\newcommand{\comma}[2]{#1\hspace{1pt} {\downarrow}\hspace{1pt} #2}
\newcommand{\cma}[2]{\mathcal{#1}\hspace{1pt} {\downarrow}\hspace{1pt} \mathcal{#2}}
\newcommand{\tree}{\catname{Tree}}
\newcommand{\renewtheorem}[1]{%
	\expandafter\let\csname #1\endcsname\relax
	\expandafter\let\csname c@#1\endcsname\relax
	\expandafter\let\csname end#1\endcsname\relax
	\newtheorem{#1}%
}
\theoremstyle{plain}
\theoremstyle{definition}
\newtheorem*{notaz}{Notation}
\title{A new criterion for $\mathcal{M}, \mathcal{N}$-adhesivity,\\ with an application to hierarchical graphs\rsuper*}
\author[D.~Castelnovo]{Davide Castelnovo}	
\address{Department of Mathematics, Computer Science and Physics, University of Udine, Udine, Italy.}	
\email{davide.castelnovo@uniud.it}  
\author[F.~Gadducci]{Fabio Gadducci}	
\address{Department of Computer Science, University of Pisa, Pisa, Italy.}	
\email{fabio.gadducci@unipi.it}  
\author[M.~Miculan]{Marino Miculan}	
\address{Department of Mathematics, Computer Science and Physics, University of Udine, Udine, Italy.}
\email{marino.miculan@uniud.it}	
\begin{document}

\begin{abstract}
\emph{Adhesive categories}, and variants such as $\mathcal{M}, \mathcal{N}$-adhesive ones,  
marked a watershed moment for the algebraic approaches to the rewriting of graph-like structures, providing an abstract framework where many general results (on, e.g., parallelism)  could be recast and uniformly proved.
However, checking that a model satisfies the adhesivity properties is sometimes far from immediate.
In this paper we present a new criterion giving a sufficient condition for \emph{$\mathcal{M}, \mathcal{N}$-adhesivity}, a generalisation of the original notion of adhesivity.
We apply it to several existing categories, and in particular to \emph{hierarchical graphs}, a formalism that is notoriously difficult to fit in the mould of algebraic approaches to rewriting and for which various alternative definitions float around.
\end{abstract}
\maketitle
\section{Introduction}
The introduction of \emph{adhesive categories} marked a watershed moment for the algebraic approaches 
to the rewriting of graph-like structures~\cite{lack2005adhesive,ehrig2006fundamentals}.
Until then, key results of the approaches on e.g. parallelism and confluence had to be proven 
over and over again for each different formalism at hand, 
despite the obvious similarity of the procedure.
Differently from previous solutions to such problems, as the one witnessed by the \emph{butterfly lemma} for 
graph rewriting~\cite[Lemma~3.9.1]{CorradiniMREHL97},
the introduction of adhesive categories provided such a disparate set of formalisms with a common abstract framework 
where many of these general results could be recast and uniformly proved once and for all.
%

\looseness=-1
Despite the elegance and effectiveness of the framework, proving that a given category satisfies the conditions for being adhesive can be a daunting task. 
For this reason, we look for simpler general criteria implying adhesivity for a class of categories.
Similar criteria have been already provided for the core framework of adhesive categories; e.g., every elementary topos is adhesive \cite{lack2006toposes}, and a category is (quasi)adhesive if and only if can be suitably embedded in a topos \cite{johnstone2007quasitoposes,garner2012axioms}. This covers many useful categories such as sets, graphs, and so on. On the other hand, there are many categories of interest which are not (quasi)adhesive, such as directed graphs, posets, and many of their subcategories.  In these cases we can try to prove the more general $\mathcal{M}, \mathcal{N}$-adhesivity for suitable $\mathcal{M}, \mathcal{N}$; however, so far this has been achieved only by means of \emph{ad hoc} arguments. 
To this end, one of the main contributions of this paper is a new criterion for $\mathcal{M}, \mathcal{N}$-adhesivity, based on the verification of some properties of functors connecting the category of interest to a family of suitable adhesive categories. This criterion allows us to prove in a uniform and systematic way some previous results about the adhesivity of categories built by products, exponents, and comma construction.

\looseness=-1
Moreover, it is well-known that categorical properties are often \emph{prescriptive}, indicating abstractly the presence of some good behaviour of the modelled system. Adhesivity is one such property, as it is highly sought after when it comes to rewriting theories. Thus, our criterion for proving $\mathcal{M}, \mathcal{N}$-adhesivity can be seen also as a ``litmus test'' for the given category.
This is useful in situations that are not completely settled, and for which different settings have been proposed.
An important example is that of \emph{hierarchical graphs}, for which we roughly can find two alternative proposals: on the one hand, algebraic formalisms where the edges have some algebraic structures, so that the nesting is a side effect of the term construction; on the other hand, combinatorial approaches where the topology of a standard graph is enriched by some partial order, either on the nodes or on the edges, where the order relation indicates the presence of nesting. 
By applying our criterion, we can show that the latter approach yields indeed an $\mathcal{M}, \mathcal{N}$-adhesive category, confirming and overcoming the limitations of some previous approaches to hierarchical graphs \cite{nyko2012,Padberg17,palacz2004algebraic}, which we briefly recall next.

The more straightforward proposal is by Palacz~\cite{palacz2004algebraic}, using a poset of edges instead of just a set; however, the class of rules has to be restricted 
in order to apply the approach, which in any case predates the introduction of adhesive categories.
Our work allows to rephrase in terms of adhesive properties and generalise Palacz's proposal, dropping the constraint on rules.
Another attempt are Mylonakis and Orejas' \emph{graphs with layers}~\cite{nyko2012}, for which 
$\mathcal{M}$-adhesivity is proved for a class of monomorphisms in the category of symbolic graphs; however, nodes between edges at different layers cannot be shared. 
Padberg \cite{Padberg17} goes  for a coalgebraic presentation via a peculiar ``superpower set'' functor; this gives immediately $\mathcal{M}$-adhesivity provided that this superpower set functor is well-behaved with respect to limits. However, albeit quite general, the approach is rather \emph{ad hoc}, not modular and not very natural for actual modelling.

Summarising, the main contributions of this work are: 
(a) a new general criterion for assessing $\mathcal{M}, \mathcal{N}$-adhesivity; 
(b) new proofs of $\mathcal{M}, \mathcal{N}$-adhesivity for some relevant categories, systematising previous known proofs;
(c) the first proof that a category of hierarchical graph is $\mathcal{M}, \mathcal{N}$-adhesive.
\vfill

\noindent\textit{Synopsis.} 
After having recalled some basic notions, in \cref{sec:criterion} we introduce the new criterion for $\mathcal{M}, \mathcal{N}$-adhesivity; using it, we show $\mathcal{M}, \mathcal{N}$-adhesivity of several constructions, such as products and comma categories.
In \cref{sec:examples} we apply this theory to various example categories, such as directed (acyclic) graphs, trees and term graphs. We show also the adhesivity of several categories obtained by combining adhesive ones, and in particular of the elusive category of hierarchical graphs.
Conclusions and directions for future work are in \cref{sec:concl}. 

\section{$\mathcal{M}, \mathcal{N}$-adhesivity via creation of (co)limits}\label{sec:criterion}
In this section we recall some definitions and results about $\mathcal{M}, \mathcal{N}$-adhesive categories and provide a new criterion to prove this property. 
\vspace{-0.24cm}
\subsection{ $\mathcal{M}, \mathcal{N}$-adhesive categories}
Intuitively, an adhesive category is one in which pushouts of monomorphisms exist and ``behave more or less as they do in the category of sets'' \cite{lack2005adhesive}.  Formally, we require pushouts of monomorphisms to be Van Kampen colimits.
\begin{defi} Given two diagrams
\begin{center}
\begin{tikzpicture}[baseline=(current  bounding  box.center), scale=1.5] 
\node(A) at(-3,1.75){$A$};
\node(B) at (-2,1.75){$B$};
\node(C) at(-2,0.75){$D$};
\node(D) at (-3,0.75){$C$};
\draw[->](A)--(B)node[pos=0.5, above]{$f$};
\draw[->](D)--(C)node[pos=0.5, below]{$g$};
\draw[->](A)--(D)node[pos=0.5, left]{$m$};
\draw[->](B)--(C)node[pos=0.5, right]{$n$};

\node(C)at(-1,0.5){$C$};
\node(A)at(1,1){$A$};
\node(B)at(2,0.5){$B$};
\node(D)at(0,0){$D$};
\node(A')at(1,2.5){$A'$};
\node(B')at(2,2){$B'$};
\node(C')at(-1,2){$C'$};
\node(D')at(0,1.5){$D'$};
\draw[<-](B')--(A')node[pos=0.5, above, xshift=0.1cm, yshift=-0.05cm]{$n'$};
\draw[->](B')--(B)node[pos=0.5, right]{$b$};
\draw[->](C')--(C)node[pos=0.5, left]{$c$}; 
\draw[->](D')--(D)node[pos=0.3, left]{$d$};
\draw[->](A')--(C')node[pos=0.5, above ]{$m'$}; 
\draw[->](B')--(D')node[pos=0.8, above ]{$f'$};
\draw[->](C')--(D')node[pos=0.5, below, xshift=-0.1cm, yshift=0.05cm]{$g'$};
\draw[->](A)--(B)node[pos=0.5, above, xshift=0.1cm, yshift=-0.05cm]{$n$};
\draw[->](C)--(D)node[pos=0.5, below, xshift=-0.1cm, yshift=0.05cm]{$g$};
\draw[->](B)--(D)node[pos=0.5, below]{$f$};
\draw[-](A')--(1,1.8);\draw[<-](A)--(1,1.7)node[pos=0.7, right]{$a$};
\draw[-](A)--(0.05, 0.7625)node[pos=0.2, below]{$m$};
\draw[<-](C)--(-0.05,0.7375);
\end{tikzpicture}
\end{center}
we say that the left square is a \emph{Van Kampen square} if, whenever the right cube has pullbacks as back faces, then its top face is a pushout if and only if the front faces are pullbacks. 
	
	Pushout squares which enjoy the ``if'' of this condition are called \emph{stable}.
\end{defi}
	Given a category $\catname{A}$ we will denote by $\arr{A}, \mono{A}, \reg{A}$ respectively the classes of morphisms, monomorphisms and regular monomorphisms of $\catname{A}$.

\begin{defi}Let $\catname{A}$ be a category and $\mathcal{A}\subseteq \arr{A}$. Then we say that  $\mathcal{A}$ is
\begin{itemize}
	\item 
		\emph{stable under pushouts (pullbacks)} if for every pushout (pullbacks) square 
		\begin{center}
			\begin{tikzpicture}
			\node(A) at(0,0){$A$};
			\node(B) at (1.5,0){$B$};
			\node(C) at(1.5,-1.5){$D$};
			\node(D) at (0,-1.5){$C$};
			\draw[->](A)--(B)node[pos=0.5, above]{$f$};
			\draw[->](D)--(C)node[pos=0.5, below]{$g$};
			\draw[->](A)--(D)node[pos=0.5, left]{$m$};
			\draw[->](B)--(C)node[pos=0.5, right]{$n$};
			\end{tikzpicture}
		\end{center} if $m \in \mathcal{A}$ ($n\in \mathcal{A}$) then $n \in \mathcal{A}$ ($m \in \mathcal{A}$);
	\item \emph{closed under composition} if $g, f\in \mathcal{A}$ implies $g\circ f\in \mathcal{A}$ whenever $g$ and $f$ are composable;
	\item \emph{closed under $\mathcal{B}$-decomposition} (where $\mathcal{B}$ is another subclass of $\arr{A}$) if $g\circ f\in \mathcal{A}$ and $g\in \mathcal{B}$ implies $f\in \mathcal{A}$;
	\item \emph{closed under decomposition} if it is closed under $\mathcal{A}$-decomposition.
\end{itemize}
\end{defi}
\begin{rem}Clearly, ``decomposition'' corresponds to ``left cancellation'', but we prefer to stick to the name commonly used in literature (see e.g.~\cite{habel2012mathcal}).
\end{rem}

We are now ready to give the definition of $\mathcal{M},\mathcal{N}$-adhesive category \cite{habel2012mathcal,peuser2016composition}.	
\begin{defi}\label{def:class}
	Let $\catname{A}$ be a category and $\mathcal{M}{\subseteq}\mono{A}$, $\mathcal{N}{\subseteq} \arr{A}$ such that
	\begin{enumerate}[label=(\roman*)]
		\item $\mathcal{M}$ and $\mathcal{N}$ contain all isomorphisms and are closed under composition and decomposition;
		\item $\mathcal{N}$ is closed under $\mathcal{M}$-decomposition;
		\item $\mathcal{M}$ and $\mathcal{N}$ are stable under pullbacks and pushouts.
	\end{enumerate}
Then  we say that $\catname{A}$ is \emph{$\mathcal{M}, \mathcal{N}$-adhesive} if
\begin{enumerate}[label=(\alph*)]
 	\item every cospan $C\xrightarrow{g}D\xleftarrow{m}B$ with $m\in \mathcal{M}$ can be completed to a pullback (such pullbacks will be called \emph{$\mathcal{M}$-pullbacks});
	\item every span $C\xleftarrow{m}A\xrightarrow{n}B$ with $m\in \mathcal{M}$ and $n\in \mathcal{N}$ can be completed to a pushout (such pushouts will be called \emph{$\mathcal{M}, \mathcal{N}$-pushouts}); 
	\item  $\mathcal{M}, \mathcal{N}$-pushouts are Van Kampen squares.
\end{enumerate}
\end{defi}

\begin{rem}\label{rem:salva}
 \emph{$\mathcal{M}$-adhesivity} as defined in \cite{azzi2019essence} coincides with $\mathcal{M},\arr{A}$-adhesivity, while 
 \emph{adhesivity} and \emph{quasiadhesivity} \cite{lack2005adhesive,garner2012axioms}  coincide with  $\mono{A}$-adhesivity and $\reg{A}$-adhesivity, respectively. Notice that, in the $\mathcal{M}$-adhesive case, stability under pushouts of $\mathcal{M}$ derives from properties (a)--(c) of \cref{def:class}, while closure under decomposition follows from stability under pullbacks in any category, so there is no need to prove it independently. 
 
Other authors have introduced weaker notions of $\mathcal{M}$-adhesivity; see, e.g.,  \cite{ehrig2006fundamentals,ehrig2004adhesive,sobocinski2020rule}, where our $\mathcal{M}$-adhesive categories are called \emph{adhesive HLR categories}.
\end{rem}

In general, proving that a given category is $\mathcal{M}, \mathcal{N}$-adhesive by verifying  the conditions of \cref{def:class} may be long and tedious; hence, we seek criteria which are sufficient for adhesivity, and simpler to prove.
A prominent example is the following result due to Lack and Soboci{\'n}ski.
\begin{thm}[\cite{lack2006toposes}, Thm. $26$]
	Any elementary topos is an adhesive category.
\end{thm}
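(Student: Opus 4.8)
The plan is to specialise \cref{def:class} to $\mathcal{M}$ the class of all monomorphisms and $\mathcal{N}$ the class of all morphisms of a topos $\mathcal{E}$, and then to discharge its clauses one by one, drawing on the three features that make $\mathcal{E}$ a topos: finite completeness and cocompleteness, local cartesian closure, and the subobject classifier. The structural requirements (i)--(iii) of \cref{def:class} are immediate, since monomorphisms always contain the isomorphisms, are closed under composition and decomposition, and are stable under pullback, while taking $\mathcal{N}$ to be all of $\mathsf{Mor}(\mathcal{E})$ trivialises every condition on $\mathcal{N}$ (and stability under pushouts of $\mathcal{M}$ need not be checked separately, by \cref{rem:salva}). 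Likewise clauses (a) and (b) hold because a topos has all finite limits and colimits, so \emph{every} cospan completes to a pullback and every span to a pushout. The entire content of the theorem thus reduces to clause (c): pushouts along monomorphisms are Van Kampen squares.

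I would prove the Van Kampen property in its two halves. The first, \emph{stability} (the ``if'' direction), asks that pulling a pushout-of-a-mono square back along any $d\colon D'\to D$ again give a pushout; indeed, when all four lateral faces of the cube are pullbacks the top square is precisely the pullback of the bottom one along $d$. This is the assertion that pushouts are stable under pullback, and in a topos it comes for free: by local cartesian closure each pullback functor $d^{*}\colon\mathcal{E}/D\to\mathcal{E}/D'$ is a left adjoint (to $\Pi_{d}$) and hence preserves all colimits, and since the forgetful functors $\mathcal{E}/X\to\mathcal{E}$ create colimits, pushouts in $\mathcal{E}$ are preserved by $d^{*}$.

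For the second, harder half I would pass to the descent reformulation of the Van Kampen condition. Writing the pushout as $C\xleftarrow{m}A\xrightarrow{f}B$ with $m$ a monomorphism and colimit $D$, with legs $n\colon B\to D$ and $g\colon C\to D$, the Van Kampen condition is equivalent to the comparison functor $\mathcal{E}/D\longrightarrow(\mathcal{E}/B)\times_{\mathcal{E}/A}(\mathcal{E}/C)$, sending an object over $D$ to its two restrictions over $B$ and $C$ together with their canonical identification over $A$, being an equivalence of categories. The stability half already gives one direction; what remains is that every compatible pair of objects over $B$ and $C$ glues to an object over $D$ inducing it back. Here I would use the concrete shape of the pushout: $D$ is the quotient of $B+C$ identifying the two images of $A$, so that $B+C\twoheadrightarrow D$ is a regular (effective) epimorphism. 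Because coproducts in $\mathcal{E}$ are disjoint and universal and $m$ is mono, the legs $n,g$ are themselves mono, the square is also a pullback, and $A=B\times_D C$ is the genuine intersection $B\cap C$; consequently the kernel pair of $B+C\twoheadrightarrow D$ decomposes so that its descent data reduce exactly to the fibre product above.

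The main obstacle is precisely this gluing step, i.e. essential surjectivity of the comparison. It is here that one needs $\mathcal{E}$ to be a topos and not merely locally cartesian closed: the argument rests on the fact that effective epimorphisms in a topos are of \emph{effective descent}, equivalently that unions of subobjects are effective and stable under pullback. The latter can be tracked explicitly through the subobject classifier---in each slice the relevant subobjects form a Heyting algebra in which the union $B\cup C$ behaves as in $\mathbf{Set}$---which is what ultimately powers the reconstruction of an object over $D$ from its restrictions. With the comparison shown to be an equivalence, clause (c) holds and $\mathcal{E}$ is adhesive.
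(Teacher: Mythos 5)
The paper does not prove this statement at all: it is imported verbatim from Lack and Soboci\'nski \cite{lack2006toposes}, so there is no internal proof to compare against. Your sketch is correct in outline and follows essentially the same route as that cited original proof --- stability of pushouts under pullback via the adjunctions $\Sigma_d \dashv d^* \dashv \Pi_d$ in a locally cartesian closed category, and the gluing half via descent along the regular epimorphism $B+C \twoheadrightarrow D$, whose kernel pair decomposes (using disjoint, universal coproducts, the monicity of the legs, and the fact that the pushout square is also a pullback) so that descent data coincide with the pseudo-pullback $(\mathcal{E}/B)\times_{\mathcal{E}/A}(\mathcal{E}/C)$ --- with the standard topos facts you invoke (finite cocompleteness, pushouts of monos being pullbacks with monic opposite leg, and regular epis being of effective descent) carrying the weight they also carry in the published argument.
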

In particular the category $\catname{Set}$ of sets and any presheaf category are adhesive.
However, there are many important categories for (graph) rewriting which are not toposes, hence the need for more general criteria. 

We will need some properties of pushouts in the category of sets and functions.

\begin{lem}\label{lem:push} Take a pushout square
	\begin{center}
		\begin{tikzpicture}
		\node(A) at(0,0){$A$};
		\node(B) at (1.5,0){$B$};
		\node(C) at(1.5,-1.5){$D$};
		\node(D) at (0,-1.5){$C$};
		\draw[->](A)--(B)node[pos=0.5, above]{$f$};
		\draw[->](D)--(C)node[pos=0.5, below]{$g$};
		\draw[->](A)--(D)node[pos=0.5, left]{$m$};
		\draw[->](B)--(C)node[pos=0.5, right]{$n$};
		\end{tikzpicture}
	\end{center}
	in $\catname{Set}$, and suppose that $m$ is injective, then
\begin{enumerate}
	\item $n$ is injective too;
	\item the function $B\sqcup C\to D$ induced by $n$ and $g$ is surjective;
	\item for every $x$ and $y\in P$, $x=y$  if and only if one of the following is true:
	\begin{enumerate}
		\item there exists a, necessarily unique, $b\in B$ such that 
		\[x=n(b)=y\]
		\item there exists a unique $c\in C\smallsetminus i(A)$ such that
		\[x=g(c)=y\]	
	\end{enumerate}
\item for every $c_1, c_2\in C$, $g(c_1)=g(c_2)$ if and only if there exists $a_1, a_2$ such that 
\[m(a_1)=c_1 \qquad m(a_2)=c_2 \qquad f(c_1)=f(c_2)\]
\item for every $b\in B$ and $c\in C$, $n(b)=g(c)$ if and only if there exists $a\in A$ such that
\[m(a)=c\qquad f(a)=b\] 
\end{enumerate}

\end{lem}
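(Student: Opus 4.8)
The plan is to work with the standard explicit construction of pushouts in $\catname{Set}$: the object $D$ is the quotient $(B\sqcup C)/{\sim}$, where $\sim$ is the least equivalence relation on the disjoint union $B\sqcup C$ containing every pair $(f(a),m(a))$ with $a\in A$, and $n,g$ are the composites of the two coprojections with the quotient map. With this description item~(2) is immediate, since the map $B\sqcup C\to D$ induced by $n$ and $g$ is precisely the quotient map, which is surjective. Everything else reduces to computing $\sim$ explicitly, and this is exactly where the injectivity of $m$ will be used.

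First I would write down a candidate relation $E$ on $B\sqcup C$ by declaring: $b\,E\,b'$ iff $b=b'$; $b\,E\,c$ (and symmetrically $c\,E\,b$) iff there is $a\in A$ with $m(a)=c$ and $f(a)=b$; and $c\,E\,c'$ iff either $c=c'$ or there exist $a_1,a_2\in A$ with $m(a_1)=c$, $m(a_2)=c'$ and $f(a_1)=f(a_2)$. It is clear that $E$ is reflexive and symmetric, that each generating pair $(f(a),m(a))$ lies in $E$, and conversely that $E\subseteq{\sim}$, since every clause defining $E$ is derivable from the generators by symmetry and transitivity. Hence, once $E$ is shown to be transitive, it is an equivalence relation containing the generators, whence $\sim\,\subseteq E$ by minimality, and therefore $\sim\,=E$.

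Proving transitivity of $E$ is the main obstacle, and it is precisely the step that fails without the hypothesis on $m$. The only non-formal cases of a chain $x\,E\,y\,E\,z$ are those whose middle term $y$ lies in $C$: there one extracts elements $a_1,a_2\in A$ with $m(a_1)=y=m(a_2)$, and \emph{injectivity of $m$ forces $a_1=a_2$}, which is exactly what is needed to transport the equality of $f$-values across $y$ and conclude $x\,E\,z$. Geometrically, $m$ injective means each element of $C$ is joined to at most one element of $B$ by a generating pair, so the components of the generated graph are ``stars'' centred at points of $B$, together with isolated points of $C\smallsetminus m(A)$; transitivity is then the statement that these stars do not merge.

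With $\sim\,=E$ in hand the remaining items are read off directly. Item~(5) is the defining clause for $b\,E\,c$, and item~(4) is the defining clause for $c\,E\,c'$ (the degenerate case $c=c'$ being absorbed by taking $a_1=a_2$). For item~(1), $n(b)=n(b')$ means $b\,E\,b'$, which by the first clause forces $b=b'$, so $n$ is injective. Finally, for item~(3) the direction $x=y\Rightarrow x=y$ is trivial, and the substance is the normal form: each $\sim$-class either meets $B$, necessarily in a unique element $b$ (uniqueness by injectivity of $n$), giving the representation $x=n(b)$, or else is a singleton $\{c\}$ with $c\in C\smallsetminus m(A)$, giving $x=g(c)$ with $c$ unique; these two alternatives are mutually exclusive and together exhaust $D$, which is the content of~(3).
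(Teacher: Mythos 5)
Your proof is correct and takes essentially the same route as the paper, whose one-line proof simply appeals to the explicit description of pushouts in $\catname{Set}$ --- precisely the description you reconstruct and justify via the relation $E$ (stars centred at points of $B$ plus isolated points of $C\smallsetminus m(A)$). The only minor divergence is item (1): the paper obtains injectivity of $n$ by citing the adhesivity of $\catname{Set}$, whereas you read it off from the same explicit computation, which is slightly more self-contained.
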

\begin{proof}The first point follows at once from the adhesivity of $\catname{Set}$, while the others are implied by the explicit description of pushouts in it.
\end{proof}

\subsection{A new criterion for $\mathcal{M}, \mathcal{N}$-adhesivity}
In this section we present our main result, i.e., that $\mathcal{M}, \mathcal{N}$-adhesivity is guaranteed by the existence of a family of functors with sufficiently nice properties. We will adapt some definitions from \cite{adamek2004abstract}.%
\begin{defi} 
	Let $I:\catname{I}\rightarrow \catname{C}$ be a diagram and $J$ a set. We say that a family  $F=\{F_j\}_{j\in J}$ of functors $F_j:\catname{C}\rightarrow \catname{D}_j$
	\begin{enumerate}
		\item \emph{jointly preserves (co)limits} of $I$ if given a (co)limiting (co)cone $(L, l_i)_{i\in\catname{I}}$ for $I$,  every $(F_j(L), F_j(l_i))_{i \in \catname{I}}$ is (co)limiting for $F_j\circ I$;
		\item \emph{jointly reflects (co)limits} of $I$ if a (co)cone $(L, l_i)_{i\in\catname{I}}$ is (co)limiting for $I$ whenever $(F_j(L), F_j(l_i))_{i \in \catname{I}}$ is (co)limiting for $F_j\circ I$ for every $j\in J$;
		\item \emph{jointly lifts (co)limits} of $I$ if given a (co)limiting (co)cone $(L_j, l_{j,i})_{i\in\catname{I}}$ for every $F_j\circ I$, there exists a (co)limiting (co)cone $(L, l_i)_{i\in\catname{I}}$ for $I$ such that $(F_j(L), F_j(l_i))_{i \in \catname{I}}= (L_j, l_{j,i})_{i\in\catname{I}}$ for every $j\in J$;
		\item \emph{jointly creates (co)limits} of $I$ if $I$ has a (co)limit and $F$ jointly preserves and reflects (co)limits along it.
	\end{enumerate}
\end{defi}
\begin{rem}
	Jointly preservation, reflection, lifting or creation of (co)limits of a family $F=\{F_j\}_{j\in J}$ with $F_j:\catname{A}\rightarrow \catname{B}_j$ is equivalent to the usual preservation, reflection, lifting or creation of (co)limits for the functor $\catname{A}\rightarrow \prod_{j\in J}\catname{B}_j$ induced by $F$ (see \cite[Def. V.$1$]{mac2013categories} and \cite[Def. $13.17$]{adamek2004abstract}).
\end{rem} 
\begin{thm}\label{func}
	Let $\catname{A}$ be a category, $\mathcal{M}\subset \mono{A}$, $\mathcal{N}\subset \arr{A}$  satisfying conditions (i)--(iii) of \cref{def:class}, and
	$F$ a  non empty family of functors  $F_j:\catname{A}\rightarrow \catname{B}_j$  such that  $\catname{B}_j$ is $\mathcal{M}_j,\mathcal{N}_j$-adhesive. 
	\begin{enumerate}
		\item If every $F_j$ preserves pullbacks,  $F_j(\mathcal{M})\subset \mathcal{M}_j$ and $F_j(\mathcal{N})\subset \mathcal{N}_j$ for every $j\in J$, $F$ jointly preserves $\mathcal{M}, \mathcal{N}$-pushouts, and jointly reflects pushout squares
		\begin{center}
			\begin{tikzpicture}
			\node(A) at(0,0){$F_j(A)$};
			\node(B) at (2,0){$F_j(B)$};
			\node(C) at(2,-1.5){$F_j(D)$};
			\node(D) at (0,-1.5){$F_j(C)$};
			\draw[->](A)--(B)node[pos=0.5, above]{$F_j(f)$};
			\draw[->](D)--(C)node[pos=0.5, below]{$F_j(g)$};
			\draw[->](A)--(D)node[pos=0.5, left]{$F_j(m)$};
			\draw[->](B)--(C)node[pos=0.5, right]{$F_j(n)$};
			\end{tikzpicture}
		\end{center}
		with $m, n\in \mathcal{M}$ and $f\in \mathcal{N}$, then $\mathcal{M}, \mathcal{N}$-pushouts in $\catname{A}$ are stable. 
		
		Moreover if in addition $F$ jointly reflects $\mathcal{M}$-pullbacks and $\mathcal{N}$-pullbacks then $\mathcal{M}, \mathcal{N}$-pushouts are Van Kampen squares.
		\item If $F$ satisfies the assumptions of the previous points and jointly creates both $\mathcal{M}$-pullbacks and $\mathcal{N}$-pullbacks, then $\catname{A}$ is $\mathcal{M}, \mathcal{N}$-adhesive.
		\item If $F$ jointly creates all pushouts and all pullbacks, then $\catname{A}$ is $\mathcal{M}_F,\mathcal{N}_F$-adhesive, where
		\begin{align*}
			\mathcal{M}_F&:=\{m\in \arr{A}\mid F_j(m)\in \mathcal{M}_j \text{ for every } j\in J\}\\
			\mathcal{N}_F&:=\{n\in \arr{A}\mid F_j(n)\in \mathcal{N}_j \text{ for every } j\in J\}
		\end{align*}
	\end{enumerate}
\end{thm}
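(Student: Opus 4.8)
The plan is to treat the three points in order, each time transporting the relevant cube (or cospan) along the family $F$, settling the question downstairs using the $\mathcal{M}_j,\mathcal{N}_j$-adhesivity of each $\catname{B}_j$, and pulling the conclusion back up to $\catname{A}$ via the reflection hypotheses; the recurring bookkeeping of which cube edges lie in $\mathcal{M}$ or $\mathcal{N}$ is exactly what makes the shape-specific reflection hypotheses applicable. For the stability half of point (1), I would start from a cube whose bottom face is an $\mathcal{M},\mathcal{N}$-pushout, say the pushout of $C\xleftarrow{m}A\xrightarrow{f}B$ with $m\in\mathcal{M}$, $f\in\mathcal{N}$, and whose back and front faces are pullbacks, the goal being that the top face is a pushout. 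First I record the memberships of every edge: by stability of $\mathcal{M}$ and $\mathcal{N}$ under pushout (condition (iii) of \cref{def:class}) the remaining bottom legs satisfy $n\in\mathcal{M}$, $g\in\mathcal{N}$, and since $\mathcal{M},\mathcal{N}$ are stable under pullback and all four lateral faces are pullbacks, the top face inherits $m',n'\in\mathcal{M}$ and $f',g'\in\mathcal{N}$. Applying each $F_j$: it preserves the six lateral pullback faces, and it sends the bottom $\mathcal{M},\mathcal{N}$-pushout to an $\mathcal{M}_j,\mathcal{N}_j$-pushout (joint preservation together with $F_j(\mathcal{M})\subseteq\mathcal{M}_j$, $F_j(\mathcal{N})\subseteq\mathcal{N}_j$). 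As $\catname{B}_j$ is $\mathcal{M}_j,\mathcal{N}_j$-adhesive, its $\mathcal{M}_j,\mathcal{N}_j$-pushouts are Van Kampen, hence stable, so $F_j$ of the top face is a pushout for every $j$; since the top face has $m',n'\in\mathcal{M}$ and $f'\in\mathcal{N}$, joint reflection of pushout squares of precisely this shape yields that the top face is a pushout in $\catname{A}$.

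For the Van Kampen half of point (1) I would run the dual bookkeeping: given back faces pullbacks and the top face a pushout, the top square is itself an $\mathcal{M},\mathcal{N}$-pushout (its legs $m',f'$ lie in $\mathcal{M},\mathcal{N}$ by pullback-stability applied to the back faces), so joint preservation makes every $F_j$ of the top a pushout while $F_j$ of the back faces are pullbacks and $F_j$ of the bottom is an $\mathcal{M}_j,\mathcal{N}_j$-pushout. The Van Kampen property in $\catname{B}_j$ then forces $F_j$ of both front faces to be pullbacks. The closing step is to recognise the two front faces as an $\mathcal{M}$-pullback and an $\mathcal{N}$-pullback: the front-right face completes the cospan $B\xrightarrow{n}D\xleftarrow{d}D'$ with $n\in\mathcal{M}$, and the front-left completes $C\xrightarrow{g}D\xleftarrow{d}D'$ with $g\in\mathcal{N}$, so the extra hypotheses that $F$ jointly reflects $\mathcal{M}$-pullbacks and $\mathcal{N}$-pullbacks deliver that both front faces are pullbacks in $\catname{A}$.

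Point (2) is then an assembly against \cref{def:class}. Condition (c), that $\mathcal{M},\mathcal{N}$-pushouts are Van Kampen, is exactly the Van Kampen conclusion of point (1) (joint creation of $\mathcal{M}$- and $\mathcal{N}$-pullbacks in particular yields their joint reflection, which is the added ingredient that half requires). Condition (a), existence of $\mathcal{M}$-pullbacks, is the existence clause built into \emph{joint creation} of $\mathcal{M}$-pullbacks. Condition (b), existence of $\mathcal{M},\mathcal{N}$-pushouts, is supplied by the standing hypothesis that $F$ jointly preserves them, a family preserving only colimits that already exist; this is the single ingredient not manufactured by the lateral functorial conditions, and it is precisely the point that point (3) will strengthen. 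Hence $\catname{A}$ is $\mathcal{M},\mathcal{N}$-adhesive.

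For point (3) I would first check that $\mathcal{M}_F,\mathcal{N}_F$ meet conditions (i)--(iii) of \cref{def:class}, and then verify all hypotheses of point (2) for them, so that point (2) closes the argument. Membership $\mathcal{M}_F\subseteq\mono{A}$ follows by reflecting the pullback that expresses that a map is monic (its kernel pair is a diagonal), using that $F$ creates, hence reflects, pullbacks and that each $\mathcal{M}_j\subseteq\mono{B_j}$; closure under isomorphisms, composition and ($\mathcal{M}_F$-)decomposition, and stability under pullback and pushout, all transfer edge-by-edge from the corresponding properties of $\mathcal{M}_j,\mathcal{N}_j$, because $F$ preserves both pullbacks and pushouts and reflects the defining memberships. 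The functorial hypotheses of points (1)--(2) are immediate from joint creation of all pullbacks (giving preservation and reflection of pullbacks, and of $\mathcal{M}_F$-/$\mathcal{N}_F$-pullbacks) and of all pushouts (giving existence together with joint preservation and reflection of $\mathcal{M}_F,\mathcal{N}_F$-pushouts); in particular the existence of pushouts flagged in point (2) is now genuinely available, since \emph{all} pushouts exist in $\catname{A}$. Applying point (2) to $\mathcal{M}_F,\mathcal{N}_F$ concludes that $\catname{A}$ is $\mathcal{M}_F,\mathcal{N}_F$-adhesive. The main obstacle throughout is the disciplined tracking of class memberships around the cube needed to trigger the shape-specific reflection hypotheses, together with locating the source of pushout existence in point (2).
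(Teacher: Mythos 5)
Your treatment of points (1) and (3) tracks the paper's own proof almost exactly: the same membership bookkeeping around the cube (pushout-stability of $\mathcal{M}$ and $\mathcal{N}$ for the remaining bottom legs, pullback-stability along the lateral faces for the top legs, so that the top face has precisely the shape $m,n\in\mathcal{M}$, $f\in\mathcal{N}$ demanded by the reflection hypothesis), the same use of the Van Kampen property downstairs in each $\catname{B}_j$, and in the Van Kampen half the same closing move of recognising the two front faces as an $\mathcal{M}$-pullback and an $\mathcal{N}$-pullback over the cospans containing the pushed-out legs. In point (3) your reduction to point (2) after verifying (i)--(iii) of \cref{def:class} for $\mathcal{M}_F,\mathcal{N}_F$ is also the paper's route; your additional observation that $\mathcal{M}_F\subseteq\mono{A}$ follows by reflecting the square with two identities that witnesses monicity is a detail the paper leaves implicit, and it is correct.

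The one genuine gap is in point (2), condition (b). You claim that the existence of $\mathcal{M},\mathcal{N}$-pushouts in $\catname{A}$ ``is supplied by the standing hypothesis that $F$ jointly preserves them'' --- but, as your own parenthetical concedes, joint preservation in the sense of the paper's definition is conditional: it says that \emph{if} a colimiting cocone exists upstairs then its images are colimiting, and it is vacuously satisfied when no such pushout exists. Having the colimiting cocones $(Q_j, q_{F_j(B)}, q_{F_j(C)})$ in each $\catname{B}_j$ (which adhesivity of $\catname{B}_j$ does provide) produces no candidate cocone in $\catname{A}$ to which your reflection hypothesis could even be applied; what is needed at this step is a lifting or creation property for these pushouts. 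The paper discharges exactly this step by invoking joint \emph{creation} of $\mathcal{M},\mathcal{N}$-pushouts, which by its definition bundles in the existence of the colimit upstairs --- an appeal that admittedly goes beyond the literal hypotheses of point (2) as stated (an infelicity of the paper itself, harmless in its applications, where \cref{colim} delivers full joint creation). So you located the right pressure point --- existence is indeed the one ingredient the lateral conditions cannot manufacture --- but your discharge of it is a non sequitur: the fix is to take the pushout hypothesis at creation strength (or to add pushout existence as an explicit hypothesis), not to extract existence from preservation.
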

\begin{proof}(1.) Take a cube
	in which the bottom face is an $\mathcal{M}, \mathcal{N}$-pushout and all the vertical faces are pullbacks (below, left). Applying any $F_j\in F$ we get another cube
	in $\catname{B}_j$ (below, right) in which the bottom face is an $\mathcal{M}_j, \mathcal{N}_j$-pushout (because $F_j(m)\in \mathcal{M}_j$ and $F_j(n)\in \mathcal{N}_j$) and the vertical faces are pullbacks, thus the top face of the second cube is a pushout for every $j\in J$
		\begin{center}\label{uno}
		\begin{tikzpicture}[baseline=(current  bounding  box.center), scale=1.1] 
		\node(C)at(-1,0.5){$C$};
		\node(A)at(1,1){$A$};
		\node(B)at(2,0.5){$B$};
		\node(D)at(0,0){$D$};
		\node(A')at(1,2.5){$A'$};
		\node(B')at(2,2){$B'$};
		\node(C')at(-1,2){$C'$};
		\node(D')at(0,1.5){$D'$};
		\draw[<-](B')--(A')node[pos=0.5, above, xshift=0.1cm, yshift=-0.05cm]{$n'$};
		\draw[->](B')--(B)node[pos=0.5, right]{$b$};
		\draw[->](C')--(C)node[pos=0.5, left]{$c$}; 
		\draw[->](D')--(D)node[pos=0.3, left]{$d$};
		\draw[->](A')--(C')node[pos=0.5, above ]{$m'$}; 
		\draw[->](B')--(D')node[pos=0.8, above ]{$f'$};
		\draw[->](C')--(D')node[pos=0.5, below, xshift=-0.1cm, yshift=0.05cm]{$g'$};
		\draw[->](A)--(B)node[pos=0.5, above, xshift=0.1cm, yshift=-0.05cm]{$n$};
		\draw[->](C)--(D)node[pos=0.5, below, xshift=-0.1cm, yshift=0.05cm]{$g$};
		\draw[->](B)--(D)node[pos=0.5, below]{$f$};
		\draw[-](A')--(1,1.8);\draw[<-](A)--(1,1.7)node[pos=0.7, right]{$a$};
		\draw[-](A)--(0.05, 0.7625)node[pos=0.2, below]{$m$};
		\draw[<-](C)--(-0.05,0.7375);
		\end{tikzpicture}
\hspace{1cm}
		\begin{tikzpicture}[scale=1.7, baseline=(current  bounding  box.center)] 
		\node(C)at(-1,0.5){$F_j(C)$};
		\node(A)at(1,1){$F_j(A)$};
		\node(B)at(2,0.5){$F_j(B)$};
		\node(D)at(0,0){$F_j(D)$};
		\node(A')at(1,2.5){$F_j(A')$};
		\node(B')at(2,2){$F_j(B')$};
		\node(C')at(-1,2){$F_j(C')$};
		\node(D')at(0,1.5){$F_j(D')$};
		\draw[<-](B')--(A')node[pos=0.5, above, xshift=0.3cm, yshift=-0.07cm]{$F_j(n')$};
		\draw[->](B')--(B)node[pos=0.5, right]{$F_j(b)$};
		\draw[->](C')--(C)node[pos=0.5, left]{$F_j(c)$}; 
		\draw[->](D')--(D)node[pos=0.3, left]{$F_j(d)$};
		\draw[->](A')--(C')node[pos=0.5, above ,yshift=0.05cm]{$F_j(m')$}; 
		\draw[->](B')--(D')node[pos=0.8, above,yshift=0.05cm]{$F_j(f')$};
		\draw[->](C')--(D')node[pos=0.5, below, xshift=-0.25cm, yshift=-0.1cm]{$F_j(g')$};
		\draw[->](A)--(B)node[pos=0.5, above, xshift=0.2cm, yshift=-0.0cm]{$F_j(n)$};
		\draw[->](C)--(D)node[pos=0.5, below, xshift=-0.3cm, yshift=-0.0cm]{$F_j(g)$};
		\draw[->](B)--(D)node[pos=0.5, below, yshift=-0.1cm]{$F_j(f)$};
		\draw[-](A')--(1,1.8);\draw[<-](A)--(1,1.7)node[pos=0.7, right]{$F_j(a)$};
		\draw[-](A)--(0.05, 0.7625)node[pos=0.2, below, yshift=-0.05cm]{$F_j(m)$};
		\draw[<-](C)--(-0.05,0.7375);
		\end{tikzpicture}
	\end{center}
Now $m', f'\in \mathcal{M}$ and $n'\in \mathcal{N}$ since they are the pullbacks of $m$, $f$ and $n$ and thus we can conclude.
	
		Suppose now that $F$ jointly reflects $\mathcal{M}$-pullbacks and $\mathcal{N}$-pullbacks, we have to show that the front faces of the first cube above are pullbacks if the top one is a pushout. In the second cube, the bottom and top face are $\mathcal{M}_j, \mathcal{N}_j$-pushouts and the back faces are pullbacks, then the front faces are pullbacks too by $\mathcal{M}_j, \mathcal{N}_j$-adhesivity. Now, notice that $f\in \mathcal{M}$ and $g\in \mathcal{N}$ (since $\mathcal{M}$ and $\mathcal{N}$ are closed under pushouts) and thus we can conclude since $F$ jointly reflects pullbacks along arrows in $\mathcal{M}$ or in $\mathcal{N}$.

	\noindent
		(2.) Let us show properties (a), (b), (c) defining $\mathcal{M}, \mathcal{N}$-adhesivity.
	\begin{enumerate}[label=(\alph*)]
		\item Given a cospan $C\xrightarrow{g}D\xleftarrow{m}B$ in $\catname{A}$ with $m\in \mathcal{M}$ we can apply $F_j\in F$ to it and get $F_j(C)\xrightarrow{F_j(g)}F_j(D)\xleftarrow{F_j(m)}F_j(B)$ which is a cospan in $\catname{B}_j$  with $F_j(g)\in \mathcal{M}_j$, thus, by hypothesis it has a limiting cone $(P_j, p_{F_j(B)}, p_{F_j(C)})$ in $\catname{B}_j$.	Since $F$ jointly lifts $\mathcal{M}$-pullbacks there exists a limiting cone $(P, p_B, p_C)$ for the cospan  $C\xrightarrow{g}D\xleftarrow{m}B$.
		\item Analogously: for every span $C\xleftarrow{m}A\xrightarrow{n}B$ in $\catname{A}$ with $m\in \mathcal{M}$ and $n\in \mathcal{N}$, we have  $F_j(C)\xleftarrow{F_j(m)}F_j(A)\xrightarrow{F_j(n)}F_j(B)$ in each $\catname{B}_j$ with $F_j(m)\in \mathcal{M}_j$ and $F_j(n)\in \mathcal{N}_j$ and thus there exists a colimiting cocone $(Q_j, q_{F_j(B)}, q_{F_j(C)})$ in $\catname{B}_j$. Now we can conclude because $F$ jointly creates $\mathcal{M}, \mathcal{N}$-pushouts.
		\item This follows at once by the second half of the previous point. 
\end{enumerate}	

\noindent
(3.) By the previous point it is enough to show that $\mathcal{M}_F$ and $\mathcal{N}_F$ satisfy conditions (i)--(iii) of \cref{def:class}.
\begin{enumerate}[label=(\roman*)]
	\item If $f\in \arr{A}$ is an isomorphism then so is $F_j(f)$ for every $F_j\in F$. Thus $F_j(f)$ belongs to $\mathcal{M}_j$ and $\mathcal{N}_j$ for every $j\in J$, implying $f$ is in $\mathcal{M}_F$ and in $\mathcal{N}_F$.  The parts regarding composition and decomposition follow immediately by functoriality of each $F_j\in F$. 
	\item Suppose that $g\circ f\in \mathcal{N}_F$, with $g\in \mathcal{M}_F$ then for every $j\in F$ $F_j(g\circ f)=F_j(g)\circ F_j(f)\in \mathcal{N}_j$ and $F_j(g)\in\mathcal{M}_j$, thus $F_j(f)\in \mathcal{N}_j$ and so $f\in \mathcal{N}_F$.
	\item Take a square 
	\begin{center}
		\begin{tikzpicture}[scale=1.5]
		\node(A) at(0,0){$A$};
		\node(B) at (1,0){$B$};
		\node(C) at(1,-1){$D$};
		\node(D) at (0,-1){$C$};
		\draw[->](A)--(B)node[pos=0.5, above]{$f$};
		\draw[->](D)--(C)node[pos=0.5, below]{$g$};
		\draw[->](A)--(D)node[pos=0.5, left]{$m$};
		\draw[->](B)--(C)node[pos=0.5, right]{$n$};
		\end{tikzpicture}
	\end{center}
and suppose that it is a pullback with $n\in \mathcal{M}_F$ ($\mathcal{N}_F$), then applying any $F_j\in F$ we get that $F_j(m)$ is the pullback of $F_j(n)$ along $F_j(g)$, since $F_j(n)$ is in $\mathcal{M}_j$ (in $\mathcal{N}_j$), which implies that $F_j(m)\in \mathcal{M}_j$ ($\mathcal{N}_j$). This is true for every $j\in J$, from which the thesis follows. Stability under pushouts is proved applying the same argument to $m$. \qedhere
\end{enumerate}
\end{proof}

Applying the previous theorem to the families given by, respectively, projections, evaluations and the inclusion we get immediately the following three corollaries (cfr. also \cite[Thm.~4.15]{ehrig2006fundamentals}). 
\begin{cor}\label{cor:varie1}
	Let $\{\catname{A}\}_{i\in I}$ be a family of categories such that each $\catname{A}_i$ is $\mathcal{M}_i,\mathcal{N}_i$-adhesive. Then the product category $\prod_{i\in I}\catname{A}_i$ is $\prod_{i\in I}\mathcal{M}_i,\prod_{i\in I}\mathcal{N}_i$-adhesive, where
		\begin{align*}
		\prod_{i\in I}\mathcal{M}_i&:=\{(m_i)_{i\in I}\in \mathsf{Mor}(\prod_{i\in I}\catname{A}_i)\mid m_i \in \mathcal{M}_i \text{ for every } i \in I\}\\
		\prod_{i\in I}\mathcal{N}_i&:=\{(n_i)_{i\in I}\in \mathsf{Mor}(\prod_{i\in I}\catname{A}_i)\mid n_i \in \mathcal{N}_i \text{ for every } i \in I\}
		\end{align*}
\end{cor}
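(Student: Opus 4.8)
The plan is to apply part (3) of \cref{func} to the family $F=\{\pi_i\}_{i\in I}$ consisting of the projection functors $\pi_i\colon\prod_{i\in I}\catname{A}_i\to\catname{A}_i$. By hypothesis each codomain $\catname{A}_i$ is $\mathcal{M}_i,\mathcal{N}_i$-adhesive, so the only input I need to supply is that this family jointly creates all pushouts and all pullbacks; the verification of conditions (i)--(iii) for the resulting classes, and hence the whole adhesivity statement, is then delivered internally by that part of the theorem.

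First I would recall that the functor $\prod_{i\in I}\catname{A}_i\to\prod_{i\in I}\catname{A}_i$ induced by the family $\{\pi_i\}_{i\in I}$ (in the sense of the Remark preceding the theorem) is precisely the identity functor, by the universal property of the product category. Since the identity trivially creates every (co)limit, the equivalence in that Remark immediately gives that $F$ jointly creates all pullbacks and all pushouts. Concretely this is nothing but the familiar fact that (co)limits in a product category are computed componentwise: a (co)cone is (co)limiting exactly when each of its projections is.

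Then I would invoke \cref{func}(3), which yields that $\prod_{i\in I}\catname{A}_i$ is $\mathcal{M}_F,\mathcal{N}_F$-adhesive for the classes
\[
\mathcal{M}_F=\{m\mid\pi_i(m)\in\mathcal{M}_i\text{ for all }i\in I\},\qquad
\mathcal{N}_F=\{n\mid\pi_i(n)\in\mathcal{N}_i\text{ for all }i\in I\}.
\]
Finally I would observe that, since $\pi_i\big((m_j)_{j\in I}\big)=m_i$, these two classes coincide on the nose with $\prod_{i\in I}\mathcal{M}_i$ and $\prod_{i\in I}\mathcal{N}_i$ as defined in the statement, which concludes the proof.

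I do not expect a genuine obstacle here: the entire content is the componentwise computation of (co)limits in a product, which is routine. The only point deserving a line of care is checking that the induced functor really is the \emph{identity} (and not merely an equivalence), so that joint \emph{creation}---and not just preservation and reflection---is available as required by \cref{func}(3); but this is immediate from the definition of the product category.
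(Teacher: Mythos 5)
Your route is exactly the paper's: \cref{cor:varie1} is obtained there by applying \cref{func} to the family of projection functors, using that the induced functor on $\prod_{i\in I}\catname{A}_i$ is the identity (equivalently, that a (co)cone in a product category is (co)limiting if and only if all of its components are), and that the classes $\mathcal{M}_F$, $\mathcal{N}_F$ are then literally $\prod_{i\in I}\mathcal{M}_i$ and $\prod_{i\in I}\mathcal{N}_i$. However, the step you single out as the ``only point deserving a line of care'' is resolved incorrectly, and this leaves a genuine gap relative to the paper's own definitions. In this paper, ``jointly creates (co)limits of $I$'' \emph{includes the requirement that $I$ has a (co)limit}: creation is existence plus joint preservation and reflection, not a conditional lifting property. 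Hence ``the identity trivially creates every (co)limit'' is false unless the category actually has every (co)limit, and the hypothesis of \cref{func}(3) --- joint creation of \emph{all} pushouts and \emph{all} pullbacks --- asserts in particular that $\prod_{i\in I}\catname{A}_i$ has arbitrary pushouts and pullbacks. Nothing in the hypotheses guarantees this: an $\mathcal{M}_i,\mathcal{N}_i$-adhesive category is only required to have $\mathcal{M}_i$-pullbacks and $\mathcal{M}_i,\mathcal{N}_i$-pushouts (for instance, any category, including one lacking some pullbacks, is adhesive with respect to the classes of isomorphisms). So identity-versus-equivalence is not the delicate point; the existence clause in the definition of creation is.

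This gap is inherited from the paper, whose one-line derivation is no more detailed than yours, and it is repairable rather than fatal. Tracing the proof of \cref{func}(3) (through parts (2) and (1)), existence of (co)limits in the domain category is used only for $\mathcal{M}_F$-pullbacks and $\mathcal{M}_F,\mathcal{N}_F$-pushouts; for these, each component diagram has a (co)limit in $\catname{A}_i$ by conditions (a) and (b) of \cref{def:class}, and the componentwise (co)limit cone is a (co)limit cone in the product --- which is precisely joint lifting. Every other appeal to the functors is a preservation or joint-reflection statement (for pullbacks, for $\mathcal{N}_F$-pullbacks, and for the relevant pushout squares), and these hold for the projections with no existence assumption, again by the componentwise characterisation of (co)limits in $\prod_{i\in I}\catname{A}_i$. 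With the creation hypotheses of \cref{func} read in this weaker lifting sense, or after supplying the componentwise existence argument explicitly, your identification of the classes finishes the proof exactly as you wrote it.
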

\begin{cor}\label{cor:varie2}
 Let $\catname{A}$ be an $\mathcal{M},\mathcal{N}$-adhesive category. Then for every other category $\catname{C}$, the category of functors $\catname{A}^{\catname{C}}$ is $\mathcal{M}^{\catname{C}},\mathcal{N}^{\catname{C}}$-adhesive, where
		\begin{align*}
		\mathcal{M}^{\catname{C}}&:=\{\eta \in \arr{A^C} \mid \eta_C \in \mathcal{M} \text{ for every object } C \text{ of } \catname{C} \}\\
		\mathcal{N}^{\catname{C}}&:=\{\eta \in \arr{A^C} \mid \eta_C \in \mathcal{N} \text{ for every object } C \text{ of } \catname{C} \}
		\end{align*} 
\end{cor}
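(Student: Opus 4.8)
The plan is to apply part (3) of \cref{func} to the family of evaluation functors. For each object $C$ of $\catname{C}$ let $\mathrm{ev}_C\colon\catname{A}^{\catname{C}}\to\catname{A}$ be the functor sending a functor $G$ to $G(C)$ and a natural transformation $\eta$ to its component $\eta_C$, and set $F=\{\mathrm{ev}_C\}_{C}$ indexed by the objects of $\catname{C}$. (If $\catname{C}$ is empty then $\catname{A}^{\catname{C}}$ is the terminal category, which is trivially adhesive, so we may assume $\catname{C}$ has at least one object and $F$ is non-empty.) Every $\mathrm{ev}_C$ lands in the same $\mathcal{M},\mathcal{N}$-adhesive category $\catname{A}$, with $\mathcal{M}_C=\mathcal{M}$ and $\mathcal{N}_C=\mathcal{N}$; hence the classes $\mathcal{M}_F$ and $\mathcal{N}_F$ manufactured by the theorem unwind, using $\mathrm{ev}_C(\eta)=\eta_C$, to exactly $\mathcal{M}^{\catname{C}}$ and $\mathcal{N}^{\catname{C}}$. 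So it suffices to verify the single hypothesis of part (3): that $F$ jointly creates all pullbacks and all pushouts.

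This is the classical statement that (co)limits in a functor category are computed objectwise, and I would establish it directly. Given a diagram $I\colon\catname{I}\to\catname{A}^{\catname{C}}$ such that each composite $\mathrm{ev}_C\circ I$ has a (co)limit in $\catname{A}$, the recipe is: (i) pick a (co)limit object $L(C)$ of $\mathrm{ev}_C\circ I$ for each $C$; (ii) turn $C\mapsto L(C)$ into a functor $L\in\catname{A}^{\catname{C}}$, defining its action on an arrow of $\catname{C}$ via the universal property of the (co)limit at the target (resp.\ source); (iii) bundle the (co)limit legs at the various $C$ into natural transformations $L\Rightarrow I(i)$ (resp.\ $I(i)\Rightarrow L$), and check componentwise that the resulting (co)cone is universal in $\catname{A}^{\catname{C}}$. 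This simultaneously shows that $F$ jointly lifts, preserves and reflects these (co)limits, i.e.\ jointly creates them; specialising to cospans and spans yields joint creation of the pullbacks and pushouts present in $\catname{A}$, which is all that part (3) uses.

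The only genuine work --- and the step I expect to be the main obstacle --- is points (ii)--(iii): checking that the pointwise choices assemble into an actual functor $L$ and that the comparison cones are natural, so that the naturality squares commute. Both facts are forced by the universal properties of the pointwise (co)limits, but it is here that the verification must be carried out with care. Everything else is formal: the identification $\mathcal{M}_F=\mathcal{M}^{\catname{C}}$, $\mathcal{N}_F=\mathcal{N}^{\catname{C}}$ is definitional, and once joint creation of pullbacks and pushouts is in place, part (3) of \cref{func} immediately gives the $\mathcal{M}^{\catname{C}},\mathcal{N}^{\catname{C}}$-adhesivity of $\catname{A}^{\catname{C}}$.
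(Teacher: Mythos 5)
Your proposal is correct and is essentially the paper's own proof: the paper obtains this corollary exactly by applying part (3) of the adhesivity-criterion theorem to the family of evaluation functors $\{\mathrm{ev}_C\}_C$, with the required joint creation of pullbacks and pushouts supplied by the classical fact that (co)limits in a functor category are computed pointwise, and with $\mathcal{M}_F$, $\mathcal{N}_F$ unwinding definitionally to $\mathcal{M}^{\catname{C}}$, $\mathcal{N}^{\catname{C}}$. The only addition is your treatment of the empty-$\catname{C}$ case (needed since the theorem asks for a non-empty family), which the paper leaves implicit.
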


\begin{cor}\label{cor:varie3}
	Let $\catname{A}$ be a full subcategory of an $\mathcal{M}, \mathcal{N}$-adhesive category $\catname{B}$ and
	$\mathcal{M}'\subset \mono{A}$, $\mathcal{N}'\subset \arr{A}$ satisfying the first three conditions of \cref{def:class} such that $\mathcal{M}'\subset \mathcal{M}$, $\mathcal{N}'\subset \mathcal{N}$ and $\catname{A}$ is closed in $\catname{B}$ under pullbacks and $\mathcal{M'}, \mathcal{N'}$-pushouts. Then $\catname{A}$ is $\mathcal{M}', \mathcal{N'}$-adhesive.
\end{cor}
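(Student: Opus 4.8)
The plan is to derive this as an instance of \cref{func}(2), applied to the singleton family $F=\{U\}$, where $U\colon\catname{A}\hookrightarrow\catname{B}$ is the inclusion functor and we take $\mathcal{M}_U=\mathcal{M}$, $\mathcal{N}_U=\mathcal{N}$. The family is nonempty and its target $\catname{B}$ is $\mathcal{M},\mathcal{N}$-adhesive by hypothesis, while $\mathcal{M}',\mathcal{N}'$ already satisfy conditions (i)--(iii) of \cref{def:class}; so the task reduces to checking the functorial hypotheses of that theorem for $U$. Two of them are immediate: since $U$ is the identity on arrows, $U(\mathcal{M}')\subseteq\mathcal{M}$ and $U(\mathcal{N}')\subseteq\mathcal{N}$ are exactly the containments $\mathcal{M}'\subseteq\mathcal{M}$ and $\mathcal{N}'\subseteq\mathcal{N}$ that we are given.

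The preservation requirements follow directly from the two closure assumptions. Because $\catname{A}$ is closed in $\catname{B}$ under pullbacks, a pullback computed in $\catname{A}$ is still a pullback in $\catname{B}$, so $U$ preserves pullbacks; likewise, closure of $\catname{A}$ under $\mathcal{M}',\mathcal{N}'$-pushouts gives that $U$ jointly preserves $\mathcal{M}',\mathcal{N}'$-pushouts. (Closure under $\mathcal{M}',\mathcal{N}'$-pushouts also supplies existence for condition (b) of \cref{def:class}, and closure under pullbacks together with $\mathcal{M}'\subseteq\mathcal{M}$ supplies condition (a), since $\catname{B}$ has all $\mathcal{M}$-pullbacks.) The substantive point is the reflection of pushout squares of the prescribed shape, with $m,n\in\mathcal{M}'$ and $f\in\mathcal{N}'$: suppose such a square in $\catname{A}$ is sent by $U$ to a pushout in $\catname{B}$. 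Its span $C\xleftarrow{m}A\xrightarrow{f}B$ has $m\in\mathcal{M}'$ and $f\in\mathcal{N}'$, hence admits an $\mathcal{M}',\mathcal{N}'$-pushout in $\catname{A}$ by closure, and $U$ preserves it; thus both this pushout and the square in question become pushouts of the same span in $\catname{B}$. Uniqueness of pushouts yields a compatible isomorphism between their apices in $\catname{B}$, and because $\catname{A}$ is a \emph{full} subcategory this isomorphism already lies in $\catname{A}$, so transporting the universal property back identifies the original square as a pushout in $\catname{A}$.

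It remains to handle the pullbacks entering the Van Kampen condition of \cref{func}(2). Here closure under pullbacks makes $U$ create every pullback that exists in $\catname{B}$: the limit is carried into $\catname{A}$ by closure, preservation is automatic, and reflection follows from fullness and faithfulness of $U$ exactly as in the pushout argument. In particular $U$ creates the $\mathcal{M}'$-pullbacks (which exist since $\mathcal{M}'\subseteq\mathcal{M}$) and reflects the $\mathcal{M}'$- and $\mathcal{N}'$-pullbacks needed to conclude that $\mathcal{M}',\mathcal{N}'$-pushouts are Van Kampen. All hypotheses of \cref{func}(2) being in place, $\catname{A}$ is $\mathcal{M}',\mathcal{N}'$-adhesive. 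I expect the reflection-of-pushouts step to be the main obstacle: unlike the pullback side, where faithfulness alone transfers a universal property back along a full inclusion, reflecting a pushout genuinely requires first constructing the $\mathcal{M}',\mathcal{N}'$-pushout inside $\catname{A}$, pushing it into $\catname{B}$, and invoking uniqueness of colimits—with fullness being precisely what guarantees that the resulting comparison isomorphism descends from $\catname{B}$ to $\catname{A}$.
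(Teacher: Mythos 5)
Your proposal is correct and takes essentially the same route as the paper: the paper derives this corollary precisely by applying \cref{func} to the singleton family consisting of the inclusion functor $\catname{A}\hookrightarrow\catname{B}$, checking the preservation hypotheses via the closure assumptions exactly as you do. One small simplification worth noting: since the inclusion is fully faithful it reflects \emph{all} limits and colimits outright, so your detour through constructing the pushout by closure and invoking uniqueness of pushouts (and your closing claim that this detour is genuinely required) is unnecessary --- fullness and faithfulness alone transfer the universal property back, for pushouts just as for pullbacks.
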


\subsection{Comma categories}\label{sec:comma}
In this section we will show how to apply \cref{func} to the comma construction \cite{mac2013categories}  in order to guarantee some adhesivity properties under suitable hypotheses.%
\begin{defi}
	For any two functors $L:\catname{A}\rightarrow \catname{C}$, $R:\catname{B}\rightarrow \catname{C}$, the \emph{comma category} $\comma{L}{R}$ is the category in which
	\begin{itemize}
		\item objects are triples $(A, B, f)$ with $A\in \catname{A}$, $B\in \catname{B}$, and $f:L(A)\rightarrow R(B)$; 
		\item a morphism $(A, B, f)\rightarrow (A', B', g)$ is a pair $(h, k)$ with $h:A\rightarrow A'$, $k:B\rightarrow B'$ such that the following diagram commutes
		\begin{center}
			\begin{tikzpicture}
			\node(A) at(0,0){$L(A)$};
			\node(B) at (2,0){$L(A')$};
			\node(C) at(2,-1.5){$R(C')$};
			\node(D) at (0,-1.5){$R(C)$};
			\draw[->](A)--(B)node[pos=0.5, above]{$L(h)$};
			\draw[->](D)--(C)node[pos=0.5, below]{$R(k)$};
			\draw[->](A)--(D)node[pos=0.5, left]{$f$};
			\draw[->](B)--(C)node[pos=0.5, right]{$g$};
			\end{tikzpicture}
		\end{center}
	\end{itemize}
\end{defi} 
We have two obvious forgetful functors 
\begin{equation*}
\begin{split}
U_L:\comma{L}{R}& \rightarrow \catname{A}\\
\functor[l]{(A,B, f)}{(h,k)}{(A', B', g)}
& \functormapsto
\rfunctor{A}{h}{A'}
\end{split}\quad 
\begin{split}
U_R:\comma{L}{R}& \rightarrow \catname{B}\\
\functor[l]{(A,B, f)}{(h,k)}{(A', B', g)}
& \functormapsto
\rfunctor{B}{k}{B'}
\end{split}
\end{equation*}

\begin{exa}$\catname{Graph}$ is equivalent to the comma category made from the identity functor on $\catname{Set}$ and the product functor sending $X$ to $X\times X$.
\end{exa}

We have a classic result relating limits and colimits in the comma category with those preserved by $L$ or $R$.

\begin{lem}\label{colim}
	Let $I:\catname{I}\rightarrow \comma{L}{R}$ be a diagram such that $L$ preserves the colimit (if it exists) of $U_L\circ I$. Then the family $\{U_L, U_R\}$ jointly creates colimits of $I$.
\end{lem}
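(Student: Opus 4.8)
The plan is to build the colimit of $I$ by hand from colimits of the two underlying diagrams and then read off preservation and reflection. Write $I(i)=(A_i,B_i,f_i)$ with $f_i\colon L(A_i)\to R(B_i)$ and, for each arrow $\alpha\colon i\to i'$ of $\catname{I}$, $I(\alpha)=(h_\alpha,k_\alpha)$, so that $U_L\circ I$ is the diagram $i\mapsto A_i$, $U_R\circ I$ the diagram $i\mapsto B_i$, and the morphism condition reads $f_{i'}\circ L(h_\alpha)=R(k_\alpha)\circ f_i$. I then fix colimiting cocones $(A,(a_i)_i)$ for $U_L\circ I$ and $(B,(b_i)_i)$ for $U_R\circ I$ (the existence of the second is necessary for $I$ to have a colimit at all, since $U_R$ must preserve it); by hypothesis $L$ preserves the first, so $(L(A),(L(a_i))_i)$ is a colimit of $i\mapsto L(A_i)$.

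First I would produce the missing structure map. The family $R(b_i)\circ f_i\colon L(A_i)\to R(B)$ is a cocone over $i\mapsto L(A_i)$: for $\alpha\colon i\to i'$ one computes $R(b_{i'})\circ f_{i'}\circ L(h_\alpha)=R(b_{i'})\circ R(k_\alpha)\circ f_i=R(b_{i'}\circ k_\alpha)\circ f_i=R(b_i)\circ f_i$, using the morphism condition and the fact that $(b_i)_i$ is a cocone. The universal property of $L(A)$ then yields a unique $f\colon L(A)\to R(B)$ with $f\circ L(a_i)=R(b_i)\circ f_i$; this equation is exactly the statement that each $(a_i,b_i)$ is a morphism $(A_i,B_i,f_i)\to(A,B,f)$ of $\comma{L}{R}$, and the cocone identities for $(a_i)_i$ and $(b_i)_i$ then make $(A,B,f)$ together with $((a_i,b_i))_i$ a cocone over $I$.

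Next I would check its universal property, which simultaneously gives existence of the colimit and its preservation by $U_L,U_R$. Given any competing cocone $(A',B',f')$, $((a_i',b_i'))_i$ over $I$, the universal properties of $A$ and $B$ supply unique $h\colon A\to A'$ and $k\colon B\to B'$ with $h\circ a_i=a_i'$ and $k\circ b_i=b_i'$; these are forced, so the only thing left is that $(h,k)$ is a morphism of $\comma{L}{R}$, i.e. $f'\circ L(h)=R(k)\circ f$. This is the main obstacle, and it is precisely where the preservation hypothesis on $L$ is indispensable: both sides are arrows out of $L(A)$, and since $(L(a_i))_i$ is a colimiting (hence jointly epimorphic) cocone it suffices to precompose with each $L(a_i)$, whereupon $f'\circ L(h)\circ L(a_i)=f'\circ L(a_i')=R(b_i')\circ f_i$ and $R(k)\circ f\circ L(a_i)=R(k)\circ R(b_i)\circ f_i=R(b_i')\circ f_i$ agree. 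Thus $(A,B,f)$ is a colimit of $I$, and by construction $U_L$ and $U_R$ carry it to the chosen colimits of $U_L\circ I$ and $U_R\circ I$, so $\{U_L,U_R\}$ jointly preserves it.

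Finally, for reflection I would take any cocone over $I$ whose images under $U_L$ and $U_R$ are colimiting and compare it with the colimit just built: the canonical mediating morphism out of $(A,B,f)$ has both components invertible, being comparisons between two colimiting cocones of $U_L\circ I$ and of $U_R\circ I$ respectively, and a pair of isomorphisms is an isomorphism in $\comma{L}{R}$; hence the given cocone is colimiting too. Together these three steps show that $\{U_L,U_R\}$ jointly creates colimits of $I$.
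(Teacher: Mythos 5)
Your proof is correct and follows essentially the same route as the paper's: construct the structure map $f\colon L(A)\to R(B)$ from the cocone $R(b_i)\circ f_i$ using that $L$ preserves the colimit of $U_L\circ I$, then verify the universal property of $((A,B,f),(a_i,b_i))$ by precomposing with the jointly epimorphic family $\{L(a_i)\}_{i\in\catname{I}}$, with preservation holding by construction. The only cosmetic difference is in the reflection step, where you pass through a comparison isomorphism between the given cocone and the constructed colimit, whereas the paper simply re-runs the universal-property argument directly on the given cocone; both are equally valid.
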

\begin{proof}
		Suppose that $U_L\circ I$ and $U_R\circ I$ have colimits  $(A, a_i)_{i\in \catname{I}}$ and $(B, b_i)_{i\in \catname{I}}$ respectively, by hypothesis the colimit of $L\circ U_L\circ I$ is $(L(A), L(a_i))_{i\in \catname{I}}$. Now if $I(i)=(A_i, B_i, f_i)$, we have arrows $R(a_i)\circ f_i:L(A_i)\rightarrow R(B)$ that forms a cocone on $L\circ U_L\circ I$: if $l:i\rightarrow j$ is an arrow in $\catname{I}$ then $I(l)=(U_L(I(l), U_R(I(l))$ is an arrow in $\comma{L}{R}$, so
		\begin{align*}
		R(b_j)\circ f_j\circ L(U_L(I(l)))=R(b_j)\circ R(U_R(I(l)))\circ f_i=R(b_j\circ U_R(I(l)))\circ f_i=R(b_i)\circ f_i
		\end{align*}
		thus there exists $f:L(A)\rightarrow R(B)$ such that $f\circ L(a_i)=R(b_i)\circ f_i$. We claim that $(A, B, f)$ with $(a_i, b_i)$ as $i^\text{th}$ coprojection is the colimit of $I$. Let $((X, Y, g), (x_i, y_i))_{i\in \catname{I}}$ be a cocone on $I$, in particular $(X, x_i)_{i\in \catname{I}}$ and $(Y, y_i)_{i\in \catname{I}}$ are cocones on $U_L\circ I$ and $U_R\circ I$ respectively so we have uniquely determined arrows $x:A\rightarrow X$ and $y:B\rightarrow Y$ such that 
		$x\circ a_i=x_i$ and$y\circ b_i=y_i$. We claim that $(x,y)$ is an arrow of $\comma{L}{R}$. For any $i\in \catname{I}$ we have
		\begin{align*}
		&R(y)\circ f\circ L(a_i)=R(y)\circ R(b_i)\circ f_i=R(y\circ b_i)\circ f_i\\=&R(y_i)\circ f_i= g\circ L(x_i)=g\circ L(x\circ a_i)=g\circ L(x)\circ L(a_i)
		\end{align*}
		And since the family $\{L(a_i)\}_{i\in \catname{I}}$ is jointly monic we get that $R(y)\circ f=g\circ L(x)$.		Uniqueness of $(x,y)$ follows at once and so $((A, B, f), (a_i, b_i))_{i \in \catname{I}}$ is a colimit for $I$ which, by construction, is preserved by $U_L$ and $U_R$. Reflection follows by the previous construction: if $((A, B, f), (a_i, b_i))_{i\in \catname{I}}$ is a cone in $\comma{L}{R}$ such that both $(A, a_i)_{i\in \catname{I}}$ and $(B, b_i)_{i\in \catname{I}}$ are colimiting, then the argument above shows that $((A, B, f), (a_i, b_i))_{i\in \catname{I}}$ is colimiting too.
\end{proof} 

Let $L:\catname{A}\to \catname{C}$ and $R:\catname{B}\to \catname{C}$ be two functors with duals $L^{op}:\catname{A}^{op}\to \catname{C}^{op}$ and $R^{op}:\catname{B}^{op}\to \catname{C}^{op}$.  An object $(A, B, f)$ of $\comma{L}{R}$ is just an arrow $f:L(A)\to R(B)$ in $\catname{C}$, must this can be regarded as an arrow $f:R^{op}(B)\to L^{op}(A)$, i.e as an object of $\comma{R^{op}}{L^{op}}$. Moreover, the commutativity in  $\catname{C}$ of the square
\begin{center}
	\begin{tikzpicture}
	\node(A) at(0,0){$L(A)$};
	\node(B) at (2,0){$L(A')$};
	\node(C) at(2,-1.5){$R(C')$};
	\node(D) at (0,-1.5){$R(C)$};
	\draw[->](A)--(B)node[pos=0.5, above]{$L(h)$};
	\draw[->](D)--(C)node[pos=0.5, below]{$R(k)$};
	\draw[->](A)--(D)node[pos=0.5, left]{$f$};
	\draw[->](B)--(C)node[pos=0.5, right]{$g$};
	\end{tikzpicture}
\end{center} is tantamount to the commutativity in $\catname{C}^{op}$ of the square 
\begin{center}
	\begin{tikzpicture}
	\node(A) at(0,0){$R(C')$};
	\node(B) at (2,0){$R(C)$};
	\node(C) at(2,-1.5){$L(A)$};
	\node(D) at (0,-1.5){$L(A')$};
	\draw[->](A)--(B)node[pos=0.5, above]{$R(k)$};
	\draw[->](D)--(C)node[pos=0.5, below]{$L(h)$};
	\draw[->](A)--(D)node[pos=0.5, left]{$g$};
	\draw[->](B)--(C)node[pos=0.5, right]{$f$};
	\end{tikzpicture}
\end{center} 
	Thus we have proved the following.
\begin{prop}\label{prop:dual}
$(\comma{L}{R})^{op}$ is equal to	$\comma{R^{op}}{L^{op}}$, moreover $U^{op}_L=U_{L^{op}}$ and $U^{op}_R=U_{R^{op}}$.
\end{prop}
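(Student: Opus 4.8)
The plan is to show that the asserted identity is a \emph{strict} equality of categories, not merely an equivalence: I would check that $(\comma{L}{R})^{op}$ and $\comma{R^{op}}{L^{op}}$ have literally the same objects, the same morphisms, and the same composition law, after which the two functor equalities fall out by inspecting how each forgetful functor acts on this common description. So there are really three things to verify — agreement on objects, agreement on hom-sets, and agreement of composition and identities — followed by the bookkeeping for $U^{op}_L = U_{L^{op}}$ and $U^{op}_R = U_{R^{op}}$.

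For the objects, I would observe (exactly as in the computation preceding the statement) that an object of $(\comma{L}{R})^{op}$ is the same datum as an object of $\comma{L}{R}$, namely a triple $(A,B,f)$ with $f\colon L(A)\to R(B)$ in $\catname{C}$; reading $f$ instead as an arrow $R^{op}(B)\to L^{op}(A)$ in $\catname{C}^{op}$ exhibits precisely the data of an object of $\comma{R^{op}}{L^{op}}$, and this reinterpretation is a bijection because $\catname{C}$ and $\catname{C}^{op}$ share the same objects and carry mutually inverse identifications of hom-sets. For the morphisms, a map $(A,B,f)\to(A',B',g)$ in $(\comma{L}{R})^{op}$ is by definition a map $(A',B',g)\to(A,B,f)$ in $\comma{L}{R}$, i.e.\ a pair $(h,k)$ whose defining square commutes in $\catname{C}$; the two squares displayed just above the statement show that this commutativity is equivalent to the commutativity in $\catname{C}^{op}$ of the square defining a morphism of $\comma{R^{op}}{L^{op}}$. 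Hence the hom-sets coincide. Since the whole correspondence keeps every piece of data fixed and only reinterprets it in the opposite category, there is nothing left to check for composition and identities — composition in $(\comma{L}{R})^{op}$ is composition in $\comma{L}{R}$ read backwards, which is exactly composition in $\comma{R^{op}}{L^{op}}$ — so the object/morphism bijection upgrades to an equality of categories.

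For the forgetful functors I would simply trace the definitions: $U^{op}_L$ sends $(A,B,f)$ to $A$ and reverses morphisms, while $U_{L^{op}}\colon\comma{R^{op}}{L^{op}}\to\catname{A}^{op}$ extracts the $\catname{A}^{op}$-component, which under the identification is again $A$; the two agree on morphisms for the same reason, giving $U^{op}_L = U_{L^{op}}$, and the symmetric argument gives $U^{op}_R = U_{R^{op}}$.

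The only genuine obstacle is bookkeeping: one must keep straight the reversal of arrow directions induced by $(-)^{op}$ on hom-sets, the swap of the roles of $L$ and $R$, and the consequent reordering of the two non-arrow components of each triple, while also confirming that it is an \emph{equality} and not just an equivalence that is being claimed. Once these conventions are pinned down every verification is immediate, and the mathematical content is entirely carried by the two diagram comparisons already recorded before the statement.
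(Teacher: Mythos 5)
Your proposal is correct and follows essentially the same route as the paper: the paper's proof consists precisely of the observation, recorded just before the statement, that an object $(A,B,f)$ of $\comma{L}{R}$ is an arrow $f\colon L(A)\to R(B)$ in $\catname{C}$, hence an arrow $R^{op}(B)\to L^{op}(A)$ in $\catname{C}^{op}$, and that the two displayed squares translate the morphism condition, after which the claims about $U_L$ and $U_R$ follow by inspection. Your write-up merely makes explicit the bookkeeping (composition, identities, action of the forgetful functors on morphisms) that the paper leaves implicit.
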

This easy result allows us to dualize \cref{colim}.
\begin{cor}\label{lim} The family $\{U_L, U_R\}$ jointly creates limits along
	every diagram $I:\catname{I}\rightarrow \comma{L}{R}$ such that $R$ preserves the limit of $U_R\circ I$.
\end{cor}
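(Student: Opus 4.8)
The plan is to deduce this statement from \cref{colim} by passing to opposite categories, using the identification provided by \cref{prop:dual}. The key observation is that a limit in a category is exactly a colimit in its opposite, and that joint creation of limits by a family $\{F_j\}$ amounts to joint creation of colimits by the dual family $\{F_j^{op}\}$ (this follows from the reformulation in terms of the induced functor into the product recorded in the Remark above, together with the standard self-duality of creation of (co)limits).

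First I would fix a diagram $I:\catname{I}\rightarrow \comma{L}{R}$ for which $R$ preserves the limit of $U_R\circ I$, and form its dual $I^{op}:\catname{I}^{op}\rightarrow (\comma{L}{R})^{op}$. By \cref{prop:dual} we may regard $I^{op}$ as a diagram into $\comma{R^{op}}{L^{op}}$, whose two forgetful functors are $U_{R^{op}}=U_R^{op}$ and $U_{L^{op}}=U_L^{op}$.

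Next I would check that the hypothesis transports correctly. In $\comma{R^{op}}{L^{op}}$ the functor playing the role of the ``left'' functor in the statement of \cref{colim} is $R^{op}$, and its associated forgetful functor is $U_{R^{op}}$. Since $U_{R^{op}}\circ I^{op}=(U_R\circ I)^{op}$, a colimit of $U_{R^{op}}\circ I^{op}$ in $\catname{B}^{op}$ is precisely a limit of $U_R\circ I$ in $\catname{B}$, and $R^{op}$ preserves the former exactly when $R$ preserves the latter. Hence the assumption ``$R$ preserves the limit of $U_R\circ I$'' is literally the hypothesis of \cref{colim} applied to $I^{op}$ in $\comma{R^{op}}{L^{op}}$.

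Applying \cref{colim} then yields that $\{U_{R^{op}},U_{L^{op}}\}$ jointly creates colimits of $I^{op}$. Dualizing back, the assertion that $\{U_R^{op},U_L^{op}\}$ jointly creates colimits of $I^{op}$ is the same as the assertion that $\{U_L,U_R\}$ jointly creates limits of $I$, which is the claim. The only point that requires care—and essentially the only place a slip could occur—is the bookkeeping of which functor becomes the ``left'' one after the swap in \cref{prop:dual}: it is $R$, not $L$, whose preservation property is required, and this is exactly the asymmetry between the hypothesis of \cref{colim} and that of the present corollary.
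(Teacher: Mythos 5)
Your proposal is correct and is exactly the paper's own argument: the paper proves \cref{lim} by simply citing \cref{prop:dual,colim}, i.e.\ by dualizing, and your write-up just makes explicit the bookkeeping (that after the swap it is $R^{op}$ that plays the ``left'' role, so the hypothesis on $R$ transports to the hypothesis of \cref{colim}). No gaps; this is the intended proof, spelled out in full.
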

\begin{proof}
	Apply \cref{prop:dual,colim}.
\end{proof}

Now, in every category an arrow $m:C\to D$ is a mono if and only if the square
\begin{center}
	\begin{tikzpicture}
	\node(A) at(0,0){$C$};
	\node(B) at (1.5,0){$C$};
	\node(C) at(1.5,-1.5){$D$};
	\node(D) at (0,-1.5){$C$};
	\draw[->](A)--(B)node[pos=0.5, above]{$\id{C}$};
	\draw[->](D)--(C)node[pos=0.5, below]{$m$};
	\draw[->](A)--(D)node[pos=0.5, left]{$\id{C}$};
	\draw[->](B)--(C)node[pos=0.5, right]{$m$};
	\end{tikzpicture}
\end{center} 
is a pullback. Thus, using \cref{lim}, we can characterize monos in comma categories.

\begin{cor}\label{cor:mono}
If $R$ preserves pullbacks then an arrow $(h,k)$ in $\comma{L}{R}$ is mono if and only if both $h$ and $k$ are monomorphisms.
\end{cor}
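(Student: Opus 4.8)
The plan is to recast the statement through the pullback characterisation of monomorphisms recalled just above and then invoke \cref{lim}. Recall that an arrow $m$ is mono exactly when the square whose two upper-left legs are identities and whose two lower-right legs are copies of $m$ is a pullback; equivalently, the cospan $\,\cdot\xrightarrow{m}\cdot\xleftarrow{m}\cdot\,$ has as its limit the common source, with both projections the identity. I would therefore apply this to the arrow $(h,k)\colon(A,B,f)\to(A',B',g)$ and consider the cospan
\[
I\colon\quad (A,B,f)\xrightarrow{(h,k)}(A',B',g)\xleftarrow{(h,k)}(A,B,f)
\]
viewed as a diagram in $\comma{L}{R}$, so that $(h,k)$ is mono if and only if the cone on $I$ with apex $(A,B,f)$ and both legs the identity is limiting.

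First I would identify the two composite diagrams: $U_L\circ I$ is the cospan $A\xrightarrow{h}A'\xleftarrow{h}A$ in $\catname{A}$, and $U_R\circ I$ is the cospan $B\xrightarrow{k}B'\xleftarrow{k}B$ in $\catname{B}$. The limit of $U_R\circ I$, when it exists, is the pullback of $k$ along itself, and since $R$ preserves pullbacks by hypothesis, $R$ preserves this limit. Hence the single hypothesis of \cref{lim} is satisfied, and the family $\{U_L,U_R\}$ jointly creates the limit of $I$; in particular it jointly preserves and reflects it.

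The two implications then follow from preservation and reflection, respectively. For the direction ``$(h,k)$ mono $\Rightarrow h,k$ mono'', if $(h,k)$ is mono the identity cone on $I$ is limiting, so by joint preservation its images under $U_L$ and $U_R$ are limiting; but these are precisely the pullback squares witnessing that $h$ and $k$ are monos. For the converse, if $h$ and $k$ are monos then the $U_L$- and $U_R$-images of the identity cone are limiting (being the pullback characterisations of $h$ and $k$), so by joint reflection the identity cone on $I$ is limiting, i.e.\ $(h,k)$ is mono.

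The only genuinely delicate step is checking the hypothesis of \cref{lim}, namely that $R$ preserves the limit of $U_R\circ I$; this is immediate here because $U_R\circ I$ is a cospan whose limit is a pullback and $R$ is assumed to preserve pullbacks. Everything else is a direct translation of ``mono $=$ identity square is a pullback'' across the joint creation of limits supplied by \cref{lim}, so no separate computation inside $\comma{L}{R}$ is needed.
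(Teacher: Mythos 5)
Your proposal is correct and takes essentially the same route as the paper: the paper derives this corollary exactly by combining the observation that an arrow is mono if and only if its identity square is a pullback with the joint creation of limits supplied by \cref{lim}, whose hypothesis is verified just as you do (the limit of $U_R\circ I$ is a pullback of $k$ along itself, which $R$ preserves). Your write-up merely spells out the preservation/reflection split that the paper leaves implicit, so there is nothing to add.
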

We can also deduce the following result from \cref{func,lim}.
\begin{thm}\label{comma} 
	Let $\catname{A}$ and $\catname{B}$ be respectively $\mathcal{M},\mathcal{N}$-adhesive and $\mathcal{M}',\mathcal{N}'$-adhesive categories,  $L:\catname{A}\rightarrow \catname{C}$ a functor that preserves $\mathcal{M}, \mathcal{N}$-pushouts, and  $R:\catname{B}\rightarrow \catname{C}$ a pullback preserving one. Then $\comma{L}{R}$ is $\cma{M}{M'}, \cma{N}{N'}$-adhesive, where 
	\begin{align*}
	\cma{M}{M}'&:=\{(h,k)\in \mathsf{Mor}(\comma{L}{R}) \mid h\in \mathcal{M}, k\in \mathcal{M}'\}\\
	\cma{N}{N}&:=\{(h,k)\in \mathsf{Mor}(\comma{L}{R}) \mid h\in \mathcal{N}, k\in \mathcal{N}'\}.
	\end{align*}
\end{thm}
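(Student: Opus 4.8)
The plan is to apply \cref{func} to the pair of forgetful functors $F=\{U_L,U_R\}$, whose codomains are $\catname{A}$ (which is $\mathcal{M},\mathcal{N}$-adhesive) and $\catname{B}$ (which is $\mathcal{M}',\mathcal{N}'$-adhesive). The first observation is that the two prescribed classes are exactly the classes tracked by the family: an arrow $(h,k)$ lies in $\cma{M}{M'}$ precisely when $U_L(h,k)=h\in\mathcal{M}$ and $U_R(h,k)=k\in\mathcal{M}'$, and likewise for $\cma{N}{N'}$; so these coincide with the classes $\mathcal{M}_F$ and $\mathcal{N}_F$ of \cref{func}(3). This makes all the hypotheses of \cref{func} componentwise statements about $\mathcal{M},\mathcal{N}$ in $\catname{A}$ and $\mathcal{M}',\mathcal{N}'$ in $\catname{B}$.

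First I would verify conditions (i)--(iii) of \cref{def:class} for $\cma{M}{M'}$ and $\cma{N}{N'}$. Since identities and composition in $\comma{L}{R}$ are computed componentwise, closure under composition, decomposition and $\cma{M}{M'}$-decomposition follows at once from the corresponding closure properties of the four classes in $\catname{A}$ and $\catname{B}$; the inclusion of $\cma{M}{M'}$ into the monomorphisms of $\comma{L}{R}$ is exactly \cref{cor:mono}, using that $R$ preserves pullbacks. Stability under pullbacks and pushouts is where I would lean on the (co)limit lemmas: because $R$ preserves all pullbacks, \cref{lim} shows that the pullbacks we encounter in $\comma{L}{R}$ are computed componentwise by $U_L$ and $U_R$, and dually, because $L$ preserves $\mathcal{M},\mathcal{N}$-pushouts, \cref{colim} does the same for $\cma{M}{M'},\cma{N}{N'}$-pushouts; projecting such a square componentwise and invoking stability in $\catname{A}$ and $\catname{B}$ gives stability of the two classes.

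Next I would check the remaining hypotheses of \cref{func}. Preservation of pullbacks by $U_L$ and $U_R$, together with the inclusions $U_L(\cma{M}{M'})\subseteq\mathcal{M}$, $U_R(\cma{M}{M'})\subseteq\mathcal{M}'$ and their $\mathcal{N}$-analogues, are immediate from \cref{lim} and the definitions. Joint preservation of $\cma{M}{M'},\cma{N}{N'}$-pushouts is the content of \cref{colim} applied to the relevant span, whose $U_L$-image is an $\mathcal{M},\mathcal{N}$-span and so has its pushout preserved by $L$; joint reflection of the pushout squares with legs in the prescribed classes follows from the reflection clause of the same lemma. For the existence clauses (a) and (b) of $\mathcal{M},\mathcal{N}$-adhesivity I would \emph{create} $\cma{M}{M'}$-pullbacks through \cref{lim}, since their $U_L$- and $U_R$-components are $\mathcal{M}$- and $\mathcal{M}'$-cospans, which admit pullbacks in the adhesive categories $\catname{A},\catname{B}$ and are preserved by $R$, and create $\cma{M}{M'},\cma{N}{N'}$-pushouts through \cref{colim}.

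The delicate point, and the one I expect to be the main obstacle, is the treatment of the $\mathcal{N}$-pullbacks demanded by the Van Kampen clause. An $\cma{N}{N'}$-cospan need not admit a pullback in $\comma{L}{R}$, because its components are $\mathcal{N}$- and $\mathcal{N}'$-cospans and these are not covered by the adhesivity axioms; hence one cannot simply \emph{create} $\cma{N}{N'}$-pullbacks. What the Van Kampen argument in \cref{func}(1) actually uses, however, is only \emph{reflection} of pullbacks along arrows of $\cma{N}{N'}$, and this is available: since $R$ preserves every pullback, the reflection clause of \cref{lim} lets us reflect \emph{any} pullback square of $\comma{L}{R}$ whose $U_L$- and $U_R$-images are pullbacks. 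Assembling these facts, \cref{func}(1) yields stability and the Van Kampen property of $\cma{M}{M'},\cma{N}{N'}$-pushouts, which together with the existence clauses established above gives conditions (a)--(c) and hence the $\cma{M}{M'},\cma{N}{N'}$-adhesivity of $\comma{L}{R}$.
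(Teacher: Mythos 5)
Your proposal is correct and is essentially the paper's own argument: the paper proves this theorem by simply invoking \cref{func} together with \cref{colim,lim} (and \cref{cor:mono}) for the family $\{U_L,U_R\}$, which is exactly the application you spell out, including the componentwise identification of $\cma{M}{M'},\cma{N}{N'}$ with the classes tracked by the family. Your handling of the delicate point --- that $\cma{N}{N'}$-pullbacks cannot be \emph{created} since they need not exist, but that the Van Kampen clause of \cref{func} only requires \emph{reflection} of such pullbacks, which \cref{lim} supplies because $R$ preserves all pullbacks --- fills in precisely the detail the paper leaves implicit in that one-line deduction, and is the right reading of it.
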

When $L=\id{\catname{A}}$ and $R$ is the constant functor into an object $A$, the comma category $\comma{L}{R}$ is just the \emph{slice category} $\catname{A}/A$ over $A$. 
\begin{cor}\label{cor:slice}
	If $A$ is an object of an $\mathcal{M}, \mathcal{N}$-adhesive category $\catname{A}$, then $\catname{A}/A$ is $\mathcal{M}/A, \mathcal{N}/A$-adhesive, where
\begin{align*}\mathcal{M}/A:=\{m\in  \mathsf{Mor}(\catname{A}/A) \mid m\in \mathcal{M} \}\qquad
\mathcal{N}/A:=\{n\in  \mathsf{Mor}(\catname{A}/A) \mid n\in \mathcal{N} \}
\end{align*}	
	
\end{cor}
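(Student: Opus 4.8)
The plan is to realize $\catname{A}/A$ as a comma category and then apply \cref{comma} directly. As noted just before the statement, I take $L=\id{\catname{A}}\colon\catname{A}\to\catname{A}$ and let $R\colon\mathbf{1}\to\catname{A}$ be the functor out of the terminal category $\mathbf{1}$ that selects the object $A$; this $R$ is exactly the constant functor at $A$, and $\comma{L}{R}$ is then the slice $\catname{A}/A$. Indeed an object $(X,\ast,f)$ of $\comma{L}{R}$ is nothing but an arrow $f\colon X\to A$, and a morphism $(h,k)$ is forced to have $k=\id{\ast}$, so that its defining square collapses to the slice condition $f'\circ h=f$.

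To invoke \cref{comma} I must check its three hypotheses. First, $L=\id{\catname{A}}$ preserves every colimit, so in particular it preserves $\mathcal{M},\mathcal{N}$-pushouts. Second, $R$ is constant, hence it carries the (unique, trivial) pullback of $\mathbf{1}$ to the square on $A$ whose four edges are all $\id{A}$, which is a pullback in $\catname{A}$; thus $R$ preserves pullbacks. Third, $\mathbf{1}$ has to be $\mathcal{M}',\mathcal{N}'$-adhesive for some classes, and here I take $\mathcal{M}'=\mathcal{N}'=\mathsf{Mor}(\mathbf{1})=\{\id{\ast}\}$. All the requirements of \cref{def:class} hold trivially in $\mathbf{1}$: it has a single object and a single arrow, every diagram is trivial, and every square is at once a pushout and a pullback, so conditions (i)--(iii) and (a)--(c) are satisfied vacuously.

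\cref{comma} then tells us that $\catname{A}/A=\comma{L}{R}$ is $\cma{M}{M'},\cma{N}{N'}$-adhesive, and the only remaining task is to identify these two classes with $\mathcal{M}/A$ and $\mathcal{N}/A$. Since every morphism of $\comma{L}{R}$ has second component $k=\id{\ast}$, which lies in both $\mathcal{M}'$ and $\mathcal{N}'$, the side conditions on $k$ are automatic; hence $\cma{M}{M'}$ is precisely the set of slice morphisms whose underlying $\catname{A}$-arrow belongs to $\mathcal{M}$, that is $\mathcal{M}/A$, and likewise $\cma{N}{N'}=\mathcal{N}/A$. I do not expect a genuine obstacle here: the entire content lies in the bookkeeping of the two hypotheses, and the single point deserving a moment's care is the verification that the terminal category really is adhesive with the trivial morphism classes and that the constant functor preserves pullbacks — both immediate once one recalls that $\mathbf{1}$ has only an identity arrow.
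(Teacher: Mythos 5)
Your proposal is correct and is exactly the route the paper intends: the remark preceding the corollary identifies $\catname{A}/A$ with $\comma{\id{\catname{A}}}{R}$ for $R\colon\mathbf{1}\to\catname{A}$ constant at $A$, and the corollary then follows from \cref{comma}. Your verification of the hypotheses — that $\mathbf{1}$ is adhesive with the trivial classes, that the constant functor preserves pullbacks, and that $\cma{M}{M'}$, $\cma{N}{N'}$ collapse to $\mathcal{M}/A$, $\mathcal{N}/A$ — supplies precisely the bookkeeping the paper leaves implicit.
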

\paragraph{When is $U_R$ a right adjoint?} We will end this section with a technical result regarding the existence of a left adjoint to $U_R$. This result will be useful to add \emph{interfaces} to various classes of (hyper)graphs (see \cref{subsec:hgraph,subsec:hhgraph}).
\begin{prop}\label{prop:left}
	If $\catname{A}$ has initial objects and $L$ preserves them then the forgetful functor $U_R:\comma{L}{R}\to \catname{B}$ has a left adjoint $\Delta$.
\end{prop}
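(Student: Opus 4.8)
The plan is to construct the left adjoint explicitly and verify its universal property by hand. Fix an initial object $0$ of $\catname{A}$; by hypothesis $L(0)$ is then initial in $\catname{C}$. For $B\in\catname{B}$ set $\Delta(B):=(0,B,\iota_B)$, where $\iota_B:L(0)\to R(B)$ is the unique arrow out of the initial object $L(0)$, and for $k:B\to B'$ set $\Delta(k):=(\id{0},k)$. This last assignment does land in $\comma{L}{R}$: the required coherence square has both composites being arrows $L(0)\to R(B')$, hence they coincide by initiality. Functoriality of $\Delta$ is then immediate, and $U_R\circ\Delta=\id{\catname{B}}$ on the nose.

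Next I would exhibit the unit and check universality. Since $U_R(\Delta(B))=B$, take $\eta_B:=\id{B}$. Fix an object $(A',B',g)$ and a morphism $k:B\to B'=U_R(A',B',g)$; I must produce a unique $(h,k'):\Delta(B)\to(A',B',g)$ with $U_R(h,k')\circ\eta_B=k$, i.e.\ with $k'=k$. The first component $h:0\to A'$ is forced to be the unique arrow $!_{A'}$ out of the initial object $0$, so there is at most one candidate, namely $(!_{A'},k)$.

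The only genuine verification -- and the sole place where the hypotheses enter -- is that $(!_{A'},k)$ is indeed a morphism of $\comma{L}{R}$, i.e.\ that $g\circ L(!_{A'})=R(k)\circ\iota_B$. But both sides are arrows $L(0)\to R(B')$, and $L(0)$ is initial in $\catname{C}$; hence they agree automatically. This yields a bijection between $\comma{L}{R}$-morphisms $\Delta(B)\to(A',B',g)$ and $\catname{B}$-morphisms $B\to B'$, natural in both variables, so $\Delta\dashv U_R$. There is no real obstacle: the entire content is carried by two uses of initiality -- one forcing the $\catname{A}$-component of any morphism out of $\Delta(B)$ to be unique, the other making the coherence square commute for free -- and the assumption that $L$ preserves initial objects is exactly what both uses require.
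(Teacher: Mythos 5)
Your proposal is correct and follows essentially the same route as the paper: define $\Delta(B)$ using an initial object of $\catname{A}$ together with the unique arrow out of $L(\text{initial})$, take the unit to be the identity, and derive both the uniqueness of the $\catname{A}$-component and the commutativity of the coherence square from initiality (the latter being exactly where preservation of initial objects by $L$ is used). The only difference is that you spell out functoriality of $\Delta$ and the naturality of the bijection explicitly, which the paper leaves implicit.
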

\begin{proof} For an object $B\in \catname{B}$ define $\Delta(B)$ as $(I, B, !_{B})$, where $I$ is an initial object in $\catname{A}$ and $!_{B}$ is the unique arrow $L(I)\to B$. Let $\id{B}:B\to U_R(\Delta(B))=B$ be the identity, and $k:B\to U_R(A, B', f)$ an arrow in $\catname{B}$. Now, by initiality of $I$, there is only one arrow $h:I\to A $ in $\catname{A}$ and, since $L$ preserves initial objects, the following square commutes.
	\begin{center}
		\begin{tikzpicture}
		\node(A) at(0,0){$L(I)$};
		\node(B) at (2,0){$L(A)$};
		\node(C) at(2,-1.5){$R(B')$};
		\node(D) at (0,-1.5){$R(B)$};
		\draw[->](A)--(B)node[pos=0.5, above]{$L(h)$};
		\draw[->](D)--(C)node[pos=0.5, below]{$R(k)$};
		\draw[->](A)--(D)node[pos=0.5, left]{$!_B$};
		\draw[->](B)--(C)node[pos=0.5, right]{$f$};
		\end{tikzpicture}
	\end{center}
Thus $(h,k)$ is the unique morphism $\Delta(B)\to (A, B', f)$	such that $U_R(h,k)=k\circ \id{B}$.
\end{proof}
Dualizing we get immediately the following.
\begin{cor}If $R$ preserves terminal objects then $U_L:\comma{L}{R}\to \catname{A}$ has a right adjoint.
\end{cor}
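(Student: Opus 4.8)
The plan is to deduce this corollary from \cref{prop:left} by dualisation, using the self-dual description of the comma construction from \cref{prop:dual}. First I would invoke the elementary fact that a functor $G$ has a right adjoint if and only if its opposite $G^{op}$ has a left adjoint; it therefore suffices to produce a left adjoint for $U_L^{op}$.

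To this end I would rewrite $U_L^{op}$ via \cref{prop:dual}. Since $(\comma{L}{R})^{op} = \comma{R^{op}}{L^{op}}$ and $U_L^{op} = U_{L^{op}}$, the functor in question is exactly the forgetful functor $U_{L^{op}}:\comma{R^{op}}{L^{op}}\to \catname{A}^{op}$. In the comma category $\comma{R^{op}}{L^{op}}$ the left-hand functor is $R^{op}:\catname{B}^{op}\to \catname{C}^{op}$, while $U_{L^{op}}$ is the projection onto the domain of the right-hand functor $L^{op}$ --- that is, it plays precisely the role that $U_R$ plays in the statement of \cref{prop:left}.

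I would then apply \cref{prop:left} to this comma category. Its hypotheses ask that the domain of the left-hand functor, here $\catname{B}^{op}$, has initial objects and that this functor, here $R^{op}$, preserves them. Now initial objects in $\catname{B}^{op}$ are precisely terminal objects in $\catname{B}$, and $R^{op}$ preserves initial objects exactly when $R$ preserves terminal ones; both requirements are supplied by the hypothesis of the corollary. Hence \cref{prop:left} furnishes a left adjoint to $U_{L^{op}} = U_L^{op}$, which by the opening observation is the same as a right adjoint to $U_L$.

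The whole argument is a routine dualisation and presents no real obstacle; the only step demanding attention is the bookkeeping of which functor and which category assume the roles of $L$, $R$, $\catname{A}$, $\catname{B}$ and $U_R$ after passing to opposites, so that the initial-versus-terminal correspondence and the $L^{op}/R^{op}$ swap line up correctly with the hypotheses of \cref{prop:left}.
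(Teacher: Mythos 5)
Your proposal is correct and is exactly the paper's argument: the paper obtains this corollary by dualising \cref{prop:left} via \cref{prop:dual}, which is precisely the route you take, only spelled out in more detail (the paper simply says ``Dualizing we get immediately the following''). The bookkeeping you perform --- identifying $U_L^{op}=U_{L^{op}}$ as the ``$U_R$'' of the comma category $\comma{R^{op}}{L^{op}}$ and translating initial objects in $\catname{B}^{op}$ to terminal objects in $\catname{B}$ --- is the intended content of that one-line proof.
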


\section{Application to some categories of graphs}\label{sec:examples}
In this section we apply the results provided in \cref{sec:criterion}, to some important categories of graphs, such as directed (acyclic) graphs and hierarchical graphs.
These examples have been chosen for their importance in graph rewriting, and because we can recover their $\mathcal{M},\mathcal{N}$-adhesivity in a uniform and systematic way.
In fact, in the case of hierarchical graphs we give the first proof of $\mathcal{M},\mathcal{N}$-adhesivity, to our knowledge.

\subsection{Directed (acyclic) graphs}
Among visual formalisms, directed simple graphs represent one of the most-used paradigms,
since they adhere to the classical view of graphs as relations included in the cartesian product of vertices.
It is also well-known that directed graphs are not quasiadhesive~\cite{johnstone2007quasitoposes}, not even in their acyclic variant. In this section we are going to exploit \cref{cor:varie3} to show that these categories of (acyclic) graphs have nevertheless adhesivity properties.

\begin{defi}
	A \emph{directed graph} $\mathcal{{G}}$ is a  $4$-tuple $(E_\mathcal{{G}}, V_\mathcal{{G}}, s_\mathcal{{G}}, t_\mathcal{{G}})$ where $E_\mathcal{{G}}$ and $V_\mathcal{{G}}$ are sets, called the set of \emph{edges} and \emph{nodes} respectively, and $s_\mathcal{{G}},t_\mathcal{{G}}:E_\mathcal{{G}}\rightrightarrows V_\mathcal{{G}}$ are functions, called \emph{source} and \emph{target}. An edge $e$ is \emph{between} $v$ and $w$ if $s_\mathcal{{G}}(e)=v$ and $t_\mathcal{{G}}(e)=w$, $\mathcal{{G}}(v,w)$ is the set of edges between $v$ and $w$. 
	
	A morphism $\mathcal{{G}}\rightarrow \mathcal{{H}}$ is a pair $(f,g)$ of functions $f:E_\mathcal{{G}}\rightarrow E_\mathcal{H}$, $g:V_\mathcal{{G}}\rightarrow V_\mathcal{H}$ such that the following diagrams commute
	\begin{center}
		\begin{tikzpicture}
		\node(A) at(0,0){$E_\mathcal{{G}}$};
		\node(B) at (1.5,0){$V_\mathcal{{G}}$};
		\node(C) at(1.5,-1.5){$V_\mathcal{{H}}$};
		\node(D) at (0,-1.5){$E_H$};
		\draw[->](A)--(B)node[pos=0.5, above]{$s_\mathcal{{G}}$};
		\draw[->](D)--(C)node[pos=0.5, below]{$s_\mathcal{{H}}$};
		\draw[->](A)--(D)node[pos=0.5, left]{$f$};
		\draw[->](B)--(C)node[pos=0.5, right]{$g$};
		\node(A) at(3,0){$E_\mathcal{{G}}$};
		\node(B) at (4.5,0){$V_\mathcal{{G}}$};
		\node(C) at(4.5,-1.5){$W_\mathcal{{H}}$};
		\node(D) at (3,-1.5){$E_\mathcal{{H}}$};
		\draw[->](A)--(B)node[pos=0.5, above]{$t_\mathcal{{G}}$};
		\draw[->](D)--(C)node[pos=0.5, below]{$t_\mathcal{{H}}$};
		\draw[->](A)--(D)node[pos=0.5, left]{$f$};
		\draw[->](B)--(C)node[pos=0.5, right]{$g$};
		\end{tikzpicture}
	\end{center}	
	We will denote by $\gr$ the category so defined.  A \emph{directed simple graph} is a directed graph in which there is at most one edge between two nodes, $\dgr$ is the full subcategory of $\gr$ given by directed simple graphs.
	
		A \emph{path} $[e_i]_{i=1}^n$ in a directed graph $\mathcal{{G}}$ is a finite and non empty list of edges such that $t_\mathcal{{G}}(e_{i})=s_\mathcal{{G}}(e_{i+1})$ for all $1\leq i\leq {n-1}$. A path is called a \emph{cycle} if $s_\mathcal{{G}}(e_1)=t\mathcal{{G}}_(e_n)$. A \emph{directed acyclic graph} is a directed simple graph without cycles. Directed acyclic graphs form a full subcategory $\dg$ of $\dgr$ and $\gr$.
\end{defi}

\begin{rem}\label{multi} From the definition of $\catname{Graph}$, we can immediately deduce its equivalence to:
	\begin{itemize}
		\item the category $\comma{\id{Set}}{\pro}$, where $\pro$ is the functor $\catname{Set}\to \catname{Set}$ defined as \begin{align*}
		\functor[l]{X}{f}{Y}
		\functormapsto
		\rfunctor{X \times X}{f \times f }{Y \times Y}
		\end{align*} 
		which preserves limits;
		\item  the category of presheaves on $\bullet \rightrightarrows \bullet$,  the category with just two objects and only two parallel arrows between them (besides the identities).
	\end{itemize}
	From these two characterizations we can deduce that $\gr$ is a topos. We can also deduce that limits and colimits of directed graphs are computed component-wise and that an arrow in $\catname{Graph}$ is mono if and only if both its underlying functions are injective.
\end{rem}

We will now establish some properties of $\dgr$ that will be useful in the following.

\begin{prop}\label{mono} If $(f,g):\mathcal{{G}}\rightarrow \mathcal{{H}}$ is an arrow in $\dgr$ with $g$ injective, then $f$ is injective too.
\end{prop}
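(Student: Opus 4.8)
The plan is to prove injectivity of $f$ directly, by tracing two edges that $f$ identifies back to a common pair of endpoints and then invoking the defining property of directed simple graphs, namely that there is at most one edge between any ordered pair of nodes.

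First I would take two edges $e_1, e_2 \in E_\mathcal{G}$ with $f(e_1) = f(e_2)$ and aim to conclude $e_1 = e_2$. Applying $s_\mathcal{H}$ to both sides and using commutativity of the source square, $s_\mathcal{H}\circ f = g\circ s_\mathcal{G}$, gives $g(s_\mathcal{G}(e_1)) = g(s_\mathcal{G}(e_2))$; since $g$ is injective, $s_\mathcal{G}(e_1) = s_\mathcal{G}(e_2)$. The identical argument with the target square $t_\mathcal{H}\circ f = g\circ t_\mathcal{G}$ yields $t_\mathcal{G}(e_1) = t_\mathcal{G}(e_2)$. Hence $e_1$ and $e_2$ both belong to $\mathcal{G}(v,w)$, where $v = s_\mathcal{G}(e_1)$ and $w = t_\mathcal{G}(e_1)$. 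Because $\mathcal{G}$ is a directed simple graph, this set of edges between a fixed ordered pair of nodes has at most one element, which forces $e_1 = e_2$, so $f$ is injective.

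I do not expect a genuine obstacle here: the only substantive ingredient is the observation that, in a simple graph, an edge is completely determined by the ordered pair of its endpoints, so injectivity of $f$ is forced by that of $g$ through the source/target commutativity. It is worth noting that the statement relies essentially on working in $\dgr$ rather than $\gr$: for arbitrary directed graphs, parallel edges sharing the same endpoints can be collapsed by $f$ even when $g$ is injective, so the claim would fail without the simplicity hypothesis.
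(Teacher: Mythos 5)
Your proof is correct and follows essentially the same route as the paper's: use the source and target squares plus injectivity of $g$ to show the two identified edges are parallel in $\mathcal{G}$, then invoke simplicity to conclude they are equal. The only difference is cosmetic — you correctly attribute the final step to simplicity of $\mathcal{G}$, whereas the paper's text cites $\mathcal{H}$ (an evident slip, since the edges in question live in $\mathcal{G}$).
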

\begin{proof}Let $e_1, e_2\in E_{\mathcal{{G}}}$ be nodes such that $f(e_1)=f(e_2)$, then
	\begin{align*}g(s_\mathcal{{G}}(e_2))=s_{\mathcal{{H}}}(f(e_2))&=s_\mathcal{{H}}(f(e_1))=g(s_\mathcal{{G}}(e_1))\\g(t_\mathcal{{G}}(e_2))=t_\mathcal{{H}}(f(e_2))&=t_\mathcal{{H}}(f(e_1))=g(t_\mathcal{{G}}(e_1))
	\end{align*}	
Thus
\[s_\mathcal{{G}}(e_1)=s_\mathcal{{G}}(e_2) \qquad t_\mathcal{{G}}(e_1)=t_\mathcal{{G}}(e_2) \]
and we can conclude that $e_1=e_2$ since $\mathcal{{H}}$ is simple.	
\end{proof}
Since $I:\dgr \to \gr$ is full and faithful, then it reflects monomorphisms, thus, from \cref{multi} we get the following.
\begin{cor}\label{mono3}
	An arrow $(f,g):\mathcal{{G}}\to \mathcal{{H}}$ in $\dgr$ is mono if and only if $g$ is injective.
\end{cor}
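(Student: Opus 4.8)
The goal is to prove \cref{mono3}: an arrow $(f,g)\colon\mathcal{G}\to\mathcal{H}$ in $\dgr$ is mono if and only if $g$ is injective.

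The plan is to exploit the full and faithful inclusion $I\colon\dgr\to\gr$ together with the characterisation of monos in $\gr$ recorded in \cref{multi}. First I would establish the forward implication. Suppose $(f,g)$ is mono in $\dgr$. Since $I$ is full and faithful, it reflects monomorphisms, so $I(f,g)=(f,g)$ is mono in $\gr$ as well. By the last sentence of \cref{multi}, an arrow in $\gr$ is mono exactly when both underlying functions are injective; hence $g$ is injective (and so is $f$, though we only need $g$). This gives one direction with essentially no work.

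For the converse, suppose $g$ is injective. By \cref{mono}, injectivity of $g$ already forces $f$ to be injective, since $\mathcal{H}$ is simple. Thus both $f$ and $g$ are injective, so $(f,g)$ is a monomorphism in $\gr$ by \cref{multi}. Because $I$ is full and faithful, it also reflects monos: any morphism of $\dgr$ that becomes mono in $\gr$ is already mono in $\dgr$, since the hom-sets out of any object coincide. Therefore $(f,g)$ is mono in $\dgr$, completing the equivalence.

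I expect no genuine obstacle here; the two non-trivial ingredients are both available upstream. The only point requiring a moment's care is the use of fullness and faithfulness to transfer the mono property between $\dgr$ and $\gr$ in both directions: faithfulness together with fullness guarantees that $\dgr(\mathcal{X},\mathcal{G})\cong\gr(I\mathcal{X},I\mathcal{G})$ naturally, so the left-cancellation property defining a monomorphism is preserved and reflected along $I$. The real content of the statement is therefore concentrated in \cref{mono}, which supplies the implication ``$g$ injective $\Rightarrow$ $f$ injective'' specific to simple graphs; everything else is formal.
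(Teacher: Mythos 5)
Your backward direction is sound and coincides with the paper's argument: \cref{mono} upgrades injectivity of $g$ to injectivity of $f$, \cref{multi} then makes $(f,g)$ a monomorphism of $\gr$, and faithfulness of the inclusion $I$ (fullness is not even needed for this step) reflects that monicity back into $\dgr$.

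The forward direction, however, contains a genuine gap. From ``$(f,g)$ is mono in $\dgr$'' you conclude that $(f,g)$ is mono in $\gr$ ``since $I$ reflects monomorphisms'' --- but that is \emph{preservation}, not reflection, and you are applying the implication in the wrong direction. Reflection says: if $I(f,g)$ is mono in $\gr$, then $(f,g)$ is mono in $\dgr$; it says nothing about the converse. Your hom-set argument does not close the gap either: the bijection $\dgr(\mathcal{X},\mathcal{G})\cong\gr(I\mathcal{X},I\mathcal{G})$ ranges only over \emph{simple} graphs $\mathcal{X}$, whereas monicity in $\gr$ requires cancellation against morphisms out of arbitrary graphs, including non-simple ones, which the bijection does not see. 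The general principle you invoke is false: a full and faithful functor need not preserve monomorphisms (e.g.\ the quotient $\mathbb{Q}\to\mathbb{Q}/\mathbb{Z}$ is monic in the full subcategory of divisible abelian groups but not in abelian groups). The needed statement is instead proved directly: if $(f,g)$ is mono in $\dgr$ and $g(v_1)=g(v_2)$, test against the one-node edgeless graph, which is simple; the two morphisms selecting $v_1$ and $v_2$ have equal composites with $(f,g)$, hence are equal, so $g$ is injective (and then \cref{mono} gives injectivity of $f$ too, if wanted). Alternatively, once \cref{prop:fatt} is available, $I$ is right adjoint to $L$ and therefore does preserve monos, but that result appears later in the paper, so the test-object argument is the appropriate fix here.
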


\begin{defi}Let $\mathcal{{G}}=(E_\mathcal{{G}}, V_\mathcal{{G}}, s_\mathcal{{G}}, t_\mathcal{{G}})$ be a directed graph. We define an equivalence $\sim$ relation on $E_\mathcal{{G}}$ putting 
	\[e_1\sim e_2 \iff s_\mathcal{{G}}(e_1)=s_\mathcal{{G}}(e_2) \text{ and } t_\mathcal{{G}}(e_1)=t_\mathcal{{G}}(e_2)\]
Let $E$ be the quotient $E/\sim$, we define $L(\mathcal{{G}})$ to be the graph $(E,V_\mathcal{{G}}, s,t)$ where $s,t:E\rightrightarrows V_\mathcal{{G}}$ are the functions induced by $s_\mathcal{{G}}$ and $t_\mathcal{{G}}$.
\end{defi}

\begin{rem}By construction, $L(\mathcal{{G}})$ belongs to $\dgr$.
\end{rem}

\begin{prop}\label{prop:fatt}
	The following properties hold
	\begin{enumerate}
		\item the inclusion functor $I:\dgr\rightarrow \gr$ has a left adjoint $L:\gr\rightarrow \dgr$.
		\item an arrow $(f,g):\mathcal{{G}}\rightarrow \mathcal{{H}}$ of $\dgr$ is a regular monomorphism if and only if $f$ is injective and \emph{edge-reflecting}: $\mathcal{{G}}(v_1, v_2)$ is non empty whenever $\mathcal{{H}}(g(v_1), g(v_2))\neq \emptyset$.
	\end{enumerate}
\end{prop}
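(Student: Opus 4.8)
The plan is to handle the two parts separately, using throughout two recurring facts: (co)limits in $\gr$ are computed component-wise (\cref{multi}), and every object of $\dgr$ is simple, so an edge is determined by its endpoints.

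For part (1) I would exhibit the adjunction $L\dashv I$ through its unit. For a directed graph $\mathcal{G}$ let $q_\mathcal{G}\colon E_\mathcal{G}\to E_\mathcal{G}/{\sim}$ be the quotient map; since the induced source and target on $L(\mathcal{G})$ satisfy $s(q_\mathcal{G}(e))=s_\mathcal{G}(e)$ and $t(q_\mathcal{G}(e))=t_\mathcal{G}(e)$ by construction, the pair $\eta_\mathcal{G}:=(q_\mathcal{G},\id{V_\mathcal{G}})$ is a morphism $\mathcal{G}\to I(L(\mathcal{G}))$. The heart of the argument is the universal property: given any $\mathcal{S}\in\dgr$ and a morphism $(f,g)\colon\mathcal{G}\to I(\mathcal{S})$, whenever $e_1\sim e_2$ the edges $f(e_1)$ and $f(e_2)$ share source $g(s_\mathcal{G}(e_1))$ and target $g(t_\mathcal{G}(e_1))$, hence $f(e_1)=f(e_2)$ because $\mathcal{S}$ is simple. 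Thus $f$ factors uniquely as $\bar f\circ q_\mathcal{G}$, and $(\bar f,g)$ is the unique morphism $L(\mathcal{G})\to\mathcal{S}$ with $I(\bar f,g)\circ\eta_\mathcal{G}=(f,g)$, compatibility of $(\bar f,g)$ with source and target following from that of $(f,g)$ since $q_\mathcal{G}$ is surjective. This gives $L\dashv I$.

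For part (2), in both directions $f$ is injective: a regular mono is a mono, so $g$ is injective by \cref{mono3} and hence $f$ is injective by \cref{mono}. The real content is thus the equivalence between ``edge-reflecting'' and ``regular''. Since $I$ is a right adjoint by part (1) it preserves limits, so if $(f,g)$ is the equalizer of a pair $u,v\colon\mathcal{H}\rightrightarrows\mathcal{K}$ in $\dgr$, then by \cref{multi} I may identify $\mathcal{G}$, up to isomorphism, with the subgraph of $\mathcal{H}$ on the vertices and edges where $u$ and $v$ agree. To read off edge-reflecting, suppose $v_1,v_2\in V_\mathcal{G}$ and $e\in E_\mathcal{H}$ is an edge between $g(v_1)$ and $g(v_2)$; then $u(e)$ and $v(e)$ are edges of $\mathcal{K}$ with equal source $u(g(v_1))=v(g(v_1))$ and equal target, so $u(e)=v(e)$ by simplicity of $\mathcal{K}$, whence $e$ lies in the equalizer and $\mathcal{G}(v_1,v_2)\neq\emptyset$.

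The main obstacle is the converse, where I must manufacture an explicit pair whose equalizer is exactly $(f,g)$. Assuming $(f,g)$ injective and edge-reflecting, I identify $\mathcal{G}$ with a subgraph of $\mathcal{H}$ and build a simple graph $\mathcal{K}$ whose vertices are two copies of $V_\mathcal{H}$ glued along $V_\mathcal{G}$ --- concretely $V_\mathcal{K}:=V_\mathcal{H}\sqcup(V_\mathcal{H}\setminus V_\mathcal{G})$ --- with relation generated by the $\mathcal{H}$-edges on the first copy together with their images under the second map. I let $u$ be the inclusion into the first copy and $v$ send each vertex of $V_\mathcal{G}$ into the first copy and every other vertex into the second; both are morphisms by the choice of edges of $\mathcal{K}$, and both are determined by their vertex maps since $\mathcal{K}$ is simple. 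On vertices the equalizer of $u$ and $v$ is exactly $V_\mathcal{G}$; on edges, an $\mathcal{H}$-edge between $a$ and $b$ is equalized precisely when $a,b\in V_\mathcal{G}$, which by edge-reflecting happens precisely when the edge lies in $E_\mathcal{G}$. Hence the component-wise equalizer in $\gr$ is $\mathcal{G}$, and as it already belongs to $\dgr$ it is the equalizer there too, so $(f,g)$ is a regular monomorphism. The delicate points to verify are that $u$ and $v$ are well-defined morphisms into a simple $\mathcal{K}$, and that edge-reflecting is precisely what upgrades the condition ``both endpoints equalized'' to ``the edge is equalized''.
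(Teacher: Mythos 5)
Your proof is correct and follows essentially the same route as the paper's: the same unit construction for $L\dashv I$, the same use of limit preservation by $I$ and component-wise equalizers for the forward direction, and, for the converse, the same doubled-vertex graph $\mathcal{K}$ on $V_\mathcal{H}\sqcup(V_\mathcal{H}\smallsetminus g(V_\mathcal{G}))$ equalizing the inclusion against the ``fold'' morphism. The only (immaterial) deviations are that your $\mathcal{K}$ carries slightly fewer edges than the paper's (only the images of the two morphisms rather than all four copy-combinations of $\mathcal{H}$-edges) and that you identify $\mathcal{G}$ with its image in $\mathcal{H}$ up front instead of constructing the comparison isomorphism explicitly at the end, as the paper does.
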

\begin{proof} 
	\begin{enumerate}
		\item For every object $\mathcal{{G}}$ of $\gr$, there is an arrow $(\pi_\mathcal{{G}}, \id{V_\mathcal{{G}}}):\mathcal{{G}}\rightarrow I(L(\mathcal{{G}}))$. Now, if $\mathcal{{H}}$ is a simple graph and $(f,g):G\to I(\mathcal{{H}})$ a morphism, then $f(e_1)=f(e_2)$ whenever $e_1\sim e_2$, and thus there exists a unique $\qty(\overline{f}, g ):L(\mathcal{{G}})\to \mathcal{{H}}$ such that 
		\[I\qty (\overline{f}, g )\circ (\pi_\mathcal{{G}}, \id{V_\mathcal{{G}}}) =(f,g)\]
		showing that $(\pi_\mathcal{{G}}, \id{V_\mathcal{{G}}})$ is the unit of $L\dashv I$. 
		\item $(\Rightarrow)$. Suppose that $(f,g)$ is the equalizer of $(f_1, g_1), (f_2, g_2): \mathcal{{H}}\rightrightarrows \mathcal{{K}}$, since $I$ preserves limits, $(f,g)$ is the equalizer of $(f_1, g_1)$ and $(f_2, g_2)$ in $\gr$. Let $\mathcal{{G}}'$ be the graph where	
		\begin{equation*}
			E_{\mathcal{{G}}'}:=\{e\in E_H\mid f_1(e)=f_2(e)\}\qquad V_{\mathcal{{G}}'}:=\{v\in V_H\mid v_1(w)=v_2(w)\} 
		\end{equation*}
		and $s_{\mathcal{{G}}'}$, $t_{\mathcal{{G}}'}$ are the restrictions of $s_\mathcal{{H}}$ and $t_\mathcal{{H}}$. Then an equalizer  $(i,j):\mathcal{{G}}'\to \mathcal{{H}}$ of $(f_1, g_1)$ and $(f_2, g_2)$ in $\gr$ is given by the inclusions
		\[i:E_{\mathcal{{G}}'}\to E_\mathcal{{H}} \qquad j:V_{\mathcal{{G}}'}\to V_\mathcal{{H}}\]
		Notice that $\mathcal{{G}}'$ is simple because $\mathcal{{H}}$ is. Now, $I$ preserves limits, so there exists an isomorphism (in $\gr$ and in $\dgr$) $(\phi, \psi):\mathcal{{G}}\to \mathcal{{G}}'$ such that
			\begin{center}
			\begin{tikzpicture}
			\node(A) at(0,0){$\mathcal{{G}}$};
			\node(B) at (3,0){$\mathcal{{H}}$};
			\node(C) at(1.5,-1.5){$\mathcal{{G}}'$};
			\draw[->](A)--(B)node[pos=0.5, above]{$(f,g)$};
			\draw[->](C)--(B)node[pos=0.5, right, yshift=-0.1cm]{$(i,j)$};
			\draw[->](A)--(C)node[pos=0.5, left, yshift=-0.1cm]{$(\phi, \psi)$};
			\end{tikzpicture}
		\end{center}	
		commutes. If we show that $(i,j)$ is edge-reflecting we are done. For every $e\in \mathcal{{H}}(i(v_1), i(v_2))$ then
		\begin{align*}
		s_\mathcal{{K}}(f_1(e))=g_1(s_\mathcal{{H}}(e))=g_1(i(v_1))&=g_2(i(v_1))=g_2(s_\mathcal{{H}}(e))=s_\mathcal{{K}}(f_2(e))\\
		t_\mathcal{{K}}(f_1(e))=g_1(t_\mathcal{{H}}(e))=g_1(i(v_1))&=g_2(i(v_1))=g_2(t_\mathcal{{H}}(e))=t_\mathcal{{K}}(f_2(e))
		\end{align*}
		Thus $f_1(e)=f_2(e)$ because $\mathcal{{K}}$ is simple, i.e. $e\in E_\mathcal{{G}}$.
		
		\medskip 
		\noindent
		$(\Leftarrow)$. Take 
		\[V:=V_\mathcal{{H}}\sqcup (V_\mathcal{{H}}\smallsetminus g(V_\mathcal{{G}}))\]
		and define $E\subseteq V\times V$ putting $(v,v')\in E$ if and only if one of the following is true
		\begin{itemize}
	\item $v=i_1(w)$, $v'=i_1(w')$ and $\mathcal{{H}}(w,w')\neq \emptyset$; 
	\item $v=i_2(w)$, $v'=i_2(w')$ and $\mathcal{{H}}(w,w')\neq \emptyset$; 
	\item 	$v=i_1(w)$, $v'=i_2(w')$ and $\mathcal{{H}}(w,w')\neq \emptyset$; 
	\item $v=i_2(w)$, $v'=i_1(w')$ and $\mathcal{{H}}(w,w')\neq \emptyset$;
	\end{itemize}
	where $i_1$ and $i_2$ are the inclusion of $V_\mathcal{{H}}$ and $V_\mathcal{{H}}\smallsetminus g(V_\mathcal{{G}})$ into $V$. Restricting the projections, we get two arrow $s,t:E\rightrightarrows V$, let $\mathcal{{K}}$ be the directed graph $(E, V, s, t)$, which by construction is simple. 
	
	Now, consider
	\[f:E_\mathcal{{G}}\to V\qquad e \mapsto \qty(i_1\qty(s_\mathcal{{H}}\qty(e)),i_1\qty(t_\mathcal{{H}}\qty(e)) )\]
	paired with $i_1:V_\mathcal{{H}}\to V$ it induces a morphism $(f, i_1):\mathcal{{H}}\to \mathcal{{K}}$. On the other hand, define
	\begin{equation*}
	i':V_\mathcal{{H}}\to V \qquad w \mapsto \begin{cases}
	i_1(w) & w\in g(V_\mathcal{{G}})\\
	i_2(w) & w\notin g(V_\mathcal{{G}})
	\end{cases}
	\end{equation*}
	and 
	\begin{equation*}f':E_\mathcal{{H}}\to E \qquad e \mapsto \begin{cases}
	\qty(i_1\qty(s_\mathcal{{H}}(e)), i_1\qty(t_\mathcal{{H}}(e))) & s_\mathcal{{H}}(e), t_\mathcal{{H}}(e)\in g(V_\mathcal{{G}})\\
	\qty(i_2\qty(s_\mathcal{{H}}(e)), i_2\qty(t_\mathcal{{H}}(e))) & s_\mathcal{{H}}(e), t_\mathcal{{H}}(e)\notin g(V_\mathcal{{G}})\\
	\qty(i_1\qty(s_\mathcal{{H}}(e)), i_2\qty(t_\mathcal{{H}}(e))) & s_\mathcal{{H}}(e)\in g(V_\mathcal{{G}})\\
\qty(i_2\qty(s_\mathcal{{H}}(e)), i_1\qty(t_\mathcal{{H}}(e))) & t_\mathcal{{H}}(e)\in g(V_\mathcal{{G}})
	\end{cases}
	\end{equation*}
	Define now 
		\begin{gather*}
	A:=\qty{e\in E_\mathcal{{H}} \mid s_\mathcal{{H}}(e),t_\mathcal{{H}}(e)\in g(V_\mathcal{{G}})}
	\end{gather*}
	with inclusion $i:A\to E_\mathcal{{H}}$, and let also $j$ be the inclusion $g(V_\mathcal{{H}})\to V_\mathcal{{H}}$. 
	By construction there are arrows $s,t:A\rightrightarrows g(V_\mathcal{{H}})$ such that
	
	\begin{center}
		\begin{tikzpicture}
		\node(A) at(0,0){$A$};
		\node(B) at (1.5,0){$E_\mathcal{{H}}$};
		\node(C) at(1.5,-1.5){$V_\mathcal{{H}}$};
		\node(D) at (0,-1.5){$g(V_\mathcal{{G}})$};
		\draw[->](A)--(B)node[pos=0.5, above]{$i$};
		\draw[->](D)--(C)node[pos=0.5, below]{$j$};
		\draw[->](A)--(D)node[pos=0.5, left]{$s$};
		\draw[->](B)--(C)node[pos=0.5, right]{$s_\mathcal{{H}}$};
		\node(A) at(3,0){$A$};
		\node(B) at (4.5,0){$E_\mathcal{{H}}$};
		\node(C) at(4.5,-1.5){$V_\mathcal{{H}}$};
		\node(D) at (3,-1.5){$g(V_\mathcal{{G}})$};
		\draw[->](A)--(B)node[pos=0.5, above]{$i$};
		\draw[->](D)--(C)node[pos=0.5, below]{$j$};
		\draw[->](A)--(D)node[pos=0.5, left]{$t$};
		\draw[->](B)--(C)node[pos=0.5, right]{$t_\mathcal{{H}}$};
		\end{tikzpicture}
	\end{center}	
	commute. Putting $\mathcal{{G}}':=(A, g(V_\mathcal{{G}}), s, t)$
	we get a (simple) graph, with an inclusion $(i,j):\mathcal{{G}}'\to \mathcal{{G}}$ which is the equalizer in $\gr$ of $(f,i_1)$ and $(f',i')$.
	
Now, $g=j\circ \phi$ for some $\phi: V_\mathcal{{H}}\to g(V_\mathcal{{G}})$ and, since $(f,g)$ is a morphism of $\dgr$,  $f=i\circ \psi$ for some $\psi:E_\mathcal{{H}}\to A$. Notice that
\begin{align*}
j\circ \phi\circ s_\mathcal{{G}}=g\circ s_G&=s_\mathcal{{H}}\circ f=s_\mathcal{{H}}\circ i \circ \psi =j\circ s \circ \psi\\
j\circ \phi\circ t_\mathcal{{G}}=g\circ t_G&=t_\mathcal{{H}}\circ f=t_\mathcal{{H}}\circ i \circ \psi =j\circ t \circ \psi
\end{align*}
Since $j$ is injective, we can deduce that $(\psi, \phi)$ is a morphism $\mathcal{{G}}\to \mathcal{{G}}'$. Now, $\phi$ is surjective by construction and $g$ is injective by hypothesis, thus $\phi$ is injective too and, using \cref{mono3}, we can deduce that also $\psi$ is injective. If we show that $\psi$ is also surjective we are done: let $e\in A$, then $e\in \mathcal{{H}}(g(v_1), g(v_2))$ for some $v_1,v_2\in V_\mathcal{{G}}$, thus there exists $e'\in \mathcal{{G}}(v_1, v_2)$ and, necessarily, $f(e')=e$, but this means that $\psi(e')=e$.
		\qedhere
	\end{enumerate}
\end{proof}

\begin{cor}\label{mono2}
	 The functor $L$ preserves monomorphisms.
\end{cor}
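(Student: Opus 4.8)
The plan is to reduce everything to the vertex components, exploiting the two characterizations of monomorphisms already established. First I would recall from \cref{multi} that a morphism $(f,g)$ of $\gr$ is mono precisely when both underlying functions $f$ and $g$ are injective; in particular, if $(f,g):\mathcal{G}\to\mathcal{H}$ is a monomorphism, then its vertex component $g$ is injective.

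Next I would observe that $L$ acts trivially on vertices. Indeed, by construction $L(\mathcal{G})=(E_\mathcal{G}/{\sim}, V_\mathcal{G}, s, t)$ leaves the vertex set untouched, and the unit $(\pi_\mathcal{G}, \id{V_\mathcal{G}})$ of the adjunction $L\dashv I$ from \cref{prop:fatt} has the identity as its vertex component. Spelling out $L(f,g)$ as the unique arrow satisfying $I(L(f,g))\circ(\pi_\mathcal{G}, \id{V_\mathcal{G}})=(\pi_\mathcal{H}, \id{V_\mathcal{H}})\circ(f,g)$ and reading off the equation at the vertex level, I would conclude that the vertex component of $L(f,g)$ is exactly $g$.

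Finally, I would invoke \cref{mono3}, which states that an arrow of $\dgr$ is mono if and only if its vertex component is injective. Since the vertex component of $L(f,g)$ is the injective map $g$, it follows that $L(f,g)$ is a monomorphism of $\dgr$, as required.

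There is essentially no obstacle here: the entire content is packaged in the observation that $L$ fixes vertices, together with the fact---\emph{peculiar to simple graphs} and recorded in \cref{mono3}---that injectivity on vertices already forces injectivity on edges, so that one never has to analyse the induced quotient map $\overline{f}$ on edges at all. The only point requiring a modicum of care is checking that the vertex component of $L(f,g)$ really is $g$ rather than some quotient thereof, which becomes immediate once the unit of the adjunction is written out explicitly.
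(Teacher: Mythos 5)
Your proof is correct and follows essentially the same route as the paper: both extract injectivity of the vertex component $g$ from \cref{multi}, observe that $L(f,g)=(\overline{f},g)$ fixes the vertex component, and conclude via the characterization of monos in $\dgr$ given in \cref{mono3}. Your explicit verification via the unit of the adjunction that the vertex component of $L(f,g)$ is $g$ (rather than some quotient) is a slightly more careful spelling-out of what the paper states directly, but the argument is the same.
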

\begin{proof}Let $(f,g):\mathcal{{G}}\to \mathcal{{H}}$ be a monomorphism in $\gr$, then $L(f,g)=(\overline{f}, g)$ where $\overline{f}$ is the unique arrow such that $\overline{f}\circ \pi_\mathcal{{G}}=f$. Now, by \cref{multi} $g$ is injective, thus the thesis follows from \cref{mono3}.
\end{proof}

\begin{exa}
	In \cite{johnstone2007quasitoposes} it is shown that $\dgr$ is not quasiadhesive. Take the cube
		\begin{center}
		\begin{tikzpicture}[scale=0.94]
		\node(A) at(4,0){$a_1 \hspace{0.5cm} a_2$};		
\node(B) at (7,0){$a$};
\node(C) at(5,-1){$a$};
\node(D) at (2,-1){$a_1 \hspace{0.5cm} a_2$};
\node(E) at(2,-2){$b$};
\draw[-latex](D.225)--(E);
\draw[-latex](D.315)--(E);
\node(F) at(5,-2){$b$};
\draw[-latex](C)--(F);

\draw[rounded corners] (6.75, -0.25) rectangle (7.25, 0.25) {};
\draw[rounded corners] (1.3, -2.25) rectangle (2.7, -0.75) {};
\draw[rounded corners] (3.3, -0.25) rectangle (4.7, 0.25) {};
\draw[rounded corners] (4.75, -2.25) rectangle (5.25, -0.75) {};

\draw[rounded corners] (6.75, -3.25) rectangle (7.25, -2.75) {};
\draw[rounded corners] (1.3, -5.25) rectangle (2.7, -3.75) {};
\draw[rounded corners] (3.3, -3.25) rectangle (4.7, -2.75) {};
\draw[rounded corners] (4.75, -5.25) rectangle (5.25, -3.75) {};

\node(G) at(4,-3){$a_1 \hspace{0.5cm} a_2$};		
\node(H) at (7,-3){$a$};
\node(I) at(5,-4){$a$};
\node(L) at (2,-4){$a_1 \hspace{0.5cm} a_2$};
\node(M) at(2,-5){$b$};
\draw[-latex](L.225)--(M);
\node(N) at(5,-5){$b$};
\draw[-latex](I)--(N);
\draw[->](4.8,0)--(6.65,0);
\draw[->](5.1,-3)--(6.65,-3);
\draw[-](4.8,-3)--(4.95,-3);
\draw[->](2,-2.35)--(2,-3.65);
\draw[->](7,-0.35)--(7,-2.65);
\draw[->](5,-2.35)--(5,-3.65);
\draw[->](2.8,-4.5)--(4.65,-4.5);
\draw[->](2.8,-1.5)--(4.65,-1.5);
\draw[->](3.6,-3.3)--(2.8,-3.9);
\draw[->](3.6,-0.3)--(2.8,-0.9);

\draw[->](6.7,-3.225)--(5.35,-4.2375);
\draw[->](6.7,-0.225)--(5.35,-1.2375);

\draw[-](4,-0.35)--(4,-1.4);
\draw[->](4,-1.6)--(4,-2.65);
		\end{tikzpicture}
	\end{center}
	By the results of \cref{prop:fatt} the top and bottom faces are pushouts along regular monos and the back faces are pullbacks, but the front one is not, contradicting the Van Kampen property. The same example shows that even $\dg$ is not quasiadhesive. 
\end{exa}

\begin{defi}
	A monomorphism $(f,g):\mathcal{{G}}\rightarrow \mathcal{{H}}$ in $\gr$ is said to be \emph{downward closed} if, for all $e\in E_\mathcal{{H}}$,  $e\in f(E_\mathcal{{G}})$ whenever $t_\mathcal{{H}}(e)\in g(V_\mathcal{{G}})$. We denote by $\rta$, $\rt$ and $\rtd$ the classes of downward closed morphisms in $\gr$, $\dgr$ and $\dg$ respectively.
\end{defi}

\begin{rem}\label{down}
	The functor $L$ of \cref{prop:fatt} sends downward closed morphisms to downward closed morphisms.
\end{rem}

\begin{rem}By \cref{prop:fatt} it is clear that any downward closed morphism between simple graphs is regular. The vice-versa does not hold: a counterexample is given by  
	\begin{center}
	\begin{tikzpicture}[scale=0.94]
		\node (A) at (0,0){$b$};
		\node (B) at (1.5,0.5){$a$};
		\node (C) at (1.5,-0.5){$b$};
		\draw[rounded corners] (-0.25, -0.25) rectangle (0.25, 0.25) {};
		\draw[rounded corners] (1.25, -0.75) rectangle (1.75, 0.75) {};
		\draw[-latex](B)--(C);
		\draw[->] (0.30, 0)--(1.2,0);	\end{tikzpicture}
	
\end{center}
\end{rem}

\begin{lem}\label{lemma:gpush}$\dgr$ and $\dg$ are closed in $\gr$ under pullbacks. Moreover, $\dgr$ is closed under $\reg{\dgr}, \mono{\dgr}$-pushouts, while $\dg$ is closed under $\rtd, \mono{\dg}$-pushouts.
\end{lem}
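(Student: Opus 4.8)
The plan is to prove the statement in two stages: first the closure under pullbacks, then the closure under the specified pushouts. For the pullback part, I would use the fact (from \cref{multi}) that limits in $\gr$ are computed component-wise on edges and nodes. Given a cospan in $\dgr$ (or $\dg$) and forming its pullback in $\gr$, I would check that the resulting graph lies in the subcategory. For $\dgr$, I must verify simplicity: if $\mathcal{G}\to\mathcal{H}\leftarrow\mathcal{K}$ are simple, the pullback has edge set a subset of $E_\mathcal{G}\times_{E_\mathcal{H}} E_\mathcal{K}$ and node set $V_\mathcal{G}\times_{V_\mathcal{H}} V_\mathcal{K}$; since source/target pairs determine edges uniquely in both $\mathcal{G}$ and $\mathcal{K}$, two pullback edges with equal source and target must agree in each component, hence be equal. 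For $\dg$, I additionally need acyclicity: a cycle in the pullback would project to a cycle in $\mathcal{G}$ (and in $\mathcal{K}$), contradicting acyclicity there. Since $\dgr$ and $\dg$ are full subcategories, closure under pullbacks is exactly this verification that the $\gr$-pullback object is simple (resp. simple and acyclic).

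For the pushout part, the key tool is \cref{prop:fatt}: the inclusion $I:\dgr\to\gr$ has left adjoint $L$, so I can compute a pushout in $\dgr$ by forming the pushout in $\gr$ and applying $L$. The claim ``$\dgr$ is closed under $\reg{\dgr},\mono{\dgr}$-pushouts'' then amounts to showing that for a span $\mathcal{G}\xleftarrow{(i,j)}\mathcal{A}\xrightarrow{(f,g)}\mathcal{K}$ with $(i,j)$ regular (hence edge-reflecting and injective, by \cref{prop:fatt}) and $(f,g)$ mono, the $\gr$-pushout is already simple, so that $I$ preserves it and it serves as the pushout in $\dgr$. I would analyze the pushout object's edges using the explicit description of pushouts in $\catname{Set}$ from \cref{lem:push} applied separately to the node and edge components. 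The node pushout identifies nodes along $j$; the edge pushout identifies edges along $i$. To show simplicity, I would take two edges in the pushout with the same source and target and trace them back through the surjection from $E_\mathcal{G}\sqcup E_\mathcal{K}$ (\cref{lem:push}(2)): using that $(i,j)$ is edge-reflecting (so no ``phantom'' edges appear in $\mathcal{G}$ over identified node pairs) and that the legs of the pushout preserve sources/targets, I can force the two edges to coincide.

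The analogous $\dg$ statement with $\rtd$-pushouts requires, beyond simplicity, that the pushout graph be acyclic. Here the choice of the left leg being \emph{downward closed} is essential: a downward-closed mono into an acyclic graph controls how edges entering a node in the image are reflected, which is what prevents the pushout identification from creating a directed cycle. I would argue that any hypothetical cycle in the pushout, pulled back along the surjection from $E_\mathcal{G}\sqcup E_\mathcal{K}$, must decompose into segments lying in $\mathcal{G}$ and $\mathcal{K}$; the downward-closedness condition (every edge whose target is in the image of the interface is itself in the image) ensures these segments can be reassembled into a cycle already present in $\mathcal{G}$ or $\mathcal{K}$, contradicting their acyclicity.

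The main obstacle I anticipate is the acyclicity argument for $\dg$: verifying simplicity is a fairly direct diagram chase with \cref{lem:push}, but ruling out cycles in the pushout demands a careful combinatorial analysis of how directed paths in the pushout lift to alternating paths across the two summands, and it is precisely here that the downward-closed hypothesis on the left leg must be invoked in a non-trivial way. I would want to set up the lifting of a cycle explicitly, track which node identifications can merge path endpoints, and show that downward-closedness forbids exactly the merges that would be needed to close up a cycle that was not already present in one of the factors.
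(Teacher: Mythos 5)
Your handling of the pullback part and of the $\reg{\dgr},\mono{\dgr}$-pushout part is correct and follows the same route as the paper: compute limits and colimits componentwise in $\gr$, check simplicity of the pushout object by lifting a pair of parallel edges along the surjection of \cref{lem:push}, split into cases according to which summand the lifts come from, and in the mixed case use the edge-reflecting characterisation of regular monos from \cref{prop:fatt} to produce an edge of the apex whose two images force the identification. (The detour through the reflector $L$ is harmless but unnecessary: since $\dgr$ and $\dg$ are full in $\gr$, it suffices to show that the $\gr$-pushout lands in the subcategory. Also, for the $\dg$ case simplicity does come for free from the $\dgr$ case, because downward closed monos are edge-reflecting and hence regular in $\dgr$.)

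The genuine gap is the $\dg$ acyclicity argument, exactly the step you flag as ``the main obstacle'' and leave open; moreover, the mechanism you sketch for it is not the one that works. You propose to ``track which node identifications can merge path endpoints'' and show that downward closedness ``forbids exactly the merges that would be needed to close up a cycle''. But the node pushout is determined by $g_1$ and $g_2$ alone, so a hypothesis on edges such as membership in $\rtd$ cannot forbid any node identification; and a cycle in $\mathcal{P}$ need not arise from closing up a single lifted path, since its edges may alternate between the two summands arbitrarily often. What works instead is a one-step propagation lemma. Write the span as $\mathcal{G}\xleftarrow{(f_1,g_1)}\mathcal{H}\xrightarrow{(f_2,g_2)}\mathcal{K}$ with $(f_1,g_1)\in\rtd$, and let $p_2:E_{\mathcal{G}}\to E_{\mathcal{P}}$, $p_1:E_{\mathcal{K}}\to E_{\mathcal{P}}$, $q_2:V_{\mathcal{G}}\to V_{\mathcal{P}}$, $q_1:V_{\mathcal{K}}\to V_{\mathcal{P}}$ be the pushout legs. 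If $e=p_1(e')$ and $a=p_2(a')$ satisfy $s_{\mathcal{P}}(e)=t_{\mathcal{P}}(a)$, then by \cref{lem:push} this common node lifts to some $v\in V_{\mathcal{H}}$ with $g_1(v)=t_{\mathcal{G}}(a')$ and $g_2(v)=s_{\mathcal{K}}(e')$; downward closedness of $(f_1,g_1)$ applied to $a'$ gives $b\in E_{\mathcal{H}}$ with $f_1(b)=a'$, hence $a=p_2(f_1(b))=p_1(f_2(b))\in p_1(E_{\mathcal{K}})$. So every immediate predecessor of a $p_1$-edge is again a $p_1$-edge. Now take a cycle in $\mathcal{P}$: if all its edges lie in $p_2(E_{\mathcal{G}})$, injectivity of $q_2$ lifts it to a cycle in $\mathcal{G}$, absurd; otherwise some edge lies in $p_1(E_{\mathcal{K}})$, and walking backwards around the cycle the propagation step places every edge in $p_1(E_{\mathcal{K}})$, so injectivity of $q_1$ lifts it to a cycle in $\mathcal{K}$, again absurd. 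Note the asymmetry: the whole cycle is pushed into $\mathcal{K}$; nothing is ever ``reassembled'' in $\mathcal{G}$. Until this step (or an equivalent one) is supplied, the $\dg$ half of the lemma remains unproved in your proposal.
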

\begin{proof} Since 
 $\gr$ is a presheaf category, the pullback of a cospan  $ \mathcal{G}\xrightarrow{(f_1,g_1)}\mathcal{H}\xleftarrow{(f_2,g_2)} \mathcal{K}$
is given $\mathcal{P}$ where
		\begin{equation*}
		\begin{split}
		&V_{\mathcal{P}}=\{(v_1, v_2)\in V_{\mathcal{G}}\times V_{\mathcal{K}}\mid g_1(v_1)=g_2(v_2)\}\\
&s_{\mathcal{P}} : P\rightarrow Q \quad (e_1, e_2) \mapsto (s_{\mathcal{G}}(e_1), s_{\mathcal{K}}(e_2))
		\end{split}
	\quad
	\begin{split}
&E_{\mathcal{P}}=\{(e_1, e_2)\in E_{\mathcal{G}}\times E_{\mathcal{K}}\mid f_1(e_1)=f_2(e_2)\} 
\\
&t_{\mathcal{P}} : P\rightarrow Q \quad (e_1, e_2) \mapsto (t_{\mathcal{G}}(e_1), t_{\mathcal{K}}(e_2))  
	\end{split}
		\end{equation*}
		The two obvious projections give the limiting cone. Now it follows at once that two edges with the same source and target or a cycle in $\mathcal{P}$ would induce parallel edges and cycle in $\mathcal{G}$ and $\mathcal{K}$, thus $\mathcal{P}$ is in $\dgr$ or $\dg$ if $\mathcal{G}$ and $\mathcal{K}$ are in it.

		We are left with pushouts. Let us start again with the presheaf category $\gr$ and a span $ \mathcal{G}\xleftarrow{(f_1,g_1)}\mathcal{H}\xrightarrow{(f_2,g_2)}\mathcal{K}$ in it. Its pushout is given by $\mathcal{P}$ where $E_{\mathcal{P}}$ and $V_{\mathcal{P}}$ are given by
	 the pushouts
	 \begin{center}
	 	\begin{tikzpicture}
	 	\node(A) at(0,0){$E_\mathcal{H}$};
	 	\node(B) at (1.5,0){$E_\mathcal{K}$};
	 	\node(C) at(1.5,-1.5){$E_\mathcal{P}$};
	 	\node(D) at (0,-1.5){$E_\mathcal{G}$};
	 	\draw[->](A)--(B)node[pos=0.5, above]{$f_2$};
	 	\draw[->](D)--(C)node[pos=0.5, below]{$p_2$};
	 	\draw[->](A)--(D)node[pos=0.5, left]{$f_1$};
	 	\draw[->](B)--(C)node[pos=0.5, right]{$p_1$};

	 	\node(A) at(3,0){$V_\mathcal{H}$};
	 	\node(B) at (4.5,0){$V_\mathcal{K}$};
	 	\node(C) at(4.5,-1.5){$V_\mathcal{P}$};
	 	\node(D) at (3,-1.5){$V_\mathcal{G}$};
	 	\draw[->](A)--(B)node[pos=0.5, above]{$ g_2$};
	 	\draw[->](D)--(C)node[pos=0.5, below]{$q_2$};
	 	\draw[->](A)--(D)node[pos=0.5, left]{$g_1$};
	 	\draw[->](B)--(C)node[pos=0.5, right]{$q_1$};
	 	\end{tikzpicture}
	 \end{center}
 $s_\mathcal{P}$ is induced from $q_2\circ s_\mathcal{K}$, $q_1\circ s_\mathcal{G}$ and $t_\mathcal{P}$ from $q_2\circ t_\mathcal{K}$, $q_1\circ t_\mathcal{G}$.
\begin{itemize}
	\item Let $(f_1, g_1)$ and $(f_2, g_2)$ be, respectively, a regular mono and a mono in $\dgr$. Let also $e_1$ and $e_2$ be two elements of $\mathcal{P}(v,v')$. We can use \cref{lem:push} to get the following cases.
	\begin{itemize}
			\item $e_1=p_1(e'_1)$ and $e_2=p_1(e'_2)$ for some $e'_1, e'_2\in e_{\mathcal{K}}$. Then
		\begin{gather*}q_1\qty(s_{\mathcal{K}}\qty(e'_1))=s_{\mathcal{P}}(p_1(e'_1))=s_{\mathcal{P}}(e_1)=v=s_{\mathcal{P}}(e_2)=s_{\mathcal{P}}(p_1(e'_2))=q_1\qty(s_{\mathcal{K}}\qty(e'_2))\\
		q_1\qty(t_{\mathcal{K}}\qty(e'_1))=t_{\mathcal{P}}(p_1(e'_1))=t_{\mathcal{P}}(e_1)=v'=t_{\mathcal{P}}(e_2)=t_{\mathcal{P}}(p_1(e'_2))=q_1\qty(t_{\mathcal{K}}\qty(e'_2))
		\end{gather*}
		But $q_1$ is injective, since $g_1$ is injective and $\catname{Set}$ is adhesive, so 
		\[s_{\mathcal{K}}\qty(e'_1)=s_{\mathcal{K}}\qty(e'_2) \qquad t_{\mathcal{K}}\qty(e'_1)=t_{\mathcal{K}}\qty(e'_2)\], from which we can deduce that $e'_1=e'_2$ and the thesis follows.
		
		\item $e_1=p_2(e'_1)$ and $e_2=p_2(e'_2)$ for some $e'_1, e'_2\in E_{\mathcal{G}}$. Then
		\begin{gather*}q_2\qty(s_{\mathcal{K}}\qty(e'_1))=s_{\mathcal{P}}(p_2(e'_1))=s_{\mathcal{P}}(e_1)=v=s_{\mathcal{P}}(e_2)=s_{\mathcal{P}}(p_2(e'_2))=q_2\qty(s_{\mathcal{K}}\qty(e'_2))\\
		q_2\qty(t_{\mathcal{K}}\qty(e'_1))=t_{\mathcal{P}}(p_2(e'_1))=t_{\mathcal{P}}(e_1)=v'=t_{\mathcal{P}}(e_2)=t_{\mathcal{P}}(p_2(e'_2))=q_2\qty(t_{\mathcal{K}}\qty(e'_2))
		\end{gather*}
		But $g_2$ is injective, so, as before, $q_2$ is injective too and 
		\[s_{\mathcal{G}}(e'_1)=s_{\mathcal{G}}(e'_2) \qquad t_{\mathcal{G}}(e'_1)=t_{\mathcal{G}}(e'_2)\]
		and $e'_1=e'_2$.
		
		\item $e_1=p_1(e'_1)$ and $e_2=p_2(e'_2)$ for some $e'_1\in \mathcal{K}$ and $e'_2\in E_{\mathcal{G}}$. Therefore we have		
		\[p_1\qty(s_{\mathcal{K}}\qty(e'_1))=v=p_2\qty(s_{\mathcal{G}}\qty(e'_2)) \qquad p_1\qty(t_{\mathcal{K}}\qty(e'_1))=v'=p_2\qty(t_{\mathcal{G}}\qty(e'_2)) \]
	Thus there exist $w_1$ and $w_2\in V_{\mathcal{H}}$ such that 
	\[g_1(w_1)=s_{\mathcal{G}}\qty(e'_2), \quad g_2(w_1)=s_{\mathcal{K}}\qty(e'_1) \quad g_1(w_2)=t_{\mathcal{G}}\qty(e'_1), \quad g_2(w_2)=t_{\mathcal{K}}\qty(e'_2) \]
	Thus $e'_1\in \mathcal{G}(g_1(w_1), g_1(w_2))$, but $(f_1, g_1)$ is regular, so \cref{prop:fatt} entails the existence of $e\in\mathcal{H}(w_1, w_2)$. Now, $f_1(e)=e'_1$, while 
	\[s_{\mathcal{K}}(f_2(e))=g_2(s_{\mathcal{H}}(e))=g_2(w_1)=s_{\mathcal{K}}(e'_1) \quad t_{\mathcal{K}}(f_2(e))=g_2(t_{\mathcal{H}}(e))=g_2(w_1)=t_{\mathcal{K}}(e'_1)\]
	and thus $f_2(e)=e'_1$. We conclude that $e_1=e_2$ in $E_\mathcal{P}$
	
	\item $e_1=p_2(e'_1)$ and $e_2=p_1(e'_2)$ for some $e'_1\in \mathcal{G}$ and $e'_2\in E_{\mathcal{K}}$. This is done exactly as in the previous point swapping the roles of $e'_1$ and $e'_2$.
\end{itemize} 
\item Let $(f_1, g_1)$ and $(f_2,g_2)$ be, respectively, a downward closed morphism and a mono in $\dg$. Suppose that a cycle $[e_i]_{i=1}^n$ in $\mathcal{P}$ is given. We split again the cases using \cref{prop:push}.
\begin{itemize}
	\item For every $1\leq i\leq n$, $e_i=p_1(e'_i)$ for $e'_i\in E_{\mathcal{K}}$. Then 
	\begin{gather*}q_1\qty(s_{\mathcal{K}}\qty(e'_1))=s_{\mathcal{P}}\qty(e_1)=t_{\mathcal{P}}\qty(e_n)=q_1\qty(t_{\mathcal{K}}\qty(e'_n))\\
q_1\qty(t_{\mathcal{K}}\qty(e'_i))=t_{\mathcal{P}}\qty(e_i)=	s_{\mathcal{P}}(e_{i+1})=q_1\qty(t_{\mathcal{K}}\qty(e'_{i+1}))
	\end{gather*}
As before, $q_1$ is injective because is the pushout of an injective functions, thus $[e'_i]_{i=1}^n$	is a cycle in $\mathcal{K}$, which is absurd.
	\item For every $1\leq i\leq n$, $e_i=p_2(e'_i)$ for $e'_i\in E_{\mathcal{G}}$. Then 
	\begin{gather*}q_2\qty(s_{\mathcal{G}}\qty(e'_1))=s_{\mathcal{P}}\qty(e_1)=t_{\mathcal{P}}\qty(e_n)=q_2\qty(t_{\mathcal{G}}\qty(e'_n))\\
	q_2\qty(t_{\mathcal{G}}\qty(e'_i))=t_{\mathcal{P}}\qty(e_i)=	s_{\mathcal{P}}(e_{i+1})=q_2\qty(t_{\mathcal{G}}\qty(e'_{i+1}))
	\end{gather*}
	Even in this case we can conclude appealing to the injectivity of $q_2$.
\end{itemize} 
\medskip 

To deal with the other cases we can reason in the following way. Take $e= p_1(e')$ for some $e'\in E_{\mathcal{K}}$ and suppose that there exists $a=p_2(a')$ for some $a'\in E_{\mathcal{G}}$ such that $s_{\mathcal{P}}(e)=t_{\mathcal{P}}(a)$.
By \cref{lem:push} there exists  $v\in V_{\mathcal{H}}$ such that
\[q_2\qty(g_1(v))=t_{\mathcal{P}}(a)=q_2\qty(p_2(a'))\]
$q_2$ is injective, thus $g_1(v)=p_2(a')$. Since $(f_1, g_1)\in \rtd$ there exists $b\in E_{\mathcal{H}}$ such that $f_1(b)=a'$. Thus $a=p_1(f_2(b))$ belongs to $p_1$.

Let us apply this argument to our cycle $[e_i]^n_{i=1}$. By \cref{lem:push} and the second point above, there must be an index $j$ such that $e_j\in p_1(E_{\mathcal{K}})$. Now, if $j>1$ the previous argument shows that $e_{j-1}\in p_1(E_{\mathcal{K}})$ too, thus surely $e_1\in  p_1(E_{\mathcal{K}})$. But, since  $[e_i]^n_{i=1}$ is a cycle, the same argument shows that $e_n\in  p_1(E_{\mathcal{K}})$ and this implies that every $e_i\in e_1\in  p_1(E_{\mathcal{K}})$ for every $1\leq i \leq n$, but we already know that this is absurd.
	 \qedhere
\end{itemize}		
\end{proof}

\begin{thm}The category $\dgr$  is  both $\reg{\dgr}, \mono{\dgr}$- adhesive and $\mono{\dgr},\reg{\dgr}$-adhesive, while $\dg$ is $\rtd, \mono{\dg}$-adhesive.
\end{thm}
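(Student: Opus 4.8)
The plan is to derive all three statements from \cref{cor:varie3}, taking as ambient adhesive category $\catname{B}=\gr$. By \cref{multi}, $\gr$ is a topos, hence adhesive by the Lack and Soboci\'{n}ski theorem, that is $\mono{\gr},\arr{\gr}$-adhesive in the sense of \cref{rem:salva}. I would then apply \cref{cor:varie3} three times: with $\catname{A}=\dgr$ and the pair $(\mathcal{M}',\mathcal{N}')$ taken first as $(\reg{\dgr},\mono{\dgr})$ and then as $(\mono{\dgr},\reg{\dgr})$, and with $\catname{A}=\dg$ and $(\mathcal{M}',\mathcal{N}')=(\rtd,\mono{\dg})$. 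The closure of $\dgr$ and $\dg$ in $\gr$ under pullbacks and under the relevant pushouts is exactly the content of \cref{lemma:gpush}; since a pushout of a span does not depend on the order of its two legs, closure under $\mono{\dgr},\reg{\dgr}$-pushouts is the same assertion as closure under $\reg{\dgr},\mono{\dgr}$-pushouts, so the first two instances share the same closure input.

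Next I would discharge the containment hypotheses of \cref{cor:varie3}. Combining \cref{mono3} with \cref{mono}, every monomorphism of $\dgr$ has both underlying functions injective and is therefore a monomorphism of $\gr$ by \cref{multi}; hence $\reg{\dgr}\subseteq\mono{\dgr}\subseteq\mono{\gr}$. Downward closed morphisms are monomorphisms of $\gr$ by definition and lie in $\dg$, so $\rtd\subseteq\mono{\dg}$ and $\rtd\subseteq\mono{\gr}$. As the three $\mathcal{N}'$-classes are trivially contained in $\arr{\gr}$, all the inclusions required by the corollary hold.

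It then remains to verify that each pair $(\mathcal{M}',\mathcal{N}')$ satisfies conditions (i)--(iii) of \cref{def:class}. For the classes of all monomorphisms this is routine: they contain the isomorphisms, are closed under composition and left cancellation, and are stable under pullback in any category; their stability under the pushouts at issue follows because, by \cref{lemma:gpush}, these are computed in the adhesive category $\gr$, where monos are stable under pushout, and the resulting arrow lands back in $\mono{\dgr}$ (resp.\ $\mono{\dg}$) by \cref{mono3}. For $\reg{\dgr}$ I would use the explicit characterisation of \cref{prop:fatt} (regular $=$ injective and edge-reflecting): closure under composition and under $\reg{\dgr}$-decomposition, and the closure of $\mono{\dgr}$ under $\reg{\dgr}$-decomposition, are then short diagram chases, and stability under pullback is read off the explicit pullback of \cref{lemma:gpush} by reflecting an edge along the edge-reflecting leg and using simplicity of the codomain to identify it with a pulled-back edge. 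The class $\rtd$ is treated in the same way, directly from its defining condition.

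The step I expect to be the main obstacle is stability under pushout of $\reg{\dgr}$ and of $\rtd$. Being a regular (resp.\ downward closed) monomorphism is strictly stronger than being a monomorphism of $\gr$, so I cannot appeal to the adhesivity of $\gr$ alone; I must show that the extra property, edge-reflection (resp.\ downward-closedness), survives a pushout along an arrow of $\mathcal{N}'$. For this I would work with the componentwise pushout of \cref{lemma:gpush}, whose node and edge parts are pushouts in $\catname{Set}$, and run a case analysis on the elements of the pushout in the same spirit as the proof of \cref{lemma:gpush}, using \cref{lem:push} to trace each edge of the pushout back to the two components of the span. Once conditions (i)--(iii) are in place, \cref{cor:varie3} delivers $\reg{\dgr},\mono{\dgr}$- and $\mono{\dgr},\reg{\dgr}$-adhesivity of $\dgr$ together with $\rtd,\mono{\dg}$-adhesivity of $\dg$.
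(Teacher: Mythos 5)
Your proposal is correct and follows essentially the same route as the paper's proof: both reduce the statement to \cref{cor:varie3} (equivalently, \cref{func} applied to the full inclusions into the adhesive topos $\gr$), both feed in \cref{lemma:gpush} for closure under pullbacks and the relevant pushouts, and both finish by verifying conditions (i)--(iii) of \cref{def:class}, with the substantive work being the element-wise stability arguments for $\reg{\dgr}$ and $\rtd$ carried out via \cref{lem:push} on the componentwise (co)limits, exactly as you sketch. The only cosmetic difference is that the paper routes the $\rtd$-pushout case through the reflector $L$ (\cref{mono2}, \cref{down}), reducing it to stability of $\rta$ in $\gr$ before running the same case analysis you describe.
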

\begin{proof}
	In light of \cref{func} we only have to show that the right classes of arrows satisfies the properties of \cref{def:class}.
	Clearly all classes contains all isomorphisms and are closed under composition. $\mono{A}$ is closed under decomposition,  and $\reg{A}$ is closed under $\mono{A}$-decomposition for every category $\catname{A}$, so $\mono{\dgr}, \reg{\dgr}$ and $\reg{\dg}$ are closed under decomposition, $\reg{\dgr}$
 under $\mono{\dgr}$-decomposition, $\mono{\dgr}$ under $\reg{\dgr}$-decomposition and, finally, the class $\mono{\dg}$ under $\rtd$-decomposition.  Moreover they are all closed under pullbacks. Now for the other properties
	\begin{itemize}
		\item $\mono{\dg}$ and $\reg{\dg}$ are closed under pushout. The first one follows from \cref{mono2} and the adhesivity of $\gr$, while the second one follows from the explicit construction of pushouts in $\gr$ and in $\dgr$.
		\item $\rtd$ is stable under pullbacks. 
		It follows from the explicit construction of pullbacks. Indeed consider a pullback square with $(f,g)\in \rtd$
		\begin{center}
			\begin{tikzpicture}[scale=1]
			\node(A) at(0,0){$\mathcal{{P}}$};
			\node(B) at (1.5,0){$\mathcal{K}$};
			\node(C) at(1.5,-1.5){$\mathcal{H}$};
			\node(D) at (0,-1.5){$\mathcal{G}$};
			\draw[->](A)--(B)node[pos=0.5, above]{$(\pi_2, \pi'_2)$};
			\draw[->](D)--(C)node[pos=0.5, below]{$(h,k)$};
			\draw[->](A)--(D)node[pos=0.5, left]{$(\pi_1, \pi'_1)$};
			\draw[->](B)--(C)node[pos=0.5, right]{$(f,g)$};
			\end{tikzpicture}
		\end{center}
		Take $v_1\in \pi_{1}(V_{\mathcal{P}})$, then there exists $v_2\in V_{\mathcal{K}}$ such that $g(v_2)=k(v_1)$. Now, let $e_1\in E_\mathcal{G}$ such that $t_\mathcal{G}(e_1)=v_1$, we have
		\[t_{\mathcal{H}}(h(e_1))=k(t_{\mathcal{G}}(e_1))=k(v_1)=g(v_2)\]
		  but $(f,g)\in \rtd$, and so there exist $w_2\in V_{\mathcal{K}}$ and $e_2\in E_{\mathcal{K}}$ such that 
		  \[f(e_2)=h(e_1)\qquad g(w_2)=s_{\mathcal{H}}(h(e_1))=k(s_{\mathcal{G}}(e_1))\]
		  
		   hence $(s_{\mathcal{G}}(e_1), w_2)\in V_{\mathcal{P}}$ and $(e_1, e_2)\in E_{\mathcal{P}}$ and this means that $s_{\mathcal{G}}(e_1)\in \pi_1(V_{\mathcal{P}})$ and $e_1\in \pi'_1(E_\mathcal{P})$, i.e. that $(\pi_1, \pi'_1)\in \rtd$.
		\item $\rtd$ is stable under pushouts. 
		Take a pushout square in $\gr$
		\begin{center}
			\begin{tikzpicture}[scale=1]
			\node(A) at(0,0){$\mathcal{{H}}$};
			\node(B) at (1.5,0){$\mathcal{K}$};
			\node(C) at(1.5,-1.5){$\mathcal{P}$};
			\node(D) at (0,-1.5){$\mathcal{G}$};
			\draw[->](A)--(B)node[pos=0.5, above]{$(f_2, g_2)$};
			\draw[->](D)--(C)node[pos=0.5, below]{$(p_2,q_2)$};
			\draw[->](A)--(D)node[pos=0.5, left]{$(f_1, g_1)$};
			\draw[->](B)--(C)node[pos=0.5, right]{$(p_1,q_1)$};
			\end{tikzpicture}
		\end{center}
		we know by the proof of \cref{lemma:gpush}  that its pushout $\mathcal{P}$ does not contains cycles (even if can contain parallel edges). Applying $L$ to it we get a pushout in $\dgr$ that is acyclic, therefore, since $\dg$ is a full subcategory of $\dgr$, a pushout in $\dg$. So, by \cref{down} it is enough to show that $\rta$ is closed under pushouts. But this now follows by the description of pushouts in $\gr$.
		
		Indeed let $e\in E_{\mathcal{P}}$ such that $t_{\mathcal{P}}(e)=q_1(v)$ for some $v\in V_{\mathcal{K}}$. Suppose that $e\notin p_1(E_\mathcal{K})$, by \cref{lem:push} we know that there exists $e'\in E_{\mathcal{G}}$ such that $p_2(e')=e$, but then 
		\[q_1(v)=t_{\mathcal{P}}(e)=q_2\qty(t_{\mathcal{G}}\qty(e'))\]
		Thus there exists $w\in V_{\mathcal{H}}$ such that \[g_1(w)=t_{\mathcal{G}}\qty(e') \qquad g_2(w)=v\]		
	 Since, by hypothesis, $(f_1,g_1)$ is in $\rta$, there exists $e''\in E_{\mathcal{H}}$ such that $f_1(e'')=e'$, thus
	 \[p_1\qty(f_2\qty(e''))=p_2\qty(f_1\qty(e''))=p_2(e')=e\]
	 and $e\in p_1(E_{\mathcal{K}})$. \qedhere 
	\end{itemize}
\end{proof} 

\subsection{Tree Orders}
In this section we present \emph{trees} as partial orders and show that the resulting category is actually a topos of presheaves, hence adhesive.
This fact will be exploited in \cref{subsec:hhgraph} to construct a category of hierarchical graphs, where the hierarchy between edges is modelled by trees. 
\begin{defi}
	A \emph{tree order} is a partial order $(E, \leq)$ such that for every $e\in E$, $\pred{e}$ is a finite set totally ordered by the restriction of $\leq$. Since $\pred{e}$ is a finite chain we can define the \emph{immediate predecessor function}
	\begin{equation*}
		i_E:E\rightarrow E\sqcup \{\ast\}\qquad 
		e \mapsto \begin{cases}
			\max(\pred{e}\smallsetminus\{e\}) & \pred{e}\neq \{e\}\\
			\ast &\pred{e}=\{e\}
		\end{cases}
	\end{equation*}
	For any $k\in \mathbb{N}_+$ we can define the \emph{$k^\text{th}$ predecessor function} $i^k_E:E\rightarrow E\sqcup \{\ast\}$ by induction
	\begin{equation*}
		e \mapsto \begin{cases}
			i_E(i_E^{k-1}(e)) & i_E^{k-1}(e)\in E\\
			\ast & i_E^{k-1}(e)=\ast 
		\end{cases}
	\end{equation*}
	in which we take $i^0_E$ to be the inclusion $E\rightarrow E\sqcup\{*\}$.
	
	\noindent
	Let $f:(E, \leq)\rightarrow (F, \leq)$ be a monotone map and $f_\ast:E\sqcup \{\ast\}\rightarrow F\sqcup \{\ast\}$ be its extension sending $\ast$ to  $\ast $. 
	We say that $f$ is \emph{strict} if the following diagram commutes
	\begin{center}
		\begin{tikzpicture}
			\node(A) at(0,0){$E$};
			\node(B) at (2,0){$E\sqcup\{\ast\}$};
			\node(C) at(2,-1.5){$F\sqcup\{\ast\}$};
			\node(D) at (0,-1.5){$F$};
			\draw[->](A)--(B)node[pos=0.5, above]{$i_E$};
			\draw[->](D)--(C)node[pos=0.5, above]{$i_F$};
			\draw[->](A)--(D)node[pos=0.5, left]{$f$};
			\draw[->](B)--(C)node[pos=0.5, right]{$f_\ast$};
		\end{tikzpicture}
	\end{center}
	We define $\tree$ as the subcategory of $\catname{Poset}$ given by tree orders and strict morphisms.
\end{defi}

\begin{exa}A strict morphisms is simply a monotone function that preserves immediate predecessors (and thus every predecessor). For instance the function $\{0\}\rightarrow \{0,1\}$ sending $0$ to $1$ and where we endow the codomain with the order $0\leq 1$,  is not a strict morphism.
\end{exa}

\begin{rem}
	Clearly $i^1_E=i_E$ and it holds that $i^{k}_E(e)=\ast$ if and only if $\abs{\pred{e}}\leq k$. In this case an easy induction shows that
	$	\abs{\pred{i^{k}_E(e)}}=\abs{\pred{e}}-k$.
\end{rem}

\begin{rem}\label{forget} We have an obvious forgetful functor
	\begin{align*}\abs{-}:\tree& \rightarrow \catname{Set}\\
		\functor[l]{(E, \leq)}{f}{(F,\leq)}
		& \functormapsto
		\rfunctor{E}{f}{F}
	\end{align*}	
\end{rem}
\begin{rem}
	Let $(E, \leq)$ be an object of $\tree$ and $\omega$ the first infinite ordinal, then we can define its \emph{associated presheaf} $\widehat{E}:\omega^{op}\rightarrow \catname{Set}$ sending $n$ to the set
	\begin{equation*}
	\{e \in E \mid \abs{\pred{e}\smallsetminus\{e\}}=n\}
\end{equation*} 
If $n\leq m$ in $\omega$, we can define a function
\begin{equation*}
\iota^E_{n,m}:\widehat{E}(m)\rightarrow \widehat{E}(n)\qquad
e \mapsto i_E^{m-n}(e)
\end{equation*}
which is well defined since $\abs{\pred{e}}> m-n$ so 
\begin{align*}
\abs{\pred{i_E^{m-n}(e)}}=\abs{\pred{e}}-m+n=m+1-m+n=n+1
\end{align*} Notice that if $m=n$, $i_E^{m-n}(e)$ is the identity, while for any $k\leq n \leq m$ we have
\begin{align*}
\iota^E_{k,n}(\iota^E_{n,m}(e))=i_E^{n-k}(i_E^{m-n}(e))=i_E^{n-k+m-n}(e)=i_E^{m-k}(e)=\iota^E_{m-k}(e)
\end{align*}
so $\widehat{E}$ is really a presheaf on $\omega$.
\end{rem}
\begin{thm} There exists an equivalence of categories $\widehat{(-)}:\tree\rightarrow \catname{Set}^{\omega^{op}}$ sending $(E, \leq)$ to $\widehat{E}$.
\end{thm}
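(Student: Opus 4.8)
The plan is to show that $\widehat{(-)}$ is full, faithful, and essentially surjective. Throughout I will use that a presheaf $X:\omega^{op}\to\catname{Set}$ is the same datum as a diagram $X_0\leftarrow X_1\leftarrow X_2\leftarrow\cdots$ of sets, since $\omega$ is generated by the successor maps. First I would record that a strict morphism $f:(E,\leq)\to(F,\leq)$ preserves depth, so that $\widehat{f}$ even makes sense. Indeed, strictness says $f_\ast\circ i_E=i_F\circ f$, and iterating this identity gives $i_F^k(f(e))=f_\ast(i^k_E(e))$ for all $k$; hence if $\abs{\pred{e}\smallsetminus\{e\}}=n$ then $f(e)$ has the same value, so $f$ sends $\widehat{E}(n)$ into $\widehat{F}(n)$, and the same identity shows that the resulting family $\widehat{f}_n$ commutes with the maps $\iota^E_{n,m}$, i.e.\ it is a natural transformation $\widehat{E}\Rightarrow\widehat{F}$.

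For faithfulness and fullness, the key observation is that the underlying set of any tree order decomposes as the disjoint union $\coprod_{n}\widehat{E}(n)$, because every element has a well-defined finite depth. A strict morphism is therefore completely determined by its restrictions to the depth levels, which are precisely the components of $\widehat{f}$; this gives faithfulness. Conversely, given a natural transformation $\alpha:\widehat{E}\Rightarrow\widehat{F}$ I would define $f:E\to F$ by $f(e)=\alpha_n(e)$ for $e$ of depth $n$. Naturality of $\alpha$ reads $i_F^{m-n}(\alpha_m(e))=\alpha_n(i^{m-n}_E(e))$, which says exactly that $f$ commutes with passage to ancestors; this yields both monotonicity and strictness of $f$, and by construction $\widehat{f}=\alpha$, so $\widehat{(-)}$ is full.

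For essential surjectivity, I would reconstruct a tree order from an arbitrary diagram $X_0\xleftarrow{\rho_0}X_1\xleftarrow{\rho_1}X_2\leftarrow\cdots$. Put $E:=\coprod_n X_n$ and, for $x\in X_m$ and $y\in X_n$, declare $x\leq y$ iff $m\leq n$ and $X(m\leq n)(y)=x$, where $X(m\leq n):X_n\to X_m$ denotes the composite $\rho_m\circ\cdots\circ\rho_{n-1}$. Functoriality makes this a partial order, and for $y\in X_n$ its down-set is exactly the chain of ancestors $\{X(k\leq n)(y)\mid 0\leq k\leq n\}$, which has one element in each layer $X_0,\dots,X_n$ and is hence a finite chain of $n+1$ elements; thus $(E,\leq)$ is a tree order. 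A direct computation then identifies the depth of $x\in X_m$ with $m$ and the immediate predecessor function with the maps $\rho$, giving a natural isomorphism $\widehat{E}\cong X$.

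The routine bookkeeping (the partial-order axioms, that distinct layers stay distinct in the disjoint union, and the compatibility of the iterated predecessor functions with composition) is straightforward; the only point requiring genuine care is the back-and-forth identification of \emph{strictness} with \emph{naturality}. Everything hinges on the observation that preservation of immediate predecessors iterates to preservation of all ancestors and of depth, so that a strict map is nothing more than a depth-graded family of functions commuting with the ancestor maps --- which is exactly a morphism of presheaves on $\omega$.
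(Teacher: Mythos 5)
Your proof is correct and follows essentially the same route as the paper's: depth-preservation of strict maps via the iterated strictness identity, fullness by assembling a natural transformation's components into a map defined levelwise by depth, and essential surjectivity by endowing $\sqcup_{n\in\omega}F(n)$ with the order $x\leq y$ iff $x$ is the image of $y$ under the transition maps. The only differences are presentational (you make the decomposition $E=\coprod_n\widehat{E}(n)$ explicit for faithfulness, where the paper simply asserts it).
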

\begin{proof}
		Let $f:(E, \leq)\rightarrow (F,\leq)$ be an arrow in $\tree$, then an easy induction shows that it must send $e\in \widehat{E}(n)$ in $\widehat{F}(n)$
		\begin{itemize}
			\item if $n=0$ then $		i_{F}(f(e))=f_*(i_E(e))=*$, so
			so $\pred{f(e)}=\emptyset$ and thus $f(e)\in \widehat{F}(0)$;
			\item if $n\geq 1$ since $e\in \widehat{E}(n)$, then $i_E(e)\in \widehat{E}(n-1)$ and, by the inductive hypothesis $f(i_E(e))\in \widehat{F}(n-1)$ and $		f(i_E(e))=f_*(i_E(e))=i_F(f(e))$,
			so $i_F(f(e))\in \widehat{F}(n-1)$ and thus $f(e)\in \widehat{F}(n)$.
		\end{itemize}
		Therefore we can define 
		\begin{align*}
			\widehat{f}_{n}:\widehat{F}(n)\rightarrow \widehat{G}(n)\qquad
			e\mapsto f(e)
		\end{align*}
		and, for every $n\leq m$ and $e\in \widehat{E}(m)$ we have
		\begin{align*} 
			\widehat{f}_n({\iota^E_{n,m}}(e))=f(i^{m-n}_E(e))=f_*(i^{m-n}_E(e))=i^{m-n}_F(f(e))=\iota^F_{n,m}(\widehat{f}_n(e))
		\end{align*}
		where the middle step follows easily by induction from the definition of strict morphism. Thus we can define the functor $\widehat{(-)}:\tree \rightarrow \catname{Set}^{\omega^{op}}$, we want to show that it is an equivalence. It is clearly faithful while, for every $\eta:\widehat{E}\rightarrow \widehat{F}$, we can define
		\begin{align*}
			\overline{\eta}:(E, \leq ) \rightarrow (F, \leq )\qquad 
			e \mapsto \eta_{\abs{\pred{e}}-1}(e)
		\end{align*}
		that is easily seen to be strict. This prove fullness. For essential surjectivity: given $F:\omega^{op}\rightarrow \catname{Set}$ we define  $\widecheck{F}$ as the poset in which
		\begin{itemize}
			\item the underlying set is given by $\sqcup_{n\in \omega}F(n)$;
			\item  $x\leq y$ if and only if $x=F(l_{n,m})(y)$ where $x\in F(n)$, $y\in f(m)$ and $l_{n,m}$ is the arrow corresponding to $n\leq m$.
		\end{itemize} 
		For every $e\in \sqcup_{n\in \omega}F(n) $ it holds that
		\begin{equation*}
			\pred{e}=\{x\in \sqcup_{n\in \omega}F(n)\mid x= F(l_{m,n})(e) \text{ for some } m \leq n \}
		\end{equation*}
		and so $\abs{\pred{x}}\leq n+1$. On the other hand if $x=F(l_{m,n})(e)$ and $y=F(l_{k,n})(e)$ with, say, $m\leq k$, then
		\begin{align*}
			x=F(l_{m,n})(e)=F(l_{n,k}(l_{m,k}(e)))=F(l_{n,k}(y))
		\end{align*}
		thus $x\leq y$ and $\widecheck{F}\in \tree$.
\end{proof}

\begin{cor}\label{push2}
	$\tree$ is adhesive and the forgetful functor $\abs{-}:\tree \rightarrow \catname{Set}$ preserves all colimits.
\end{cor}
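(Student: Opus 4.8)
The plan is to read off both claims from the equivalence $\widehat{(-)}\colon \tree \to \catname{Set}^{\omega^{op}}$ established in the preceding theorem, so that essentially no new construction is needed.

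For adhesivity, I would note that the presheaf category $\catname{Set}^{\omega^{op}}$ is a topos (as recalled for $\gr$ in \cref{multi}), hence adhesive by the theorem of Lack and Sobociński quoted above. Since adhesivity is formulated purely in terms of monomorphisms, pushouts, pullbacks and the Van Kampen condition, all of which are preserved and reflected by an equivalence of categories, adhesivity transfers along $\widehat{(-)}$ from $\catname{Set}^{\omega^{op}}$ to $\tree$. Thus $\tree$ is adhesive.

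For the preservation of colimits, the first step is to identify the forgetful functor of \cref{forget} with a concrete functor on presheaves. Writing $\Sigma\colon \catname{Set}^{\omega^{op}} \to \catname{Set}$ for $\Sigma(F) := \coprod_{n\in\omega} F(n)$, the partition of $E$ into the sets $\widehat{E}(n) = \{e \in E \mid \abs{\pred{e}\smallsetminus\{e\}} = n\}$ yields a bijection $\coprod_{n\in\omega}\widehat{E}(n) \cong E = \abs{(E,\leq)}$, and since $\widehat{f}_n(e) = f(e)$ this bijection is natural in $(E,\leq)$. Hence $\abs{-} \cong \Sigma \circ \widehat{(-)}$, and because $\widehat{(-)}$ is an equivalence it preserves all colimits, so it suffices to check that $\Sigma$ does.

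To see that $\Sigma$ preserves all colimits I would use that colimits in a presheaf category are computed pointwise, so each evaluation functor $\mathrm{ev}_n\colon F\mapsto F(n)$ preserves colimits. As $\Sigma$ is the pointwise coproduct $\coprod_{n\in\omega}\mathrm{ev}_n$, and coproducts are themselves colimits in $\catname{Set}$, the interchange of colimits gives, for any diagram $D$ with colimit $C$,
\[
\Sigma(C) = \coprod_{n\in\omega} C(n) \cong \coprod_{n\in\omega} \colim_i D(i)(n) \cong \colim_i \coprod_{n\in\omega} D(i)(n) \cong \colim_i \Sigma(D(i)),
\]
so that $\Sigma$, and therefore $\abs{-}$, preserves all colimits. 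The whole argument is bookkeeping once the equivalence is in hand; the only point requiring care is the interchange step above, namely that the coproduct over $\omega$ commutes with arbitrary colimits. Since this is exactly the fact that colimits commute with colimits in $\catname{Set}$, I expect no genuine obstacle here.
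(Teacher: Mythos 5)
Your proof is correct and follows essentially the same route as the paper: adhesivity is transferred along the equivalence $\widehat{(-)}$ from the presheaf topos $\catname{Set}^{\omega^{op}}$, and $\abs{-}$ is factored (up to natural isomorphism) as the coproduct-over-$\omega$ functor composed with $\widehat{(-)}$, with colimit preservation following from pointwise computation of colimits in presheaves and the commutation of coproducts with colimits in $\catname{Set}$. The only difference is cosmetic: the paper asserts the triangle commutes on the nose, whereas you more carefully exhibit the natural bijection $\coprod_{n\in\omega}\widehat{E}(n)\cong E$.
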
	
\begin{proof}
	Let $\widehat{(-)}$ be the equivalence  constructed in the previous theorem, and define $\sqcup:\catname{Set}^{\omega^{op}} \rightarrow \catname{Set}$ as
	\begin{align*}
	\functor[l]{F}{\eta}{G}
 \functormapsto
	\rfunctor{\sqcup_{n\in \omega}F(n)}{\sqcup_{n\in \omega}\eta_n}{\sqcup_{n\in \omega}G(n)}
	\end{align*}	
	since colimits are computed component-wise in $\catname{Set}^{\omega^{op}}$ and coproducts in $\catname{Set}$ commute with colimits we get that $\sqcup$ preserves them. Now it is enough to notice that the following triangle commutes
\begin{equation*}
	\begin{tikzpicture}
	\node(A) at(0,0){$\tree$};
	\node(B) at (3,0){$\catname{Set}^{\omega^{op}}$};
	\node(C) at(1.5,-1.5){$\catname{Set}$};
	\draw[->](A)--(B)node[pos=0.5, above]{$\widehat{(-)}$};
	\draw[->](A)--(C)node[pos=0.5, left,xshift=-0.1cm, yshift=-0.1cm]{$\abs{-}$};
	\draw[->](B)--(C)node[pos=0.5, right, xshift=0.1cm,yshift=-0.1cm]{$\sqcup$};
	\end{tikzpicture}
\qedhere 
\end{equation*}
\end{proof}

\subsection{Hierarchical graphs}\label{subsec:hgraph}
We can use trees to produce a category of hierarchical graphs \cite{palacz2004algebraic}, which, in addition, can be equipped with an interface, modelled by a function into the set of nodes. Let us start with graphs.

\begin{defi} A \emph{hierarchical graphs} $\mathcal{G}$ is a  $4$-tuple $((E_\mathcal{G}, \leq), V_\mathcal{G}, s_\mathcal{G}, t_\mathcal{G})$ made by a tree order $(E_\mathcal{G}, \leq)$, a set $V_\mathcal{G}$ and functions $s_\mathcal{G},t_\mathcal{G}:E_\mathcal{G}\rightrightarrows V_\mathcal{G}$. A \emph{morphism} $\mathcal{G}\rightarrow \mathcal{H}$ is a pair $(f,g)$  with $f:(E, \leq)\rightarrow (F, \leq)$ in $\tree$ and  $g:V_\mathcal{G}\rightarrow V_\mathcal{H}$ in $\catname{Set}$ such that the following squares commute
	\begin{center}
		\begin{tikzpicture}
		\node(A) at(0,0){$E_\mathcal{G}$};
		\node(B) at (1.5,0){$V_\mathcal{G}$};
		\node(C) at(1.5,-1.5){$V_\mathcal{H}$};
		\node(D) at (0,-1.5){$E_\mathcal{G}$};
		\draw[->](A)--(B)node[pos=0.5, above]{$s_\mathcal{G}$};
		\draw[->](D)--(C)node[pos=0.5, below]{$s_\mathcal{H}$};
		\draw[->](A)--(D)node[pos=0.5, left]{$\abs{f}$};
		\draw[->](B)--(C)node[pos=0.5, right]{$g$};
		\node(A) at(3,0){$E_\mathcal{G}$};
		\node(B) at (4.5,0){$V_\mathcal{G}$};
		\node(C) at(4.5,-1.5){$V_\mathcal{H}$};
		\node(D) at (3,-1.5){$E_\mathcal{H}$};
		\draw[->](A)--(B)node[pos=0.5, above]{$t_\mathcal{G}$};
		\draw[->](D)--(C)node[pos=0.5, below]{$t_\mathcal{H}$};
		\draw[->](A)--(D)node[pos=0.5, left]{$\abs{f}$};
		\draw[->](B)--(C)node[pos=0.5, right]{$g$};
		\end{tikzpicture}
	\end{center}
This data, with componentwise composition, form a category $\catname{HGraph}$.
\end{defi}
$\catname{HGraph}$ can be realized as a comma category: take as $L$ the functor $\abs{-}:\tree\rightarrow \catname{Set}$ of \cref{forget}, while as $R$ we take $\catname{Set}\to \catname{Set}$ which sends $V$ to $V\times V$ and $f$ to $f\times f$. Applying \cref{comma} we get the following result.
\begin{thm}\label{hiergraph}
	$\catname{HGraph}$ is an adhesive category.
\end{thm}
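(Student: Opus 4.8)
The plan is to recognise $\catname{HGraph}$ as the comma category $\comma{L}{R}$ set up just before the statement, with $L=\abs{-}:\tree\to\catname{Set}$ the forgetful functor of \cref{forget} and $R:\catname{Set}\to\catname{Set}$ the squaring functor $V\mapsto V\times V$, and then to invoke \cref{comma}. Thus the only genuine work is to check the hypotheses of that theorem and to match the two classes it produces with the monomorphisms and all arrows of $\catname{HGraph}$, so that the resulting notion of adhesivity is the unqualified one.

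First I would record the input adhesivity data. By \cref{push2}, $\tree$ is adhesive, hence $\mono{Tree},\arr{Tree}$-adhesive; and $\catname{Set}$, being a topos, is adhesive, hence $\mono{Set},\arr{Set}$-adhesive. Next I would verify the two functorial conditions of \cref{comma}. On the one hand, $L=\abs{-}$ preserves all colimits by \cref{push2}, so in particular it preserves $\mono{Tree},\arr{Tree}$-pushouts. On the other hand, $R$ is the squaring functor $\pro$ of \cref{multi}, which preserves all limits and therefore pullbacks. These are exactly the hypotheses of \cref{comma}, so that theorem gives that $\catname{HGraph}=\comma{L}{R}$ is $\cma{M}{M'},\cma{N}{N'}$-adhesive, where $\mathcal{M}=\mono{Tree}$, $\mathcal{N}=\arr{Tree}$, $\mathcal{M}'=\mono{Set}$ and $\mathcal{N}'=\arr{Set}$.

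It then remains to identify these abstract classes with the expected ones. Since $\mathcal{N}=\arr{Tree}$ and $\mathcal{N}'=\arr{Set}$ are the full morphism classes, the pairs $(h,k)$ with $h\in\mathcal{N}$ and $k\in\mathcal{N}'$ are simply all arrows, so $\cma{N}{N'}=\arr{HGraph}$. For the other class, $\cma{M}{M'}$ consists of the pairs $(h,k)$ with both $h$ and $k$ monic; because $R$ preserves pullbacks, \cref{cor:mono} applies and identifies precisely these pairs with $\mono{HGraph}$. Hence $\catname{HGraph}$ is $\mono{HGraph},\arr{HGraph}$-adhesive, which by \cref{rem:salva} is exactly adhesivity.

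I do not expect a real obstacle here, since the structural content is entirely carried by \cref{comma} together with the preparatory facts about $\tree$ and $\catname{Set}$. The only step demanding care is the bookkeeping in the last paragraph: confirming that the classes $\cma{M}{M'}$ and $\cma{N}{N'}$ delivered by the comma construction coincide on the nose with $\mono{HGraph}$ and $\arr{HGraph}$, for which the characterisation of monomorphisms in \cref{cor:mono} is the decisive ingredient.
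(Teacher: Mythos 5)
Your proposal is correct and follows exactly the paper's own route: realise $\catname{HGraph}$ as the comma category $\comma{\abs{-}}{\pro}$, check that $\abs{-}$ preserves colimits (via \cref{push2}) and that the squaring functor preserves pullbacks, and then apply \cref{comma}. The only difference is that you spell out the bookkeeping — identifying $\cma{M}{M'}$ with $\mono{HGraph}$ via \cref{cor:mono} and $\cma{N}{N'}$ with $\arr{HGraph}$, then invoking \cref{rem:salva} — which the paper leaves implicit.
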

Let $\mathcal{G}$ be a hierarchical graph, we can model an \emph{interface} as a function between a set $X$ and the set of nodes $V$. Now,  $\abs{-}:\tree\rightarrow \catname{Set}$ preserves the initial objects, thus, by \cref{prop:left}, the forgetful functor $\catname{HGraph}\to \catname{Set}$, which only remembers the set of nodes, has a left adjoint $\Delta$, thus an interface is just a morphism $\Delta(X)\to \mathcal{G}$. This suggests the definition of the following category.
\begin{defi}
	The category $\catname{HIGraph}$ of \emph{hierarchical graphs with interface} is the category $\comma{\Delta}{\id{\catname{HGraph}}}$.
\end{defi}

We can give a more explicit description of $\catname{HIGraph}$. Objects are triples $(\mathcal{G}, X, f)$ made by a hierarchical graph $\mathcal{G}$, a set $X$ and a function $f:X\rightarrow E_\mathcal{G}$. A morphism $(\mathcal{G}, X, f)\rightarrow (\mathcal{G}, Y, g)$ is a triple $(h,k,l)$ with $f:(E, \leq)\rightarrow (F, \leq)$ in $\tree$, $g:V_\mathcal{G}\rightarrow V_\mathcal{H}$  and $l:X\rightarrow Y$ in $\catname{Set}$ such that the following squares commute
\begin{center}
	\begin{tikzpicture}
	\node(A) at(0,0){$E_\mathcal{G}$};
	\node(B) at (1.5,0){$V_\mathcal{G}$};
	\node(C) at(1.5,-1.5){$V_\mathcal{H}$};
	\node(D) at (0,-1.5){$E_\mathcal{H}$};
	\draw[->](A)--(B)node[pos=0.5, above]{$s_\mathcal{G}$};
	\draw[->](D)--(C)node[pos=0.5, below]{$s_\mathcal{H}$};
	\draw[->](A)--(D)node[pos=0.5, left]{$\abs{h}$};
	\draw[->](B)--(C)node[pos=0.5, right]{$k$};
	\node(A) at(3,0){$E_\mathcal{G}$};
	\node(B) at (4.5,0){$V_\mathcal{G}$};
	\node(C) at(4.5,-1.5){$V_\mathcal{H}$};
	\node(D) at (3,-1.5){$E_\mathcal{H}$};
	\draw[->](A)--(B)node[pos=0.5, above]{$t_\mathcal{G}$};
	\draw[->](D)--(C)node[pos=0.5, below]{$t_\mathcal{H}$};
	\draw[->](A)--(D)node[pos=0.5, left]{$\abs{h}$};
	\draw[->](B)--(C)node[pos=0.5, right]{$k$};
	\node(A) at(6,0){$X$};
	\node(B) at (7.5,0){$V_\mathcal{G}$};
	\node(C) at(7.5,-1.5){$V_\mathcal{H}$};
	\node(D) at (6,-1.5){$Y$};
	\draw[->](A)--(B)node[pos=0.5, above]{$f$};
	\draw[->](D)--(C)node[pos=0.5, below]{$g$};
	\draw[->](A)--(D)node[pos=0.5, left]{$l$};
	\draw[->](B)--(C)node[pos=0.5, right]{$k$};
	\end{tikzpicture}
\end{center}

Whatever description we choose, the following result now follows from \cref{comma}.
\begin{thm}\label{interface}
	$\catname{HIGraph}$ is an adhesive category.
\end{thm}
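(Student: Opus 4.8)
The plan is to recognise $\catname{HIGraph}$ as a comma category and to invoke \cref{comma}, exactly as was done for $\catname{HGraph}$ in \cref{hiergraph}. By definition $\catname{HIGraph}=\comma{\Delta}{\id{\catname{HGraph}}}$, so I would set $L:=\Delta:\catname{Set}\to\catname{HGraph}$ and $R:=\id{\catname{HGraph}}:\catname{HGraph}\to\catname{HGraph}$, with ambient category $\catname{C}=\catname{HGraph}$. The two domain categories are $\catname{A}=\catname{Set}$, which is adhesive since it is a topos, and $\catname{B}=\catname{HGraph}$, which is adhesive by \cref{hiergraph}; recall from \cref{rem:salva} that ``adhesive'' here means $\mono{A},\arr{A}$-adhesive, so I read $\catname{Set}$ as $\mono{Set},\arr{Set}$-adhesive and $\catname{HGraph}$ as $\mono{\catname{HGraph}},\arr{\catname{HGraph}}$-adhesive.

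The two functorial hypotheses of \cref{comma} are then essentially free. The functor $\Delta$ is, by its very construction via \cref{prop:left}, a left adjoint to the forgetful functor $\catname{HGraph}\to\catname{Set}$ remembering the set of nodes; hence it preserves all colimits, and in particular every pushout, so a fortiori it preserves the $\mono{Set},\arr{Set}$-pushouts demanded of $L$. On the other side, $R=\id{\catname{HGraph}}$ trivially preserves pullbacks. Thus all the assumptions of \cref{comma} are in place, and it yields that $\catname{HIGraph}$ is $\cma{M}{M'},\cma{N}{N'}$-adhesive, where each class consists of the pairs $(h,k)$ whose first component lies in the chosen class of $\catname{Set}$ and whose second lies in the chosen class of $\catname{HGraph}$.

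It then remains only to identify these induced classes with the absolute ones, so that the statement reads as plain adhesivity. Taking $\mathcal{N}=\arr{Set}$ and $\mathcal{N}'=\arr{\catname{HGraph}}$, the class $\cma{N}{N'}$ is simply all of $\arr{\catname{HIGraph}}$. Taking $\mathcal{M}=\mono{Set}$ and $\mathcal{M}'=\mono{\catname{HGraph}}$, I would invoke \cref{cor:mono}: since $R=\id{\catname{HGraph}}$ preserves pullbacks, an arrow $(h,k)$ of the comma category is monic exactly when both $h$ and $k$ are, whence $\cma{M}{M'}=\mono{\catname{HIGraph}}$. Consequently $\catname{HIGraph}$ is $\mono{\catname{HIGraph}},\arr{\catname{HIGraph}}$-adhesive, i.e.\ adhesive in the usual sense.

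I expect no genuine obstacle here: the entire content is packaged inside \cref{comma}, and the only two points needing a moment's care are (i) observing that $\Delta$, as a left adjoint, preserves the relevant pushouts, and (ii) matching $\cma{M}{M'}$ and $\cma{N}{N'}$ against ordinary monos and all arrows by means of \cref{cor:mono} and \cref{rem:salva}. Both become immediate once the comma presentation $\catname{HIGraph}=\comma{\Delta}{\id{\catname{HGraph}}}$ is made explicit.
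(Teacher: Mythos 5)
Your proof is correct and takes essentially the same route as the paper: the paper defines $\catname{HIGraph}$ as the comma category $\comma{\Delta}{\id{\catname{HGraph}}}$ and obtains the theorem directly from \cref{comma}, with $\Delta$ preserving pushouts as a left adjoint and the identity trivially preserving pullbacks. The only difference is that you make explicit the identification of $\cma{M}{M'}$ with $\mono{HIGraph}$ via \cref{cor:mono} and of $\cma{N}{N'}$ with $\arr{HIGraph}$, a step the paper leaves implicit.
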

\section{Application to some categories of hypergraphs}\label{sec:hyper}
In this section we will move from the world of graphs to the one of \emph{hypergraphs} allowing an edge to join two arbitrary subsets of nodes. Even in this case,  leveraging the modularity provided by \cref{func}, it is possible to combine sufficiently adhesive categories of preorders or graphs (modelling the hierarchy between the edges) while retaining suitable adhesivity properties.  It is worth noticing that, beside hypergraphs or interfaces, this methodology can be extended easily to other settings such as Petri nets~(see \cite{ehrig1991parallelism}).

\subsection{Hypergraphs}\label{sub:hyper}
We will start this section with the definition of (directed) hypergraph and we will see how label them with an algebraic signature. We will denote by $(-)^\star$ the monad $\catname{Set}\to \catname{Set}$ associated to the algebraic theory of monoids (i.e. the \emph{Kleene star}), moreover, given a set V, $e_V$ will be the empty word in $V^{\star}$

\begin{defi}A \emph{hypergraph} is a 4-uple $\mathcal{G}:(E_\mathcal{G}, V_\mathcal{G}, s_\mathcal{G}, t_\mathcal{G})$ given by two sets $E_\mathcal{G}$ and $V_\mathcal{G}$, whose elements are called respectively \emph{hyperedges} and \emph{nodes}, pluse two \emph{source} and \emph{target}  functions $s_\mathcal{G}, t_\mathcal{G}:E_\mathcal{G}\rightrightarrows V_\mathcal{G}^\star$. A \emph{hypergraph morphism} $(E_\mathcal{G}, V_\mathcal{G}, s_\mathcal{G}, t_\mathcal{G})\to (E_\mathcal{H}, V_\mathcal{H}, s_\mathcal{H}, t_\mathcal{H})$ is a pair $(h,k)$ of functions $h:E_\mathcal{G}\to E_\mathcal{H}$, $k:V_\mathcal{G}\to V_\mathcal{H}$ such that the following diagram commute.
	 \begin{center}
	 	\begin{tikzpicture}
	 	\node(A) at(2,0){$E_\mathcal{G}$};
	 	\node(B) at (3.5,0){$V_\mathcal{G}^{\star}$};
	 	\node(C) at(3.5,-1.5){$\mathcal{V_\mathcal{H}}^{\star}$};
	 	\node(D) at (2,-1.5){$E_\mathcal{H}$};
	 	\draw[->](A)--(B)node[pos=0.5, above]{$s_\mathcal{G}$};
	 	\draw[->](D)--(C)node[pos=0.5, below]{$s_\mathcal{H}$};
	 	\draw[->](A)--(D)node[pos=0.5, left]{$h$};
	 	\draw[->](B)--(C)node[pos=0.5, right]{$k^{\star}$};
	 	\node(A) at(5,0){$E_\mathcal{G}$};
	 	\node(B) at (6.5,0){$V_\mathcal{G}^{\star}$};
	 	\node(C) at(6.5,-1.5){$V_\mathcal{H}^{\star}$};
	 	\node(D) at (5,-1.5){$E_\mathcal{H}$};
	 	\draw[->](A)--(B)node[pos=0.5, above]{$t_\mathcal{G}$};
	 	\draw[->](D)--(C)node[pos=0.5, below]{$t_\mathcal{H}$};
	 	\draw[->](A)--(D)node[pos=0.5, left]{$h$};
	 	\draw[->](B)--(C)node[pos=0.5, right]{$k^\star$};
	 	\end{tikzpicture}
	 \end{center}	 
 We define $\hyp$ to be the resulting category.
\end{defi}

\begin{notaz}Given a set $X$, $\lgt_X:X^{\star}\to \mathbb{N}$ is the function which sends a word to its length. Notice that for every function $f:X\to Y$, the following diagram commutes
		\begin{center}
		\begin{tikzpicture}
		\node(A) at(0,0){$X^\star$};
		\node(B) at (3,0){$Y^\star$};
		\node(C) at(1.5,-1.5){$\mathbb{N}$};
		\draw[->](A)--(B)node[pos=0.5, above]{$f^\star$};
		\draw[->](A)--(C)node[pos=0.5, left, xshift=-0.1cm]{$\lgt_X$};
		\draw[->](B)--(C)node[pos=0.55, right, xshift=0.1cm]{$\lgt_Y$};	
		\end{tikzpicture}
	\end{center}
$e_X$ will denote the empty word in $X^\star$, moreover given $x\in X^\star\smallsetminus\{e_X\}$ and $1\leq n\leq \lgt_X(x)$, $x_n$ is the $n^\mathrm{th}$ letter of $x$.
\end{notaz}

It's easy to see that this definition is exactly the definition of the comma category $\comma{\id{\catname{Set}}}{R}$ where $R:\catname{Set}\to \catname{Set}$ is the functor
\begin{align*}
\functor[l]{X}{f}{Y}
\functormapsto
\rfunctor{X^\star \times X^\star}{f^\star \times f^\star }{Y^\star \times Y^{\star}}
\end{align*}  

We can also notice that the monoid monad $(-)^{\star}:\catname{Set}\to \catname{Set}$ is \emph{cartesian}, i.e. preserves all connected limits. This in turn rests upon the fact that the theory of monoids is a \emph{strongly regular theory} (see \cite[Sec. 3]{carboni1995connected}  and \cite[Ch.4]{leinster2004higher} for details). In particular it preserves pullbacks, thus we can apply \cref{comma,cor:mono}.

\begin{prop}\label{prop:hypadh}
		$\hyp$ is an adhesive category.
\end{prop}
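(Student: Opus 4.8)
The plan is to read off $\hyp$ as the comma category $\comma{\id{\catname{Set}}}{R}$ exhibited just above, with $R(X)=X^\star\times X^\star$, and then invoke \cref{comma}. To set up the hypotheses, I take both factors to be $\catname{Set}$ equipped with the classes $\mathcal{M}=\mathcal{M}'=\mono{Set}$ and $\mathcal{N}=\mathcal{N}'=\arr{Set}$; by \cref{rem:salva} the resulting $\mono{Set},\arr{Set}$-adhesivity is precisely ordinary adhesivity, which holds since $\catname{Set}$ is a topos.

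Next I would check the two functorial conditions of \cref{comma}. Since $L=\id{\catname{Set}}$ is the identity it preserves every colimit, in particular all $\mathcal{M},\mathcal{N}$-pushouts. For $R$ I would argue pullback preservation by factoring $R$ as $(-)^\star$ followed by the squaring functor $Y\mapsto Y\times Y$: the monoid monad $(-)^\star$ is cartesian, hence preserves pullbacks, and the squaring functor preserves pullbacks because limits commute with limits (equivalently, $Y\mapsto Y\times Y$ is $\Delta$ followed by the binary product, both continuous). Thus $R$ preserves pullbacks and \cref{comma} applies, giving that $\hyp$ is $\cma{M}{M'},\cma{N}{N'}$-adhesive, where $\cma{M}{M'}$ consists of the pairs $(h,k)$ with $h,k\in\mono{Set}$ and $\cma{N}{N'}$ of the pairs $(h,k)$ with $h,k\in\arr{Set}$.

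Finally I would identify these classes with the intended ones. By definition $\cma{N}{N'}=\arr{Hyp}$, the class of all hypergraph morphisms. For $\cma{M}{M'}$ I would appeal to \cref{cor:mono}, which is available precisely because $R$ preserves pullbacks: it tells us that $(h,k)$ is a monomorphism in $\comma{\id{\catname{Set}}}{R}$ if and only if both $h$ and $k$ are injective, so $\cma{M}{M'}=\mono{Hyp}$. Hence $\hyp$ is $\mono{Hyp},\arr{Hyp}$-adhesive, which by \cref{rem:salva} is exactly adhesivity.

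I do not expect a genuine obstacle here: the whole argument is an instantiation of \cref{comma}, and the only substantive point is the pullback-preservation of $R$, which the preceding discussion has already secured via the cartesianness of $(-)^\star$. The only care needed is the bookkeeping of the two classes, so that the abstract conclusion of \cref{comma} is recognised as plain adhesivity rather than a proper $\mathcal{M},\mathcal{N}$-variant.
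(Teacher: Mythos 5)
Your proof is correct and takes essentially the same route as the paper: the paper likewise realises $\hyp$ as the comma category $\comma{\id{\catname{Set}}}{R}$, observes that cartesianness of $(-)^\star$ makes $R$ pullback-preserving, and then applies \cref{comma,cor:mono}. The only difference is that you spell out the class bookkeeping identifying $\cma{M}{M'}$ with $\mono{Hyp}$ and $\cma{N}{N'}$ with $\arr{Hyp}$, which the paper leaves implicit.
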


\begin{rem}\label{rem:mono}
Preservation of connected limits implies that $(-)^\star$ sends monos to monos.
\end{rem}

 \cref{prop:left} allows us to deduce immediately the following.

\begin{prop}\label{cor:left}
	The forgetful functor $U_{\catname{Hyp}}:\hyp \to \catname{Set}$ which sends an hypergraph $\mathcal{G}$ to its set of nodes has a left adjoint $\Delta_{\hyp}$.
\end{prop}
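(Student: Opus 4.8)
The plan is to read off this statement as an immediate instance of \cref{prop:left}. Recall that just above we identified $\hyp$ with the comma category $\comma{\id{\catname{Set}}}{R}$, where $R$ sends a set $X$ to $X^\star\times X^\star$; under this identification the component that $U_{\catname{Hyp}}$ remembers (the set of nodes) is exactly the $\catname{B}$-component, so $U_{\catname{Hyp}}$ coincides with the forgetful functor $U_R\colon \comma{\id{\catname{Set}}}{R}\to\catname{Set}$. The whole point is therefore to recognise the situation as the one governed by \cref{prop:left} and check its hypotheses.

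First I would verify the two hypotheses of \cref{prop:left} for the functor $L=\id{\catname{Set}}$. The domain category of $L$ is $\catname{Set}$, which has an initial object, namely the empty set $\emptyset$; and $L$, being the identity, trivially preserves it. Both conditions hold with no work, so \cref{prop:left} applies directly.

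Having checked the hypotheses, \cref{prop:left} yields a left adjoint $\Delta_{\hyp}$ to $U_R=U_{\catname{Hyp}}$. Tracing the construction given in the proof of \cref{prop:left}, this adjoint sends a set of nodes $V$ to the triple $(\emptyset, V, !)$, i.e.\ to the hypergraph with node set $V$ and no hyperedges at all, the unique map $!\colon\emptyset\to V^\star\times V^\star$ supplying the (vacuous) source and target data. Since all the genuine content already lives in \cref{prop:left}, there is no real obstacle here; the only thing to be careful about is matching the roles of $L$, $R$, and the two projections so that $U_{\catname{Hyp}}$ is correctly recognised as $U_R$ rather than $U_L$.
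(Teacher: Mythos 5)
Your proposal is correct and matches the paper's own argument exactly: the paper also deduces this proposition as an immediate instance of \cref{prop:left}, using the identification of $\hyp$ with $\comma{\id{\catname{Set}}}{R}$ and the fact that $\catname{Set}$ has an initial object preserved by the identity. Your description of $\Delta_{\hyp}(X)$ as the hypergraph with node set $X$ and no hyperedges is likewise precisely the paper's subsequent remark.
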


\begin{rem}Since the initial object of $\catname{Set}$ is the empty set,  $\Delta_{\hyp}(X)$ is the hypergraph which has $X$ as set of nodes and $\emptyset$ as set of hyperedges.
\end{rem}

Hypergraphs, as normal graphs, can be represented graphically. We will use dots to denote nodes and squares to denote hyperedges, the name of a node or of an hyperedge will be put near the corresponding dot or square. Sources and targets are represented by lines between dots and squares: the lines from the sources of an hyperedge will enter its square from the left, while the lines to the targets will exit it from the right, we will adopt the convention for which sources and targets are ordered from the top to the bottom. We can now illustrate this giving some example.
\begin{exa}Take $V$ to be be $\{v_1, v_2, v_3, v_4, v_5\}$ and $E$ to be $\{h_1, h_2, h_3\}$. Sources and targets are given by:
\[	s(h_1)=v_1v_2\quad
s(h_2)=v_3v_4\quad
s(h_3)=v_5 \qquad 
t(h_1)=v_3v_4\quad 
t(h_2)=v_5\quad
t(h_3)=e_V
\]	
	We can draw the resulting $\mathcal{G}$ as follows:
	\begin{center}\begin{tikzpicture}
		\node[circle,fill=black,inner sep=0pt,minimum size=6pt,label=above:{$v_1$}] (A) at (0,0) {};
		\node[circle,fill=black,inner sep=0pt,minimum size=6pt,label=above:{$v_2$}] (B) at (0,-1.5) {};
		\node[circle,fill=black,inner sep=0pt,minimum size=6pt,label=above:{$v_3$}] (C) at (3,0) {};
		\node[circle,fill=black,inner sep=0pt,minimum size=6pt,label=above:{$v_4$}] (D) at (3,-1.5) {};
		\node[circle,fill=black,inner sep=0pt,minimum size=6pt,label=above:{$v_5$}] (E) at (6,-0.75) {};
		\draw[rounded corners] (1.25, -1) rectangle (1.75, -0.5) {};
		\draw[-](4.75,-0.75)--(E);
		\draw[-](7,-0.75)--(E);
		\draw[rounded corners] (4.25, -1) rectangle (4.75, -0.5) {};
		\node at (4.5, -0.3){$h_2$};
		\node at (1.5, -0.3){$h_1$};
		\node at (7.25, -0.3){$h_3$};
		\draw[rounded corners] (7, -1) rectangle (7.5, -0.5) {};
		\draw(A)..controls(0.5,0)and(1.2,-0.2)..(1.25,-0.6);
		\draw(B)..controls(0.5,-1.5)and(1.2,-1.3)..(1.25,-0.9);
		
		\draw(C)..controls(3.5,0)and(4.2,-0.25)..(4.25,-0.6);
		\draw(D)..controls(3.5,-1.5)and(4.2,-1.3)..(4.25,-0.9);
		
		\draw(D)..controls(2.5,-1.5)and(1.8,-1.3)..(1.75,-0.9);
		\draw[-](C)..controls(2.5,0)and(1.8,-0.25)..(1.75,-0.6);
		\end{tikzpicture}
	\end{center}
\end{exa}
\begin{exa}\label{exa_2} Let $V$ be as in the previous example and $E=\{h_1, h_2, h_3\}$.	Then we define
\[
s(h_1)=e_V \quad s(h_2)=v_1v_2\quad s(h_3)=v_3v_4\qquad t(h_1)=v_1 \quad t(h_2)=v_3\quad t(h_3)=v_5
\]	
	Now we can depict $\mathcal{G}$ as
	\begin{center}\begin{tikzpicture}
		\node[circle,fill=black,inner sep=0pt,minimum size=6pt,label=above:{$v_1$}] (A) at (0,0) {};
		\node[circle,fill=black,inner sep=0pt,minimum size=6pt,label=above:{$v_2$}] (B) at (0,-1.5) {};
		\node[circle,fill=black,inner sep=0pt,minimum size=6pt,label=above:{$v_3$}] (C) at (3,-0.75) {};
		\node[circle,fill=black,inner sep=0pt,minimum size=6pt,label=above:{$v_4$}] (D) at (3,-2.25) {};
		\node[circle,fill=black,inner sep=0pt,minimum size=6pt,label=right:{$v_5$}] (E) at (6,-1.5) {};
		\draw[-](1.75,-0.75)--(C);
		\draw[rounded corners] (1.25, -1) rectangle (1.75, -0.5) {};
		\draw[-](4.75,-1.5)--(E);
		\draw[-](-1.5,0)--(A);
		\draw[rounded corners] (4.25, -1.75) rectangle (4.75, -1.25) {};
		\node at (4.5, -1.05){$h_3$};
		\node at (1.5, -0.3){$h_2$};
		\node at (-1.75, 0.45){$h_1$};
		\draw[rounded corners] (-2, -0.25) rectangle (-1.5, 0.25) {};
		\draw(A)..controls(0.5,0)and(1.2,-0.2)..(1.25,-0.6);
		\draw(B)..controls(0.5,-1.5)and(1.2,-1.3)..(1.25,-0.9);
		
		\draw(C)..controls(3.5,-0.75)and(4.2,-0.95)..(4.25,-1.35);
		\draw(D)..controls(3.5,-2.25)and(4.2,-2.05)..(4.25,-1.65);
		\end{tikzpicture}
	\end{center}
\end{exa}

\begin{exa}\label{exa_3} Let $\Sigma=(O, \ari)$ be an algebraic signature ($O$ is a set and $\ari:O\rightarrow \mathbb{N}$ a function called \emph{arity function}), we can construct the hypergraph $\mathcal{G}^\Sigma$ taking $V$ and $E$ to be respectively the singleton ${v}$ and the set $O$. We put
	\begin{align*}
s_{\mathcal{G}^\Sigma}:O\to \{v\}^\star \quad o\mapsto v^{\ari(o)}
\qquad 
t_{\mathcal{G}^\Sigma}:O\to \{v\}^\star \quad o\mapsto v
\end{align*}
	
	For instance let $\Sigma$ be the signature of groups $(\{m, i, e \}, \ari)$ with
	\[\ari(m)=2 \quad \ari(i)=1 \quad \ari(e)=0\]	
	Then $\mathcal{G}^\Sigma$ is depicted as:
	\begin{center}
		\begin{tikzpicture}
		\node[circle,fill=black,inner sep=0pt,minimum size=6pt,label=above:{$v$}] (V) at (0,0) {};
		\node(E)at(-2, 1.4){$e$};
		\node(M)at(0, 2.15){$m$};
		\node(I)at(2, 1.5){$i$};
		\draw[-](V)..controls(-1.2,0.1)..(-1.75,1);
		\draw[-](V)..controls(-0.5,0.5)and(-0.8,1)..(-0.25,1.6);
		\draw[-](V)..controls(-1,0.6)and(-1,1.1)..(-0.25,1.9);
		\draw[-](V)..controls(0.5,0.5)and(0.8,0.8)..(0.25,1.75);
		\draw[-](V)..controls(1.2,0.1)..(1.75,1);
		\draw[-](2.25,1)..controls(2.8,1)and(2.5,0)..(V);
		\draw[rounded corners] (-2.25, 0.75) rectangle (-1.75, 1.25) {};
		\draw[rounded corners] (-0.25, 1.5) rectangle (0.25, 2) {};
		\draw[rounded corners] (2.25, 0.75) rectangle (1.75, 1.25) {};
		\end{tikzpicture}
	\end{center}
\end{exa}

This last example is useful in order to label hyperedges with operations.
\begin{defi}Let $\Sigma=(O, \ari)$ be an algebraic signature, the category $\hyps$ of \emph{labeled hypergraphs} is the slice category $\hyp/G^\Sigma$.
\end{defi}
\cref{cor:mono} and \cref{cor:slice} give us immediately an adhesivity result for $\hyp_{\Sigma}$ and a characterization of monomorphisms in it.
\begin{prop}\label{prop:mono}
	For every algebraic signature $\Sigma$, $\hyps$ is an adhesive category. Moreover a morphism $(h,k)$ between two object of $\hyp_{\Sigma}$ is a mono if and only if $h$ and $k$ are injective functions.
\end{prop}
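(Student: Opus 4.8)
Both assertions are meant to drop out of \cref{cor:slice,cor:mono}, so the plan is to combine these with the standard fact that slicing neither creates nor destroys monomorphisms. First I would recall from \cref{prop:hypadh} that $\hyp$ is adhesive, which by \cref{rem:salva} is the same as being $\mono{Hyp}$-adhesive. Since $\hyps$ is by definition the slice $\hyp/\mathcal{G}^\Sigma$, applying \cref{cor:slice} with $\mathcal{M}=\mathcal{N}=\mono{Hyp}$ yields at once that $\hyps$ is $\mono{Hyp}/\mathcal{G}^\Sigma, \mono{Hyp}/\mathcal{G}^\Sigma$-adhesive.

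To upgrade this to plain adhesivity I would verify that the class $\mono{Hyp}/\mathcal{G}^\Sigma$ of slice arrows whose underlying $\hyp$-morphism is monic coincides with $\mathsf{Mono}(\hyps)$. The domain functor $U:\hyp/\mathcal{G}^\Sigma\to\hyp$ is faithful, hence reflects monomorphisms; for preservation, given a slice mono $(h,k)$ and a parallel pair $u,v$ in $\hyp$ with $U(h,k)\circ u = U(h,k)\circ v$, one lifts $u$ and $v$ to arrows into the relevant slice object (both land over the same structure map precisely because $(h,k)$ is a slice morphism) and left-cancels $(h,k)$. Thus $U$ both reflects and preserves monos, giving $\mono{Hyp}/\mathcal{G}^\Sigma=\mathsf{Mono}(\hyps)$, and hence the sought adhesivity of $\hyps$.

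For the characterization of monomorphisms I would use that $\hyp=\comma{\id{\catname{Set}}}{R}$ with $R(X)=X^\star\times X^\star$. As already observed, $(-)^\star$ is cartesian and so preserves pullbacks, and binary products do too, whence $R$ preserves pullbacks; \cref{cor:mono} then says that $(h,k)$ is monic in $\hyp$ if and only if $h$ and $k$ are monos in $\catname{Set}$, i.e.\ injective functions. Composing this with the slice identification of the previous paragraph, a morphism $(h,k)$ of $\hyps$ is monic iff it is monic in $\hyp$ iff $h$ and $k$ are injective. The only step requiring genuine argument -- and thus the main obstacle -- is the slice-monomorphism identification; everything else is a direct appeal to \cref{cor:slice,cor:mono} and to the already-noted cartesianness of the Kleene-star monad.
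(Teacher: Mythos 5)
Your overall route---\cref{cor:slice} for the adhesivity of the slice, \cref{cor:mono} for the characterisation of monomorphisms, glued together by the fact that the domain functor $\hyps\to\hyp$ preserves and reflects monos---is exactly the paper's route (the paper states the result as an immediate consequence of \cref{cor:slice,cor:mono}, leaving the slice-mono identification implicit), and your arguments for that identification and for the pullback-preservation of $R$ are correct. However, there is one genuine flaw: you instantiate \cref{cor:slice} with $\mathcal{M}=\mathcal{N}=\mono{Hyp}$. By \cref{rem:salva}, adhesivity of $\hyp$ means $\mono{Hyp}$-adhesivity, which unfolds to $\mono{Hyp},\arr{Hyp}$-adhesivity: the class $\mathcal{N}$ must be the class of \emph{all} morphisms. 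Taking $\mathcal{N}=\mono{Hyp}$ instead, \cref{cor:slice} only yields that $\hyps$ is $\mono{Hyp}/\mathcal{G}^\Sigma,\mono{Hyp}/\mathcal{G}^\Sigma$-adhesive, i.e.\ existence and the Van Kampen property are guaranteed only for pushouts of spans in which \emph{both} legs are monic. This is strictly weaker than adhesivity, which demands them for every span with just one monic leg (condition (b) of \cref{def:class} with $\mathcal{N}=\arr{A}$); so ``the sought adhesivity of $\hyps$'' does not follow from what you have actually derived.

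The repair is a one-line change: apply \cref{cor:slice} with $\mathcal{M}=\mono{Hyp}$ and $\mathcal{N}=\arr{Hyp}$, which is the pair witnessing adhesivity of $\hyp$ (this pair satisfies conditions (i)--(iii) of \cref{def:class}, since monos are stable under pushout in any adhesive category, as recalled in \cref{rem:salva}, and the class of all morphisms is trivially closed under everything). Then $\arr{Hyp}/\mathcal{G}^\Sigma$ is the class of all morphisms of $\hyps$, and your identification $\mono{Hyp}/\mathcal{G}^\Sigma=\mathsf{Mono}(\hyps)$ turns the conclusion into $\mathsf{Mono}(\hyps),\mathsf{Mor}(\hyps)$-adhesivity, i.e.\ genuine adhesivity. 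The remainder of your proof---the characterisation of monos in $\hyp$ via \cref{cor:mono}, using that $(-)^\star$ is cartesian so that $R$ preserves pullbacks, and its transfer along the slice identification---is correct as written and coincides with the paper's intended argument.
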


$\hyp_{\Sigma}$ has a forgetful functor $U_{\Sigma}:\hyp_{\Sigma}\to \catname{Set}$ which sends $h:\mathcal{H}\to \mathcal{G}^{\Sigma}$ to $U_{\hyp}(\mathcal{H}$). Now, $U_{\hyp}(\mathcal{G}^{\Sigma})=\{v\}$ thus, for every set $X$, there is only one arrow $X\to U_{\hyp}(\mathcal{G}^{\Sigma})$. Define $\Delta_{\Sigma}(X):\Delta_{\hyp}(X)\to \mathcal{G}^{\Sigma}$ to be the transpose of this arrow.

\begin{prop} $U_\Sigma$
	has a left adjoint $\Delta_\Sigma$.
\end{prop}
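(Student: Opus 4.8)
The plan is to deduce the adjunction $\Delta_\Sigma\dashv U_\Sigma$ from the already-established adjunction $\Delta_{\hyp}\dashv U_{\hyp}$ of \cref{cor:left}, by showing that passing to the slice over $\mathcal{G}^\Sigma$ costs nothing. Write $\eta$, $\epsilon$ for the unit and counit of $\Delta_{\hyp}\dashv U_{\hyp}$, so that, as in the statement, $\Delta_\Sigma(X)=\epsilon_{\mathcal{G}^\Sigma}\circ\Delta_{\hyp}(!_X)$ is the transpose of the unique map $!_X\colon X\to\{v\}=U_{\hyp}(\mathcal{G}^\Sigma)$. The goal is a bijection, natural in $X\in\catname{Set}$ and in $h\colon\mathcal{H}\to\mathcal{G}^\Sigma$,
\[
\hyps(\Delta_\Sigma(X),h)\;\cong\;\catname{Set}(X,U_\Sigma(h))=\catname{Set}(X,U_{\hyp}(\mathcal{H})).
\]

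First I would unfold the left-hand side: a morphism $\Delta_\Sigma(X)\to h$ in $\hyps=\hyp/\mathcal{G}^\Sigma$ is a $\hyp$-morphism $\phi\colon\Delta_{\hyp}(X)\to\mathcal{H}$ making the triangle over $\mathcal{G}^\Sigma$ commute, i.e.\ $h\circ\phi=\Delta_\Sigma(X)$. The key step --- and the only one requiring an argument --- is that this triangle condition is automatic. Transposing $h\circ\phi$ along $\Delta_{\hyp}\dashv U_{\hyp}$ gives $U_{\hyp}(h)\circ\overline{\phi}\colon X\to\{v\}$, where $\overline{\phi}$ is the transpose of $\phi$; since $\{v\}=U_{\hyp}(\mathcal{G}^\Sigma)$ is terminal in $\catname{Set}$, this map is forced to be $!_X$, whose transpose is precisely $\Delta_\Sigma(X)$. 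As transposition is a bijection, $h\circ\phi=\Delta_\Sigma(X)$ holds for \emph{every} $\phi$, so $\hyps(\Delta_\Sigma(X),h)$ coincides with $\hyp(\Delta_{\hyp}(X),\mathcal{H})$. Composing with the adjunction isomorphism $\hyp(\Delta_{\hyp}(X),\mathcal{H})\cong\catname{Set}(X,U_{\hyp}(\mathcal{H}))$ then yields the desired bijection, and its naturality in $X$ and $h$ is inherited from that of $\Delta_{\hyp}\dashv U_{\hyp}$ together with the functoriality of the forgetful functor $\hyps\to\hyp$.

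Conceptually, this is an instance of the standard fact that an adjunction $F\dashv G$ between categories $\catname{C}$ and $\catname{D}$ lifts to an adjunction between the slices $\catname{C}/G(d)$ and $\catname{D}/d$ for any $d\in\catname{D}$; here one takes $F=\Delta_{\hyp}$, $G=U_{\hyp}$ and $d=\mathcal{G}^\Sigma$, and uses that $\catname{Set}/\{v\}\simeq\catname{Set}$ because $\{v\}$ is terminal, so that the induced right adjoint is exactly $U_\Sigma$. I expect the main (indeed essentially the only) obstacle to be the verification that the slice triangle commutes automatically, which as noted above rests entirely on the terminality of $U_{\hyp}(\mathcal{G}^\Sigma)$; the remaining steps are a routine transport of the existing adjunction along the slice construction.
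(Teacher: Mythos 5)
Your proof is correct, and its overall strategy coincides with the paper's: both define $\Delta_\Sigma(X)$ as the transpose of $!_X\colon X\to\{v\}$ and lift the adjunction $\Delta_{\hyp}\dashv U_{\hyp}$ of \cref{cor:left} to the slice $\hyps=\hyp/\mathcal{G}^\Sigma$ by showing that the triangle over $\mathcal{G}^\Sigma$ commutes automatically, thanks to the terminality of $U_{\hyp}(\mathcal{G}^\Sigma)=\{v\}$. Where you genuinely diverge is in how that vacuity is established. The paper argues concretely: since $\Delta_{\hyp}(X)$ has an empty set of hyperedges, the edge components of any two morphisms $\Delta_{\hyp}(X)\to\mathcal{G}^\Sigma$ agree trivially, so commutativity of the triangle in $\hyp$ is \emph{equivalent} to commutativity of its image under $U_{\hyp}$ in $\catname{Set}$, which holds by terminality. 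You argue purely formally: both $h\circ\phi$ and $\Delta_\Sigma(X)$ transpose to the unique map $X\to\{v\}$, and injectivity of the transposition bijection forces $h\circ\phi=\Delta_\Sigma(X)$; equivalently, $\hyp(\Delta_{\hyp}(X),\mathcal{G}^\Sigma)\cong\catname{Set}(X,\{v\})$ is a singleton. Your version never touches the concrete description of $\Delta_{\hyp}(X)$, and, as you observe, it is an instance of the general fact that an adjunction $F\dashv G$ lifts to slices $\catname{C}/G(d)$ and $\catname{D}/d$, collapsing to the base adjunction whenever $G(d)$ is terminal. What this buys is reusability: the identical argument applies verbatim to, e.g., the colored version of $\mathcal{G}^\Sigma$ mentioned in the paper or any similar labelling construction, whereas the paper's argument is tied to the specific shape of $\Delta_{\hyp}(X)$; what the paper's version buys is that it stays entirely at the elementary level of hypergraph components, with no appeal to the calculus of adjoint transposes.
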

\begin{proof}Let $h:\mathcal{H}\to \mathcal{G}^{\Sigma}$ be an object of $\hyp_{\Sigma}$, and suppose that there exists $f:X\to U_{\Sigma}(\mathcal{H})$. Since $U_{\Sigma}(\mathcal{H})=U_{\hyp}(\mathcal{H})$ and $\id{\catname{Set}}$ is the unit of $\Delta\dashv U_{\hyp}$, there exists a unique morphism of $\hyp$ $(k,f):\Delta_{\hyp}(X)\to \mathcal{H}$. Since the set of hyperedges of $\Delta_{\hyp}(X)$ is empty, $k$ must be the empty function and the commutativity of each of the two triangles below is equivalent to that of the other
	\begin{center}
		\begin{tikzpicture}
		\node(A) at(0,0){$\Delta_{\hyp}(X)$};
		\node(B) at (3,0){$\mathcal{H}$};
		\node(C) at(1.5,-1.5){$\mathcal{G}^{\Sigma}$};
		\draw[->](A)--(B)node[pos=0.5, above]{$(k,f)$};
		\draw[->](A)--(C)node[pos=0.5, left, xshift=-0.1cm]{$\Delta_\Sigma(X)$};
		\draw[->](B)--(C)node[pos=0.55, right, xshift=0.1cm]{$(h, !_{V_\mathcal{H}})$};

		\node(A) at(6,0){$U_{\hyp}\qty(\Delta_{\hyp}(X))$};
		\node(B) at (9,0){$U_{\hyp}\qty(\mathcal{H})$};
		\node(C) at(7.5,-1.5){$U_{\hyp}\qty(\mathcal{G}^{\Sigma})$};
		\draw[->](A)--(B)node[pos=0.5, above]{$f$};
		\draw[->](A)--(C)node[pos=0.5, left, xshift=-0.1cm]{$U_{\hyp}(\Delta_\Sigma(X))$};
		\draw[->](B)--(C)node[pos=0.55, right, xshift=0.1cm]{$h$};	
		\end{tikzpicture}
	\end{center}
	But the triangle on the right commutes because $U_{\hyp}(\mathcal{G}^{\Sigma})$ is terminal.
\end{proof}

A more concrete definition of a labeled hypergraphs can be given. Let $\mathcal{H}=(E, V, s, t)$ be an hypergraph, since  $U_{\hyp}(\mathcal{G}^{\Sigma})$ is the singleton an arrow $\mathcal{H}\rightarrow \mathcal{G}^{\Sigma}$, is determined by a function $f:E\to O$  such that $\ari(h(e))$ is equal to the length of $s(e)$. 

\begin{rem}\label{rem:label}
	If  $\mathcal{H}$ has an hyperedge $h$ such that $t_{\mathcal{H}}(h)$ has a length different from $1$, then there is no morphism $\mathcal{H}\to \mathcal{G}^{\Sigma}$. Indedd, if such a morphism $(f,!_{V_\mathcal{H}}):\mathcal{H}\to \mathcal{G}^\Sigma$ exists, then, for every $h\in E_{\mathcal{H}}$ we have
	\[f^{\star}(t_{\mathcal{H}}(h))=t_{\mathcal{G}^{\Sigma}}(f(h))=v\] 
\end{rem}

We will extend our graphical notation of hypergraph to labeled ones putting the label of an hyperedge $h$ inside its corresponding square.
\begin{exa}\label{lab_1}
	The simplest example is given by the identity $\id{\mathcal{G}^\Sigma}:\mathcal{G}^\Sigma\rightarrow \mathcal{G}^{\Sigma}$. If $\Sigma$ is the signature of groups we get \begin{center}
		\begin{tikzpicture}
		\node[circle,fill=black,inner sep=0pt,minimum size=6pt,label=above:{$v$}] (V) at (0,0) {};
		\node at(-2,1){$e$};
		\node at(0,1.75){$m$};	
		\node at(2,1){$i$};	
		\node(E)at(-2, 1.4){$e$};
		\node(M)at(0, 2.15){$m$};
		\node(I)at(2, 1.45){$i$};
		\draw[-](V)..controls(-1.2,0.1)..(-1.75,1);
		\draw[-](V)..controls(-0.5,0.5)and(-0.8,1)..(-0.25,1.6);
		\draw[-](V)..controls(-1,0.6)and(-1,1.1)..(-0.25,1.9);
		\draw[-](V)..controls(0.5,0.5)and(0.8,0.8)..(0.25,1.75);
		\draw[-](V)..controls(1.2,0.1)..(1.75,1);
		\draw[-](2.25,1)..controls(2.8,1)and(2.5,0)..(V);
		\draw[rounded corners] (-2.25, 0.75) rectangle (-1.75, 1.25) {};
		\draw[rounded corners] (-0.25, 1.5) rectangle (0.25, 2) {};
		\draw[rounded corners] (2.25, 0.75) rectangle (1.75, 1.25) {};
		\end{tikzpicture}
	\end{center}
\end{exa}

\begin{exa}\label{lab_2}
	Take again $\Sigma$ the signature of groups, then the hypergraph $\mathcal{G}$ of \cref{exa_2} can be labeled defining
	\begin{align*}
	f(h_1)=(e) \quad f(h_2)=f(h_3)=m
	\end{align*}
	In this case we get the following:
	\begin{center}\begin{tikzpicture}
		\node[circle,fill=black,inner sep=0pt,minimum size=6pt,label=above:{$v_1$}] (A) at (0,0) {};
		\node[circle,fill=black,inner sep=0pt,minimum size=6pt,label=above:{$v_2$}] (B) at (0,-1.5) {};
		\node[circle,fill=black,inner sep=0pt,minimum size=6pt,label=above:{$v_3$}] (C) at (3,-0.75) {};
		\node[circle,fill=black,inner sep=0pt,minimum size=6pt,label=above:{$v_4$}] (D) at (3,-2.25) {};
		\node[circle,fill=black,inner sep=0pt,minimum size=6pt,label=right:{$v_5$}] (E) at (6,-1.5) {};
		\node at (-1.75,0) {$e$};
		\node at (1.5,-0.75) {$m$};
		\node at (4.5,-1.5) {$m$};
		\draw[-](1.75,-0.75)--(C);
		\draw[rounded corners] (1.25, -1) rectangle (1.75, -0.5) {};
		\draw[-](4.75,-1.5)--(E);
		\draw[-](-1.5,0)--(A);
		\draw[rounded corners] (4.25, -1.75) rectangle (4.75, -1.25) {};
		\node at (4.5, -1.05){$h_2$};
		\node at (1.5, -0.3){$h_3$};
		\node at (-1.75, 0.45){$h_1$};
		\draw[rounded corners] (-2, -0.25) rectangle (-1.5, 0.25) {};
		\draw(A)..controls(0.5,0)and(1.2,-0.2)..(1.25,-0.6);
		\draw(B)..controls(0.5,-1.5)and(1.2,-1.3)..(1.25,-0.9);
		
		\draw(C)..controls(3.5,-0.75)and(4.2,-0.95)..(4.25,-1.35);
		\draw(D)..controls(3.5,-2.25)and(4.2,-2.05)..(4.25,-1.65);
		\end{tikzpicture}
	\end{center}
\end{exa}

\begin{rem}There is a \emph{colored} (or \emph{typed}) version of these last constructions. Start with a \emph{colored} algebraic signature: this is a triple $(C, O, \ari)$ where $C$ is the set of \emph{colors}, $O$ is the set of \emph{operations} and $\ari:O\rightarrow C^{\bullet}\times C^{\bullet}$ assigns to every operations $f$ an arity and a coarity given by strings of colors.  We can still construct an hypergraph $\mathcal{G}^{\Sigma}$ with $C$ as set of nodes using the operations as hyperedges.  In this context an object in the slice $\hyp/\mathcal{G}^{\Sigma}$ is an hypergraph in which both the hyperedges and the nodes are labeled, the formers with an elemento of $O$ and the latters with an element of $C$ \cite{bonchi2022string}. 
\end{rem}
\paragraph{$\hyp$ as a topos of presheaves}
By \cref{lim} we already know that $\hyp$ has all connected limits, and by \cref{prop:hypadh} we know that it is adhesive. Actually more can be proved about it: we can realize $\hyp$ as a presheaf topos \cite{bonchi2022string}.

\begin{defi}Let $\catname{I}$ be the category in which:
	\begin{itemize}
		\item the set of objects is given by $\qty(\mathbb{N}\times \mathbb{N}) \cup \{\bullet\}$
		\item arrows are given by the identities $\id{k,l}$ and $\id{\bullet}$ and exactly $k+l$ arrows $f_i:(k,l)\rightarrow \bullet$;
		\item composition is defined simply putting, for every $f_i:(k,l)\rightarrow \bullet$:
		\begin{equation*}
		f_i\circ \id{k,l}=f_i = \id{\bullet}\circ f_i 
		\end{equation*}
	\end{itemize}
	
\end{defi}

Now, given $F:\catname{I}\to \catname{Set}$ we can define
\[E_F:=\bigsqcup_{k,l\in \mathbb{N}}F(k,l)\]
Then we have
\begin{align*}
	s_{k,l}&:F(k,l)\to F(\bullet)^{\star} \qquad x \mapsto \prod_{i=1}^{k}F(f_i)(x)\\
	t_{k,l}&:F(k,l)\to F(\bullet)^{\star} \qquad x \mapsto \prod_{i=k+1}^{k+l}F(f_i)(x)
\end{align*}
which induce
 $s_F, t_F:E_F\rightrightarrows \mathcal{F}(\bullet)^{\star}$. Let $\mathcal{G}_F$ be the resulting hypergraph. Now, every $\eta:F\rightarrow H$ in $\catname{Set}^{\catname{I}}$ has components $\eta_{k,l}:F(k,l)\to H(k,l)$, $\eta_{\bullet}:F(\bullet)\to H(\bullet)$, thus it induces a function $\hat{\eta}:E_F\rightarrow E_H$ such that the following squares commute
\begin{center}
	\begin{tikzpicture}
	\node(A) at(2,0){$E_F$};
	\node(B) at (3.5,0){$F(\bullet)^{\star}$};
	\node(C) at(3.5,-1.5){$H(\bullet)^{\star}$};
	\node(D) at (2,-1.5){$E_{H}$};
	\draw[->](A)--(B)node[pos=0.5, above]{$s_F$};
	\draw[->](D)--(C)node[pos=0.5, below]{$s_H$};
	\draw[->](A)--(D)node[pos=0.5, left]{$\hat{\eta}$};
	\draw[->](B)--(C)node[pos=0.5, right]{$\eta_{\bullet}^{\star}$};
	\node(A) at(5,0){$E_F$};
	\node(B) at (6.5,0){$F(\bullet)^{\star}$};
	\node(C) at(6.5,-1.5){$H(\bullet)^{\star}$};
	\node(D) at (5,-1.5){$E_F$};
	\draw[->](A)--(B)node[pos=0.5, above]{$t_F$};
	\draw[->](D)--(C)node[pos=0.5, below]{$t_H$};
	\draw[->](A)--(D)node[pos=0.5, left]{$\hat{\eta}$};
	\draw[->](B)--(C)node[pos=0.5, right]{$\eta_{\bullet}^{\star}$};
	\end{tikzpicture}
\end{center}
this is equivalent to say that $\eta$ induces a morphism $(\hat{\eta}, \eta_{\bullet}):\mathcal{G}_F\to \mathcal{G}_H$. It is now clear that sending $F$ to $\mathcal{G}_F$ and $\eta$ to $(\hat{\eta}, \eta_{\bullet})$ defines a faithful functor $\mathcal{G}_{-}:\catname{Set}^{\catname{I}}\to \hyp$.

\begin{prop}
	$\hyp$ is equivalent to the category $\catname{Set}^{\catname{I}}$.
\end{prop}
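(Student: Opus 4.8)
The plan is to upgrade the faithful functor $\mathcal{G}_{-}:\catname{Set}^{\catname{I}}\to \hyp$ just constructed to an equivalence by showing that it is moreover full and essentially surjective, since a fully faithful and essentially surjective functor is an equivalence.

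For essential surjectivity, given an arbitrary hypergraph $\mathcal{G}=(E,V,s,t)$ I would reconstruct a presheaf $F_{\mathcal{G}}:\catname{I}\to\catname{Set}$ realizing it. Set $F_{\mathcal{G}}(\bullet):=V$ and, for each pair $(k,l)$, let $F_{\mathcal{G}}(k,l)$ be the set of hyperedges $e\in E$ with $\lgt_V(s(e))=k$ and $\lgt_V(t(e))=l$; on the generating arrows $f_i:(k,l)\to\bullet$ let $F_{\mathcal{G}}(f_i)$ send $e$ to the $i^\mathrm{th}$ letter of $s(e)$ when $1\le i\le k$, and to the $(i-k)^\mathrm{th}$ letter of $t(e)$ when $k+1\le i\le k+l$. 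Since every hyperedge determines a unique pair of source/target lengths, the disjoint union $\bigsqcup_{k,l}F_{\mathcal{G}}(k,l)$ is canonically $E$, and by construction the induced source and target functions of $\mathcal{G}_{F_{\mathcal{G}}}$ recover $s$ and $t$; hence $\mathcal{G}_{F_{\mathcal{G}}}\cong\mathcal{G}$, in fact equal up to the identification of $E$ with the coproduct of its length-fibres.

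For fullness, consider a morphism $(h,k):\mathcal{G}_F\to\mathcal{G}_H$ in $\hyp$. Commutativity of its defining squares gives $k^{\star}\circ s_F=s_H\circ h$ and likewise for targets; post-composing with the length maps and using the commuting triangle of the Notation above shows that $h$ preserves both source- and target-lengths, so it restricts to functions $\eta_{k,l}:F(k,l)\to H(k,l)$. Setting $\eta_{\bullet}:=k$, I would then verify naturality: for $f_i:(k,l)\to\bullet$ with $1\le i\le k$, reading off the $i^\mathrm{th}$ letter on both sides of $k^{\star}\circ s_F=s_H\circ h$ yields $\eta_{\bullet}\circ F(f_i)=H(f_i)\circ\eta_{k,l}$, and the target indices $k+1\le i\le k+l$ are handled identically from the target square. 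Thus the family $\eta$ with components $\eta_{k,l}$ and $\eta_{\bullet}$ is a natural transformation $F\to H$ with $\mathcal{G}_{\eta}=(h,k)$, establishing fullness.

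Combined with the faithfulness already proved, these two facts yield the claimed equivalence. I expect the only delicate point to be purely bookkeeping: keeping the index conventions straight, so that $f_1,\dots,f_k$ encode the source letters and $f_{k+1},\dots,f_{k+l}$ the target letters, and checking that the length-preservation extracted from the $k^{\star}$-squares is exactly what makes the fibrewise restriction $\eta_{k,l}$ well defined. None of this is conceptually hard, but it must be carried out carefully to guarantee that the reconstructed $\eta$ reproduces $(h,k)$ on the nose.
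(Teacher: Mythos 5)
Your proposal is correct and follows essentially the same route as the paper's own proof: fullness by showing the edge map preserves source/target lengths (via the length-preservation of $k^{\star}$) and hence restricts to fibrewise components $\eta_{k,l}$ with $\eta_{\bullet}$ the node map, and essential surjectivity by reconstructing $F_{\mathcal{G}}$ from the length-fibres of $E$ with $F_{\mathcal{G}}(f_i)$ reading off letters of $s$ and $t$. (Indeed your indexing $t(e)_{i-k}$ fixes a small typo in the paper, which writes $t(x)_{i-l}$.)
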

\begin{proof}
	Let $X$ be a set, for every $n\in \mathbb{N}$ define
	\[X_{n}:=\qty{v\in X^{\star} \mid \lgt_X(v)=n}\]

	Given $F:\catname{I}\to \catname{Set}$ we have that
	\[i^F_{k,l}\qty(F(k,l))=s^{-1}_F\qty(F(\bullet)_{k})\cap t^{-1}_F\qty(F(\bullet)_{l}) \]
	where $i^{F}_{k,l}$ is the inclusion $F(k,l)\to E_F$.
	
	We are now ready to that $\mathcal{G}_{-}$ is full and essentially surjective.
	\begin{itemize}
		\item For fullness, let $(f,g):\mathcal{G}_F\to \mathcal{G}_{H}$ be a morphism of hypergraphs and define $h_{k,l}$ to be $h\circ i^F_{k,l}$, the composition of $h$ with Now, if $x\in F(k,l)$ then 
		\begin{gather*}s_H\qty(h_{k,l}\qty(x))=s_H\qty(h\qty(i^F_{k,l}\qty(x)))=g^{\star}\qty(s_{F}\qty(i^F_{k,l}\qty(x)))\\t_H\qty(h_{k,l}\qty(x))=s_t\qty(h\qty(i^F_{k,l}\qty(x)))=g^{\star}\qty(t_{F}\qty(i^F_{k,l}\qty(x)))
		\end{gather*}
	Now, $\lgt\qty(g^{\star}(v))=\lgt\qty(v)$ for every $v\in F(\bullet)^{\star}$, thus the previous computations shows that there exists $\eta_{k,l}$ such that the square
	\begin{center}
		\begin{tikzpicture}
		\node(A) at(2,0){$F(k,l)$};
		\node(B) at (4,0){$E_F$};
		\node(C) at(4,-1.5){$E_H$};
		\node(D) at (2,-1.5){$H(k,l)$};
		\draw[->](A)--(B)node[pos=0.5, above]{$i^{F}_{k,l}$};
		\draw[->](D)--(C)node[pos=0.5, below]{$i^{H}_{k,l}$};
		\draw[->](A)--(D)node[pos=0.5, left]{$\eta_{k,l}$};
		\draw[->](B)--(C)node[pos=0.5, right]{$h_{k,l}$};
		\end{tikzpicture}
	\end{center}
	commutes.  Now, defining $\eta_{\bullet}$ as $g$, the collection $\qty{\eta_{k,l}}_{k,l\in \mathbb{N}}$, defines a natural transformation $\eta:F\to H$. Indeed, if $f_i:(k,l)\to \bullet$ we have:
	\begin{center}
	\begin{tikzpicture}
	\node(A) at(2,0){$F(k,l)$};
	\node(B) at (4,0){$E_F$};
	\node(C) at(4,-1.5){$E_H$};
	\node(D) at (2,-1.5){$H(k,l)$};
	\node(E) at (5.5, 0){$F(\bullet)^\star$};
	\node(F) at (5.5, -1.5){$F(\bullet)^\star$};
	\draw[->](A)--(B)node[pos=0.5, above]{$i^{F}_{k,l}$};
	\draw[->](D)--(C)node[pos=0.5, below]{$i^{H}_{k,l}$};
	\draw[->](A)--(D)node[pos=0.5, left]{$\eta_{k,l}$};
	\draw[->](B)--(C)node[pos=0.5, left]{$h_{k,l}$};
	\draw[->](B)--(E)node[pos=0.5, above]{$s_F$};
\draw[->](C)--(F)node[pos=0.5, below]{$s_H$};
\draw[->](E)--(F)node[pos=0.5, right]{$g$};

\node(A) at(7,0){$F(k,l)$};
\node(B) at (9,0){$E_F$};
\node(C) at(9,-1.5){$E_H$};
\node(D) at (7,-1.5){$H(k,l)$};
\node(E) at (10.5, 0){$F(\bullet)^\star$};
\node(F) at (10.5, -1.5){$F(\bullet)^\star$};
\draw[->](A)--(B)node[pos=0.5, above]{$i^{F}_{k,l}$};
\draw[->](D)--(C)node[pos=0.5, below]{$i^{H}_{k,l}$};
\draw[->](A)--(D)node[pos=0.5, left]{$\eta_{k,l}$};
\draw[->](B)--(C)node[pos=0.5, left]{$h_{k,l}$};
\draw[->](B)--(E)node[pos=0.5, above]{$t_F$};
\draw[->](C)--(F)node[pos=0.5, below]{$t_H$};
\draw[->](E)--(F)node[pos=0.5, right]{$g$};
	\end{tikzpicture}
\end{center}
	The diagram on the right implies naturality where $i\leq k$, while the one on the left takes care of the other case. Finally, by contruction it is clear that $(\hat{\eta}, \eta_{\bullet})=(f,g)$. 
		\item Given an hypergraph $\mathcal{G}=(E, V, s, t)$ we can define 
		\[F_{\mathcal{G}}(k,l):=s^{-1}(V_k)\cap t^{-1}(V_l) \qquad F_{\mathcal{G}}(\bullet):=V\]
		Given $f_i:(k,l)\to \bullet$ we put
	\[F_{\mathcal{G}}(f_i):F_{\mathcal{G}}(k,l)\to F_{\mathcal{G}}(\bullet) \qquad x\mapsto \begin{cases}
	s(x)_{i} & i\leq k\\
	t(x)_{i-l} &k <i
	\end{cases}  \]
	
Now, $F_{\mathcal{G}}$ is a functor $\catname{I}\to \catname{Set}$ and for every $h\in E$ there exists a unique pair $(k,l)$ such that $h\in F_{\mathcal{G}}(k,l) $, thus
\[\bigsqcup_{k,l\in \mathbb{N}}F_{\mathcal{G}}(k,l)\simeq E\]
Moreover, by construction $s_{F_{\mathcal{G}}}=s$ and $t_{F_{\mathcal{G}}}=t$, from which the thesis follows. \qedhere
	\end{itemize}
\end{proof}
As a corollary we get immediately the following.
\begin{cor}
	$\hyp$ is a complete category.
\end{cor}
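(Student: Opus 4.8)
The plan is to read off completeness directly from the equivalence $\hyp\simeq\catname{Set}^{\catname{I}}$ just established in the preceding proposition. Since completeness is invariant under equivalence of categories—an equivalence preserves and reflects all limits, so one of two equivalent categories is complete exactly when the other is—it suffices to check that the presheaf category $\catname{Set}^{\catname{I}}$ is complete, and then transport the conclusion back to $\hyp$ along the equivalence.

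For the presheaf side I would invoke the standard fact that limits in a functor category $\catname{D}^{\catname{C}}$ are computed objectwise whenever the target $\catname{D}$ is complete: given a small diagram in $\catname{Set}^{\catname{I}}$, one forms its limit by taking, for each object of $\catname{I}$, the limit in $\catname{Set}$ of the corresponding diagram of values, with the action on arrows of $\catname{I}$ induced by universality. Because $\catname{Set}$ has all small products and all equalizers, hence all small limits, this objectwise recipe always produces a limiting cone in $\catname{Set}^{\catname{I}}$. Thus $\catname{Set}^{\catname{I}}$ is complete, and so is $\hyp$.

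I do not expect any real obstacle here: the substantive work was already carried out in proving the equivalence, and what remains is purely formal. The only point that needs a moment's care is the invariance argument—namely that completeness, being stated entirely in terms of the existence of limiting cones, is respected by an equivalence functor and its pseudo-inverse. As a consistency check, this corollary subsumes the connected-limit statement obtained earlier from \cref{lim} (via $U_{\hyp}$ and the cartesianness of $(-)^{\star}$) and strengthens it by additionally supplying all products and a terminal object, which the connected-limit argument alone does not provide.
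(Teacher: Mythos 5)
Your proposal is correct and is exactly the paper's argument: the corollary is stated as an immediate consequence of the equivalence $\hyp\simeq\catname{Set}^{\catname{I}}$, with completeness of the presheaf category (objectwise limits in $\catname{Set}$) transported along the equivalence. You merely spell out the two standard facts the paper leaves implicit, which is fine.
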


\subsection{Hierarchical hypergraphs}\label{subsec:hhgraph}  We can leverage on the modularity of \cref{func} and \cref{comma} to give hypergraphical variants for \cref{hiergraph} and \cref{interface}. This is done replacing the set $E_\mathcal{G}$ of hyperedges with a tree order $(E_\mathcal{G}, \leq)$ and $\id{\catname{Set}}$ with the forgetful functor $\abs{-}:\tree\to \catname{Set}$.

\begin{defi}A \emph{hierarchical hypergraph} $\mathcal{G}$ is a triple $((E_\mathcal{G}, \leq), V_\mathcal{G}, e_\mathcal{G})$ where $(E_\mathcal{G}, \leq)$ is a tree order, $V_\mathcal{G}$ a set and $e_\mathcal{G}:E_\mathcal{G}\rightarrow V_\mathcal{G}^{\star}$ a function. A \emph{morphism} $\mathcal{G}\rightarrow \mathcal{H}$  is a pair $(f,g)$ with $f:(E_\mathcal{G}, \leq)\rightarrow (E_\mathcal{H}, \leq)$ in $\tree$, $g:V\rightarrow W$ in $\catname{Set}$ such that the following square commutes
	\begin{center}
		\begin{tikzpicture}
		\node(A) at(2,0){$E_\mathcal{G}$};
		\node(B) at (3.5,0){$V_\mathcal{G}^\star$};
		\node(C) at(3.5,-1.5){$V_\mathcal{H}^\star$};
		\node(D) at (2,-1.5){$E_\mathcal{H}$};
		\draw[->](A)--(B)node[pos=0.5, above]{$e_\mathcal{G}$};
		\draw[->](D)--(C)node[pos=0.5, below]{$e_\mathcal{H}$};
		\draw[->](A)--(D)node[pos=0.5, left]{$\abs{f}$};
		\draw[->](B)--(C)node[pos=0.5, right]{$g^\star$};
		\end{tikzpicture}
	\end{center}
Taking componentwise composition we get a category $\catname{HHGraph}$.
\end{defi}
It's now easy to see that, with this definition, $\catname{HHGraph}$ is the comma category $\comma{\abs{-}}{(-)^{\star}}$, therefore deducing its adhesivity.
\begin{thm}
	$\catname{HHGRaph}$ is adhesive. Moreover, the functor $\catname{HHGraph}\to \catname{Set}$, which sends a hierarchical hypergraph to its set of nodes, has a left adjoint $\Delta_{\catname{HHGraph}}$.
\end{thm}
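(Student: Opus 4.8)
The plan is to exploit the identification of $\catname{HHGraph}$ with the comma category $\comma{\abs{-}}{(-)^{\star}}$ recorded just above the statement, and then to invoke \cref{comma}. Concretely, I would take $\catname{A}=\tree$, $\catname{B}=\catname{Set}$, $\catname{C}=\catname{Set}$, with $L=\abs{-}\colon\tree\to\catname{Set}$ and $R=(-)^{\star}\colon\catname{Set}\to\catname{Set}$, and check the four hypotheses of that theorem. First, $\tree$ is adhesive by \cref{push2}, i.e.\ $\mono{Tree},\arr{Tree}$-adhesive, and $\catname{Set}$ is adhesive because it is a topos, i.e.\ $\mono{Set},\arr{Set}$-adhesive. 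Second, $L=\abs{-}$ preserves all colimits, again by \cref{push2}, so in particular it preserves the $\mono{Tree},\arr{Tree}$-pushouts. Third, $R=(-)^{\star}$ is the cartesian monad associated to the theory of monoids and therefore preserves all connected limits, pullbacks among them (this is exactly the property already used in \cref{prop:hypadh}).

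With these hypotheses in place, \cref{comma} yields that $\comma{\abs{-}}{(-)^{\star}}$ is $\mathcal{M},\mathcal{N}$-adhesive, where $\mathcal{M}$ consists of the pairs $(f,g)$ with $f\in\mono{Tree}$ and $g\in\mono{Set}$, while $\mathcal{N}$ consists of the pairs $(f,g)$ with $f\in\arr{Tree}$ and $g\in\arr{Set}$. The second class is visibly the whole of $\arr{HHGraph}$. For the first, since $R=(-)^{\star}$ preserves pullbacks I may invoke \cref{cor:mono}: an arrow $(f,g)$ of the comma category is a monomorphism precisely when both $f$ and $g$ are monos. Hence $\mathcal{M}=\mono{HHGraph}$, and the assertion that $\catname{HHGraph}$ is $\mono{HHGraph},\arr{HHGraph}$-adhesive is exactly the assertion that it is adhesive.

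For the second half, I would appeal to \cref{prop:left} with the same $L$ and $R$. The only point to verify is that $\catname{A}=\tree$ has an initial object and that $L=\abs{-}$ preserves it: the empty tree order is initial in $\tree$, and $\abs{-}$ sends it to the empty set, which is initial in $\catname{Set}$. Thus the hypotheses of \cref{prop:left} hold, and the forgetful functor $U_R\colon\comma{\abs{-}}{(-)^{\star}}\to\catname{Set}$ admits a left adjoint $\Delta_{\catname{HHGraph}}$; by construction $U_R$ is precisely the functor reading off the set of nodes of a hierarchical hypergraph, as required.

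Since every ingredient has already been established in the preceding development, I do not expect any genuine obstacle. The only points requiring care are the bookkeeping of which adhesivity classes the two factors $\tree$ and $\catname{Set}$ carry, and the final identification, via \cref{cor:mono}, of the class $\mathcal{M}$ produced by \cref{comma} with the monomorphisms of $\catname{HHGraph}$.
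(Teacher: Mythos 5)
Your proposal is correct and takes essentially the same route as the paper, whose proof simply cites \cref{comma} for the adhesivity of the comma category $\comma{\abs{-}}{(-)^{\star}}$ and \cref{prop:left} for the left adjoint. Your write-up merely spells out the hypothesis-checking (adhesivity of $\tree$ and $\catname{Set}$, colimit preservation by $\abs{-}$, pullback preservation by $(-)^{\star}$, and the identification of the class $\mathcal{M}$ with $\mono{HHGraph}$ via \cref{cor:mono}) that the paper leaves implicit.
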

\begin{proof}The first half follows from \cref{comma}, the second one from \cref{prop:left}.
\end{proof}

To add interface we proceed exactly as in \cref{subsec:hgraph}, using \cref{cor:left}

\begin{defi}The category $\catname{HHIGraph}$ of \emph{hierarchical hypergraphs with interface} is the comma category $\comma{\Delta_{\catname{HHGraph}}}{\id{\hyp}}$.
\end{defi}
As before we can give a more explicit description of $\catname{HHIGraph}$. An object in it is a triple $(\mathcal{G}, X, f)$ made by a hierarchical hypergraph $\mathcal{G}=((E_\mathcal{G}, \leq), V_\mathcal{G}, e_\mathcal{G})$, a set $X$ and a function $f:X\to V$. A morphism $(\mathcal{G}, X, f)\to (\mathcal{H}, Y, g)$ is a triples $(h,k,l)$  with $h:(E_\mathcal{G}, \leq)\rightarrow (E_\mathcal{H}, \leq)$ in $\tree$, $k:V_\mathcal{G}\rightarrow V_\mathcal{H}$  and $l:X\rightarrow Y$ in $\catname{Set}$ such that the following squares commute
	\begin{center}
		\begin{tikzpicture}
		\node(A) at(2,0){$E_\mathcal{G}$};
		\node(B) at (3.5,0){$V_\mathcal{G}^\star$};
		\node(C) at(3.5,-1.5){$V_\mathcal{H}^\star$};
		\node(D) at (2,-1.5){$E_\mathcal{H}$};
		\draw[->](A)--(B)node[pos=0.5, above]{$e_\mathcal{G}$};
		\draw[->](D)--(C)node[pos=0.5, below]{$e_\mathcal{H}$};
		\draw[->](A)--(D)node[pos=0.5, left]{$\abs{h}$};
		\draw[->](B)--(C)node[pos=0.5, right]{$k^\star$};
		\node(A) at(5,0){$X$};
		\node(B) at (6.5,0){$V_\mathcal{G}$};
		\node(C) at(6.5,-1.5){$V_\mathcal{H}$};
		\node(D) at (5,-1.5){$Y$};
		\draw[->](A)--(B)node[pos=0.5, above]{$f$};
		\draw[->](D)--(C)node[pos=0.5, below]{$g$};
		\draw[->](A)--(D)node[pos=0.5, left]{$l$};
		\draw[->](B)--(C)node[pos=0.5, right]{$k$};
		\end{tikzpicture}
	\end{center}

\begin{rem}This category of hypergraphs whose edges form a tree order, corresponds to Milner's (pure) bigraphs \cite{milner:bigraphs}, with possibly infinite edges\footnote{In bigraph terminology, ``controls'' and ``edges'' correspond to our edges and nodes.}.
\end{rem} 
 
 Given its definition, we deduce at once the following.
\begin{thm}
	$\catname{HHIGraph}$ is adhesive.
\end{thm}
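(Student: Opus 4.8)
The plan is to exhibit $\catname{HHIGraph}$ as a comma category and then invoke \cref{comma}, exactly as in the proof of \cref{interface}. By definition $\catname{HHIGraph}$ is the comma category $\comma{\Delta_{\catname{HHGraph}}}{\id{\catname{HHGraph}}}$, so I would set $\catname{A}=\catname{Set}$, $\catname{B}=\catname{HHGraph}$, $L=\Delta_{\catname{HHGraph}}\colon\catname{Set}\to\catname{HHGraph}$ and $R=\id{\catname{HHGraph}}$, and check the hypotheses of \cref{comma} for this data.

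First I would record that both parametrizing categories are adhesive: $\catname{Set}$ classically, and $\catname{HHGraph}$ by the theorem just established. In the language of \cref{def:class} this means $\catname{Set}$ is $\mono{Set},\mono{Set}$-adhesive and $\catname{HHGraph}$ is $\mono{HHGraph},\mono{HHGraph}$-adhesive, so I take $\mathcal{M}=\mathcal{N}=\mono{Set}$ and $\mathcal{M}'=\mathcal{N}'=\mono{HHGraph}$. Next I would verify the two functorial conditions. The functor $L=\Delta_{\catname{HHGraph}}$ is a left adjoint, again by the theorem above, hence preserves all colimits and in particular $\mathcal{M},\mathcal{N}$-pushouts; and $R=\id{\catname{HHGraph}}$ trivially preserves pullbacks. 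Therefore \cref{comma} applies and yields that $\catname{HHIGraph}$ is $\cma{M}{M'},\cma{N}{N'}$-adhesive.

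Finally I would identify this with ordinary adhesivity. Since $R=\id{\catname{HHGraph}}$ preserves pullbacks, \cref{cor:mono} guarantees that an arrow $(h,k)$ of the comma category is monic precisely when both $h$ and $k$ are monic; consequently the distinguished class $\cma{M}{M'}$ coincides with the full class $\mono{HHIGraph}$, and likewise for $\cma{N}{N'}$. Hence $\catname{HHIGraph}$ is $\mono{HHIGraph},\mono{HHIGraph}$-adhesive, i.e.\ adhesive by \cref{rem:salva}.

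I do not expect any real obstacle here, since every ingredient is a direct consequence of results already in place: the adhesivity of the two base categories, the adjunction producing $\Delta_{\catname{HHGraph}}$, and the behaviour of the identity functor. The only point deserving a line of justification is the last one, namely that the class of monos supplied by \cref{comma} exhausts all monomorphisms of $\catname{HHIGraph}$, which is precisely the content of \cref{cor:mono} applied to the identity functor $R$.
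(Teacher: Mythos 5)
Your strategy coincides with the paper's: realise $\catname{HHIGraph}$ as the comma category $\comma{\Delta_{\catname{HHGraph}}}{\id{\catname{HHGraph}}}$ and feed it to \cref{comma}, using that $\Delta_{\catname{HHGraph}}$ is a left adjoint and that the identity functor preserves pullbacks. However, there is a genuine gap in your final step. You instantiate \cref{comma} with $\mathcal{N}=\mono{Set}$ and $\mathcal{N}'=\mono{HHGraph}$, so what the theorem delivers is $\mono{HHIGraph},\mono{HHIGraph}$-adhesivity: existence and the Van Kampen property of pushouts only for spans in which \emph{both} legs are monomorphisms. That is not what ``adhesive'' means in this paper: by \cref{rem:salva}, adhesivity coincides with $\mono{A}$-adhesivity, i.e.\ with $\mathcal{M},\mathcal{N}$-adhesivity for $\mathcal{M}=\mono{HHIGraph}$ and $\mathcal{N}=\arr{HHIGraph}$, which requires a Van Kampen pushout for every span $C\xleftarrow{m}A\xrightarrow{n}B$ with $m$ monic and $n$ \emph{arbitrary}. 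The property you derived is weaker in general (it is the same kind of distinction as that between quasiadhesivity and the strictly weaker requirement that pushouts of pairs of regular monos be Van Kampen), so the appeal to \cref{rem:salva} in your last sentence does not close the argument.

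The repair is immediate and costs nothing. Take instead $\mathcal{M}=\mono{Set}$, $\mathcal{N}=\arr{Set}$, $\mathcal{M}'=\mono{HHGraph}$, $\mathcal{N}'=\arr{HHGraph}$. Both base categories, being adhesive, are $\mathcal{M},\mathcal{N}$-adhesive for these choices, and the hypothesis on $L$ in \cref{comma} now asks that $\Delta_{\catname{HHGraph}}$ preserve pushouts along monomorphisms, which holds because, as a left adjoint, it preserves all colimits. Then \cref{comma} yields $\cma{M}{M'},\cma{N}{N'}$-adhesivity, where $\cma{N}{N'}$ is the class of \emph{all} morphisms of $\catname{HHIGraph}$ and $\cma{M}{M'}=\mono{HHIGraph}$ by \cref{cor:mono}, exactly as you argue. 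This is $\mono{HHIGraph},\arr{HHIGraph}$-adhesivity, i.e.\ adhesivity in the sense of \cref{rem:salva}, which is the intended conclusion and is what the paper's one-line proof implicitly does.
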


 \subsection{$\dgr$ and $\dg$-hypergraphs}
 We can consider more general relations between edges, besides tree orders.
 An interesting case is when edges form a directed acyclic graph, yielding the category of \emph{$\dg$-hypergraphs}; this corresponds to (possibly infinite) \emph{bigraphs with sharing}, where an edge can have more than one parent, as in \cite{sc:bigraphsharing} (see also \cref{fig:dgdag}, left).
 Even more generally, we can consider any relation between edges, i.e., the edges form a generic directed graph possibly with cycles, yielding the category of \emph{$\dgr$-hypergraphs}. These can be seen as ``recursive bigraphs'', i.e., bigraphs which allow for cyclic dependencies between controls, like in recursive processes; an example is in \cref{fig:dgdag} (right).

\begin{defi} A \emph{$\dgr$-hypergraph} (respectively \emph{$\dg$-hypergraphs}) is a triple
	$(\mathcal{G}, V, e)$ where $\mathcal{G}$ is in $\dgr$ (in $\dg$), $V$ is a set and $e$ a function $E_\mathcal{G}\rightarrow V^{\star}$. A \emph{morphism} of \emph{$\dgr$-hypergraph} (\emph{$\dg$-hypergraphs}) is a pair $((h_1, h_2),k):(\mathcal{G}, V, e)\rightarrow (\mathcal{H}, W, e')$ with $(h_1, h_2):\mathcal{G}\rightarrow \mathcal{H}$ in $\dg$ (in $\dgr$) and $k:V\rightarrow W$ in $\catname{Set}$ such that the following square commute
	\begin{center}
		\begin{tikzpicture}
		\node(A) at(2,0){$E_\mathcal{G}$};
		\node(B) at (3.5,0){$V_\mathcal{G}^\star$};
		\node(C) at(3.5,-1.5){$V_\mathcal{H}^\star$};
		\node(D) at (2,-1.5){$E_\mathcal{H}$};
		\draw[->](A)--(B)node[pos=0.5, above]{$e$};
		\draw[->](D)--(C)node[pos=0.5, below]{$e'$};
		\draw[->](A)--(D)node[pos=0.5, left]{$h_2$};
		\draw[->](B)--(C)node[pos=0.5, right]{$k^\star$};
		\end{tikzpicture}
	\end{center}
Thess data give rise to the categories $\catname{SHGraph}$ and $\catname{DAGHGraph}$ respectively.
\end{defi}

We can realise both $\catname{SHGraph}$ and $\catname{DAGHGraph}$ as comma categories,: take respectively the forgetful functors $\dgr\rightarrow \catname{Set}$ and $\dg \rightarrow \catname{Set}$ on one side and the Kleene star $(-)^\ast$ on the other.

\begin{thm}\label{thm:dg}
	$\catname{SHGraph}$ is adhesive with respect to the classes
	\begin{gather*}
	\{((h_1,h_2), k)\in \arr{SHGraph}\mid (h_1,h_2)\in \reg{\dgr}, k \in \mono{Set}\}\\
	\{((h_1,h_2), k)\in \arr{SHGraph}\mid (h_1,h_2)\in \mono{\dgr}\}
	\end{gather*}
	while $\catname{DAGHGraph}$ is adhesive with respect to the classes	\begin{gather*}
	\{((h_1,h_2), k)\in \arr{DAGHGraph}\mid (h_1,h_2)\in \rtd, k, l \in \mono{Set}\}\\
	\{((h_1,h_2), k)\in \arr{DAGHGraph}\mid (h_1,h_2)\in \mono{\dg}\}
	\end{gather*}
	
	Moreover, the functors $\catname{DHGraph}\to \catname{Set}$ and  $\catname{DAGHGraph}\to \catname{Set}$, which assign to an hypergraph its set of nodes, have left adjoints $\Delta_{\catname{DHGraph}}$  and $\Delta_{\catname{DAGHGraph}}$.
\end{thm}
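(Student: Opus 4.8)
The plan is to present both categories as comma categories and apply \cref{comma}, just as was done for $\catname{HHGraph}$. As observed right before the statement, $\catname{SHGraph}=\comma{U}{(-)^\star}$ with $U\colon\dgr\to\catname{Set}$ the functor returning the set of edges, and similarly $\catname{DAGHGraph}=\comma{U'}{(-)^\star}$ with $U'\colon\dg\to\catname{Set}$. Since the monoid monad is cartesian---it comes from a strongly regular theory---$(-)^\star$ preserves pullbacks, so the condition on the right-hand functor in \cref{comma} holds in both cases. For the base categories I would use the adhesive structures from the preceding theorem, namely that $\dgr$ is $\reg{\dgr},\mono{\dgr}$-adhesive and $\dg$ is $\rtd,\mono{\dg}$-adhesive, together with the adhesivity of $\catname{Set}$ (i.e.\ its $\mono{Set},\arr{Set}$-adhesivity).

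It then remains only to verify the left-hand hypothesis of \cref{comma}: that $U$ preserves $\reg{\dgr},\mono{\dgr}$-pushouts and that $U'$ preserves $\rtd,\mono{\dg}$-pushouts. Granting this, \cref{comma} yields that $\comma{U}{(-)^\star}$ is adhesive with respect to $\{(\phi,k)\mid \phi\in\reg{\dgr},\,k\in\mono{Set}\}$ and $\{(\phi,k)\mid \phi\in\mono{\dgr}\}$---exactly the two classes listed for $\catname{SHGraph}$---and likewise for $\catname{DAGHGraph}$ with $\reg{\dgr},\mono{\dgr}$ replaced by $\rtd,\mono{\dg}$ (the stray $l$ in the displayed class being a typo).

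The verification of pushout preservation is the step I expect to be the main obstacle. The idea is that the edge-extraction functor $\gr\to\catname{Set}$ preserves all colimits, since by \cref{multi} (co)limits in $\gr$ are computed componentwise; hence it suffices to show that the pushouts in question, computed in $\gr$, already lie in $\dgr$, respectively $\dg$, so that they coincide with the pushouts taken there. This is precisely what \cref{lemma:gpush} provides. The delicate point is the $\dg$ case, where the $\gr$-pushout must be both acyclic and simple. Acyclicity is the content of the second half of \cref{lemma:gpush}; for simplicity I would observe that a downward-closed mono is edge-reflecting, hence a regular mono of $\dgr$ by \cref{prop:fatt,mono3}, so that a $\rtd,\mono{\dg}$-span is in particular a $\reg{\dgr},\mono{\dgr}$-span and the ``no parallel edges'' argument of the first half of \cref{lemma:gpush} applies verbatim. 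Thus the $\gr$-pushout lands in $\dg$ and $U'$ preserves it; for $U$ the first half of \cref{lemma:gpush} alone already places the $\gr$-pushout in $\dgr$.

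For the left adjoints I would invoke \cref{prop:left}. Both categories are of the form $\comma{L}{(-)^\star}$ over $\catname{Set}$, and the functor remembering only the node set is exactly the projection $U_R$ of the comma construction. As $\dgr$ and $\dg$ have the empty graph as initial object, and the edge functors $U,U'$ send it to the initial set $\emptyset$, the hypotheses of \cref{prop:left} are met, yielding the desired left adjoints $\Delta_{\catname{SHGraph}}$ and $\Delta_{\catname{DAGHGraph}}$.
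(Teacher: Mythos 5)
Your proposal is correct and follows exactly the route the paper intends: the paper states \cref{thm:dg} without further proof, treating it as an immediate consequence of the comma-category presentation, \cref{comma} (with $(-)^\star$ pullback-preserving because the monoid monad is cartesian), the $\reg{\dgr},\mono{\dgr}$- and $\rtd,\mono{\dg}$-adhesivity results for $\dgr$ and $\dg$, and \cref{prop:left} for the left adjoints. Your verification that the edge functors preserve the relevant pushouts — via componentwise colimits in $\gr$, \cref{lemma:gpush}, and the observation that downward-closed monos between simple graphs are regular — is precisely the detail the paper leaves implicit, and it is carried out correctly.
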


 As in \cref{subsec:hgraph,subsec:hhgraph}, we can exploit these two last corollaries to add interfaces.
 
 \begin{defi}
 The categories $\catname{SHIGraph}$ and $\catname{DAGIHGraph}$ of, respectively, $\dgr$-hypergraphs and $\dg$-hypergraphs \emph{with interfaces} are defined as $\comma{\Delta_{\catname{SHGraph}}}{\id{\catname{SHGraph}}}$ and $\comma{\Delta_{\catname{DAGHGraph}}}{\id{\catname{DAGHGraph}}}$.
 \end{defi}

If we unravel the definition we get the following description of these two categories.
An object in $\catname{SHIGraph}$ ($\catname{DAGHGraph}$) is a triple $(\mathcal{G}, V, e), X, f)$ where $(\mathcal{G}, V, e)$ is a $\dgr$-hypergraph (a $\dg$-hypergraph) and $f$ is a function $X\rightarrow V$.  An arrow $((\mathcal{G}, V, e), X, f)\to ((\mathcal{H}, w, e'), Y, g)$ is then a triple $((h_1, h_2),k,l)$ made by $(h_1, h_2):\mathcal{G}\rightarrow \mathcal{H}$ in $\dgr$ (in $\dg$),  $k:V\to W$ and $l:X\rightarrow Y$ in $\catname{Set}$ such that the following squares commute
	\begin{center}
		\begin{tikzpicture}
		\node(A) at(2,0){$E_{\mathcal{G}}$};
		\node(B) at (3.5,0){$V^\star$};
		\node(C) at(3.5,-1.5){$W^\star$};
		\node(D) at (2,-1.5){$E_{\mathcal{H}}$};
		\draw[->](A)--(B)node[pos=0.5, above]{$e$};
		\draw[->](D)--(C)node[pos=0.5, below]{$e'$};
		\draw[->](A)--(D)node[pos=0.5, left]{$h_2$};
		\draw[->](B)--(C)node[pos=0.5, right]{$k^\star$};
		\node(A) at(5,0){$X$};
		\node(B) at (6.5,0){$V$};
		\node(C) at(6.5,-1.5){$W$};
		\node(D) at (5,-1.5){$Y$};
		\draw[->](A)--(B)node[pos=0.5, above]{$f$};
		\draw[->](D)--(C)node[pos=0.5, below]{$g$};
		\draw[->](A)--(D)node[pos=0.5, left]{$l$};
		\draw[->](B)--(C)node[pos=0.5, right]{$k$};
		\end{tikzpicture}
	\end{center}

In this setting \cref{thm:dg} becomes the following.
\begin{thm}
	$\catname{SHGraph}$ is adhesive with respect to the classes
	\begin{gather*}
	\{((h_1,h_2), k, l)\in \arr{SHGraph}\mid (h_1,h_2)\in \reg{\dgr}, k, l \in \mono{Set}\}\\
	\{((h_1,h_2), k, l)\in \arr{SHGraph}\mid (h_1,h_2)\in \mono{\dgr}\}
	\end{gather*}
	while $\catname{DAGHGraph}$ is adhesive with respect to the classes	\begin{gather*}
	\{((h_1,h_2), k, l)\in \arr{DAGHGraph}\mid (h_1,h_2)\in \rtd, k, l \in \mono{Set}\}\\
	\{((h_1,h_2), k, l)\in \arr{DAGHGraph}\mid (h_1,h_2)\in \mono{\dg}\}
	\end{gather*}
\end{thm}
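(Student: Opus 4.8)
The plan is to present both categories as comma categories and conclude by a direct application of \cref{comma}. By the preceding definition, $\catname{SHIGraph} = \comma{\Delta_{\catname{SHGraph}}}{\id{\catname{SHGraph}}}$ and $\catname{DAGIHGraph} = \comma{\Delta_{\catname{DAGHGraph}}}{\id{\catname{DAGHGraph}}}$. So in the notation of \cref{comma} I take $\catname{A} = \catname{Set}$, let $\catname{C} = \catname{B}$ be the base hypergraph category ($\catname{SHGraph}$ or $\catname{DAGHGraph}$), set $L$ to be the left adjoint $\Delta$ supplied by \cref{thm:dg}, and $R = \id{\catname{B}}$. The two cases are handled in exactly the same way, so I describe $\catname{SHIGraph}$ and indicate the obvious substitutions for $\catname{DAGIHGraph}$.

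To invoke \cref{comma} I must check its four hypotheses. First, $\catname{Set}$ is adhesive, hence $\mono{Set}, \arr{Set}$-adhesive by \cref{rem:salva}; this is the category $\catname{A}$, with $\mathcal{M} = \mono{Set}$ and $\mathcal{N} = \arr{Set}$. Second, by \cref{thm:dg} the base category $\catname{B}$ is $\mathcal{M}', \mathcal{N}'$-adhesive, where for $\catname{SHGraph}$ one has $\mathcal{M}' = \{((h_1,h_2),k) \mid (h_1,h_2) \in \reg{\dgr},\ k \in \mono{Set}\}$ and $\mathcal{N}' = \{((h_1,h_2),k) \mid (h_1,h_2) \in \mono{\dgr}\}$, and analogously with $\rtd$ and $\mono{\dg}$ for $\catname{DAGHGraph}$. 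Third, $L = \Delta$ is a left adjoint, so it preserves all colimits and in particular every $\mono{Set}, \arr{Set}$-pushout. Fourth, $R = \id{\catname{B}}$ trivially preserves pullbacks.

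Applying \cref{comma} then yields that the comma category is $\cma{M}{M'}, \cma{N}{N'}$-adhesive. It remains to match these product classes with those displayed in the statement. A morphism of the comma category is a pair whose $\catname{Set}$-component is the interface map $l$ and whose $\catname{B}$-component is $((h_1,h_2),k)$; hence $\cma{M}{M'}$ consists of those triples with $l \in \mono{Set}$ and $((h_1,h_2),k) \in \mathcal{M}'$, i.e.\ $(h_1,h_2) \in \reg{\dgr}$ and $k, l \in \mono{Set}$, while $\cma{N}{N'}$ imposes $l \in \arr{Set}$ (no constraint) together with $(h_1,h_2) \in \mono{\dgr}$. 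These are exactly the two classes for $\catname{SHIGraph}$, and the $\catname{DAGIHGraph}$ classes are obtained verbatim with $\rtd$ and $\mono{\dg}$ in place of $\reg{\dgr}$ and $\mono{\dgr}$.

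Since every step is a direct instance of the general machinery of \cref{sec:comma}, there is no genuine obstacle. The only point requiring care is the bookkeeping of which factor of the comma morphism carries which class: the $\catname{Set}$ factor contributes $\mono{Set}$ on the $\mathcal{M}$-side but the whole of $\arr{Set}$ on the $\mathcal{N}$-side, which is precisely why the second class of the statement places no injectivity constraint on the interface component $l$. Verifying that the product classes satisfy conditions (i)--(iii) of \cref{def:class} is already absorbed into the conclusion of \cref{comma}, so nothing further need be proved here.
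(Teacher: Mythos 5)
Your proposal is correct and is essentially the paper's own argument: the paper presents this theorem as an immediate instance of \cref{comma} applied to the comma presentation $\comma{\Delta_{\catname{SHGraph}}}{\id{\catname{SHGraph}}}$ (and its DAG analogue), with the hypotheses discharged exactly as you do --- $\catname{Set}$ is adhesive, the base category is adhesive by \cref{thm:dg}, $\Delta$ is a colimit-preserving left adjoint, and the identity functor trivially preserves pullbacks. Your reading of the statement as concerning the interface categories $\catname{SHIGraph}$ and $\catname{DAGIHGraph}$, and your bookkeeping of which component of a comma morphism carries which class (in particular why the second class imposes no condition on $l$), agree with the paper's intent.
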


\begin{figure}[t]
	
	\begin{center}
		\begin{tikzpicture}[scale=0.8]
		\node[circle,fill=black,inner sep=0pt,minimum size=6pt,label=left:{$x$}] (V) at (0,0) {};
		
		\node[circle,fill=black,inner sep=0pt,minimum size=6pt,label=right:{$y$}] (W) at (-2,0) {};
		
		\node at(0,2.95){$b$};
		\node at(0,1.65){$a$};	
		\node at(-2,1.65){$c$};	
		
		\draw[-](V)..controls(0.5,0.5)and(0.5,1.25)..(0.25,1.25);
		\draw[-](V)..controls(1,0.5)and(1,2)..(0.25,2.5);
		\draw[-](V)..controls(-0.5,0.5)and(-1.5,1)..(-1.75,1.25);
		\draw[-](W)..controls(-3,0.5)and(-2.5,1.25)..(-2.25,1.25);
		\draw[-latex, red, line width=1pt](-0.5,1.25)--(-1.5,1.25);
		\draw[-latex, red, line width=1pt](-0.5,2.5)..controls(-0.8,2.5)and(-1.5,2.3)..(-1.75,1.8);
		\draw[rounded corners] (-0.25, 1) rectangle (0.25, 1.5) {};
		\draw[rounded corners] (-1.75, 1) rectangle (-2.25, 1.5) {};
		\draw[rounded corners] (-0.25, 2.25) rectangle (0.25, 2.75) {};

		\node[circle,fill=black,inner sep=0pt,minimum size=6pt,label=left:{$x$}] (V) at (5,0) {};
		
		\node[circle,fill=black,inner sep=0pt,minimum size=6pt,label=right:{$y$}] (W) at (3,0) {};

		\node at(5,1.65){$a$};	
		\node at(3,1.75){$b$};	
		
		\draw[-](V)..controls(5.5,0.5)and(5.5,1.25)..(5.25,1.25);
		\draw[-](V)..controls(4.5,0.5)and(3.5,1)..(3.25,1.25);
		\draw[-](W)..controls(2,0.5)and(2.5,1.25)..(2.75,1.25);
		\draw[-latex, red, line width=1pt](4.5,1.25)--(3.5,1.25);
		\draw[latex-, red, line width=1pt](4.5,1.6)..controls(4.25, 1.8)and(3.75,1.8)..(3.5,1.5);
		\draw[rounded corners] (4.75, 1) rectangle (5.25, 1.5) {};
		\draw[rounded corners] (3.25, 1) rectangle (2.75, 1.5) {};
		\end{tikzpicture}
	\end{center}
	\caption{A $\dg$-hypergraph (left) and a $\dgr$-hypergraph corresponding to the CCS process  $P = a(x).b(xy).P$ (right). A red arrow between two edges denotes the order relation $\leq$.}\label{fig:dgdag}
\end{figure}
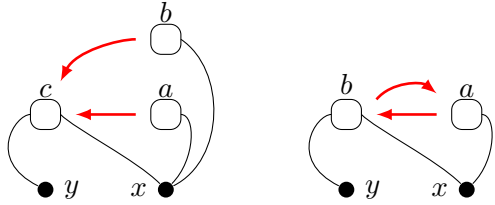

\section{Term graphs} In the past years, the use of a particular class of hypergraphs, called 
\emph{term graphs} has been advocated as a tool for the optimal implementation of terms, with the intuition that 
the graphical counterpart of trees can allow for the sharing of sub-terms~\cite{Plump-termgraph}. A brute force proof of quasiadhesivity of the category of term graphs was given in~\cite{corradini2005term}. In this section we will present the category of term graphs as a subcategory of labeled hypergraphs, moreover we will recover the result of~\cite{corradini2005term} exploiting our new criterion for adhesivity.

\begin{defi}
Let $\Sigma$ be an algebraic signature, a labelled hypergraph $l:\mathcal{G}\to \mathcal{G}^{\Sigma}$ is a \emph{term graph} if for every hyperedges $h_1, h_2\in E_{\mathcal{G}}$, if $t_{\mathcal{G}}(h_1)=t_\mathcal{G}(h_2)$ then $h_1=h_2$. We define $\tg$ to be the full subcategory of $\hyp_{\Sigma}$ and denote by $I_\Sigma$ the corresponding inclusion.
\end{defi}

\begin{rem}Notice that, by \cref{rem:label}, if $\mathcal{G}$ is a term graph then $t_{\mathcal{G}}(h)$ is a word of length $1$, i.e. an element of $V_{\mathcal{G}}$. 
\end{rem}

\begin{exa}Of the examples of \cref{sub:hyper}, only \cref{lab_2} is a term graph.
\end{exa}

 Composing $I_\Sigma$ qith $U_\Sigma:\hyp_{\Sigma}\to \catname{Set}$ we get a functor $U_{\tg}:\tg\to \catname{Set}$. Now, $\Delta_{\Sigma}(X)$ is a term graph for every set $X$, thus $\Delta_{\Sigma}$ factors through $I_\Sigma$. This allows us to conclude the following.
\begin{prop}\label{term:left}The forgetful functor $U_{\tg}:\tg\to \catname{Set}$ has a left adjoint $\Delta_{\tg}$.
 \end{prop}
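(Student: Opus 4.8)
The plan is to obtain the desired adjunction by restricting the already-established adjunction $\Delta_\Sigma\dashv U_\Sigma$ to the full subcategory $\tg$. First I would record the observation made just above the statement: for every set $X$ the labelled hypergraph $\Delta_\Sigma(X)\colon\Delta_{\hyp}(X)\to\mathcal{G}^\Sigma$ has empty hyperedge set, so the defining condition of a term graph (which constrains pairs of hyperedges sharing a target) holds vacuously. Hence each $\Delta_\Sigma(X)$ is an object of $\tg$, and $\Delta_\Sigma$ corestricts to a functor $\Delta_{\tg}\colon\catname{Set}\to\tg$ with $I_\Sigma\circ\Delta_{\tg}=\Delta_\Sigma$. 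This is exactly the factorisation alluded to in the paragraph preceding the statement.

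Next I would transport the hom-set bijection of $\Delta_\Sigma\dashv U_\Sigma$ along the fully faithful inclusion $I_\Sigma$. Fix a set $X$ and a term graph $(\mathcal{H},h)$. Since $I_\Sigma$ is full and faithful, restriction gives a bijection $\tg(\Delta_{\tg}(X),(\mathcal{H},h))\cong\hyp_\Sigma(\Delta_\Sigma(X),(\mathcal{H},h))$; the adjunction $\Delta_\Sigma\dashv U_\Sigma$ gives a natural bijection $\hyp_\Sigma(\Delta_\Sigma(X),(\mathcal{H},h))\cong\catname{Set}(X,U_\Sigma(\mathcal{H},h))$; and by definition $U_\Sigma(I_\Sigma(\mathcal{H},h))=U_{\tg}(\mathcal{H},h)$. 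Composing the three yields a bijection $\tg(\Delta_{\tg}(X),(\mathcal{H},h))\cong\catname{Set}(X,U_{\tg}(\mathcal{H},h))$, natural in both arguments since each constituent is, which is precisely the statement $\Delta_{\tg}\dashv U_{\tg}$.

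Equivalently, and perhaps more concretely, I could argue via the universal property of the unit. Recall that the unit of $\Delta\dashv U_{\hyp}$ is the identity, so $U_\Sigma(\Delta_\Sigma(X))=X$ and the unit of $\Delta_\Sigma\dashv U_\Sigma$ at $X$ is $\id{X}$. Given any $f\colon X\to U_{\tg}(\mathcal{H},h)$ with $(\mathcal{H},h)$ a term graph, the adjunction $\Delta_\Sigma\dashv U_\Sigma$ produces a unique $\hyp_\Sigma$-morphism $\Delta_\Sigma(X)\to(\mathcal{H},h)$ whose image under $U_\Sigma$ is $f$; fullness of $\tg$ guarantees this morphism already lives in $\tg$, providing the required universal arrow and its uniqueness.

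There is essentially no hard step here: once the vacuous verification that $\Delta_\Sigma(X)\in\tg$ is in place, the conclusion is a purely formal instance of the general fact that a left adjoint whose essential image lands in a full subcategory corestricts to a left adjoint of the restricted right adjoint. The only point deserving a line of care is the naturality of the composite bijection, which is immediate as it is assembled from the naturality of the adjunction isomorphism and of hom-set restriction along a fully faithful functor.
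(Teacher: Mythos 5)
Your proposal is correct and follows essentially the same route as the paper: the paper's argument is precisely that $\Delta_\Sigma(X)$ is (vacuously) a term graph since it has no hyperedges, so $\Delta_\Sigma$ factors through the full inclusion $I_\Sigma$, and the adjunction $\Delta_\Sigma\dashv U_\Sigma$ then restricts along $U_{\tg}=U_\Sigma\circ I_\Sigma$. You merely spell out the formal corestriction-of-a-left-adjoint step that the paper leaves implicit, which is a fine level of detail.
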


Take now a mono $(i,j):\mathcal{H}\to \mathcal{G}$ between  $l:\mathcal{G}\to \mathcal{G}^{\Sigma}$ and $l':\mathcal{H}\to \mathcal{G}^{\Sigma}$ in $\hyp$, using \cref{prop:mono}, if $l$ is a term graph then $l'$ belongs to $\tg$ too. In particular we can apply this argument when $l'$ is the equalizer of two parallel arrows between term graphs.
\begin{prop}\label{prop:equ}$\tg$ has equalizers and $I_{\Sigma}$ creates them.
\end{prop}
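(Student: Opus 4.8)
The plan is to build equalizers in $\tg$ by borrowing them from the ambient category $\hyp_\Sigma$ and checking that the construction never leaves $\tg$. Since $\hyp$ is a presheaf topos, hence complete, and $\hyp_\Sigma=\hyp/\mathcal{G}^\Sigma$ is one of its slices, the category $\hyp_\Sigma$ has equalizers (they are connected limits, so they are computed exactly as in $\hyp$). Given a parallel pair $f,g\colon\mathcal{H}\rightrightarrows\mathcal{K}$ of term graphs, fullness of $I_\Sigma$ lets me regard it as a parallel pair in $\hyp_\Sigma$, so it has an equalizer $e\colon\mathcal{E}\to\mathcal{H}$ there.

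The crux is to show $\mathcal{E}\in\tg$. As every equalizer is a monomorphism, $e$ is mono, and by \cref{prop:mono} its two components are injective; the observation preceding this statement (an application of \cref{prop:mono}, using that a hypergraph morphism intertwines the target maps) then shows that a subobject of a term graph is again a term graph, so $\mathcal{E}$ is a term graph and $e$ is an arrow of $\tg$. It remains to see that $e$ is the equalizer of $f,g$ in $\tg$ itself: if $\mathcal{T}$ is any term graph and $u\colon\mathcal{T}\to\mathcal{H}$ is an arrow of $\tg$ with $f\circ u=g\circ u$, then $u$ equalizes $f,g$ in $\hyp_\Sigma$ as well, so there is a unique $\overline{u}\colon\mathcal{T}\to\mathcal{E}$ in $\hyp_\Sigma$ with $e\circ\overline{u}=u$; by fullness $\overline{u}$ already lives in $\tg$, and its uniqueness there is inherited from $\hyp_\Sigma$. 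Hence $\tg$ has equalizers.

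Creation is now immediate. By construction $I_\Sigma$ carries the $\tg$-equalizer $e$ to the equalizer $e$ in $\hyp_\Sigma$, so it preserves equalizers; and, being full and faithful, it reflects them, so the $\tg$-equalizer is the unique lift of the $\hyp_\Sigma$-equalizer. Following the notion of creation used in this paper (existence of the limit in $\tg$ together with preservation and reflection along $I_\Sigma$), this is exactly the statement that $I_\Sigma$ creates equalizers. I do not expect a genuine obstacle here: everything rests on the single fact that subobjects of term graphs are term graphs, which \cref{prop:mono} and the preceding observation already supply, while the remaining steps are the routine bookkeeping for a full subcategory closed under the relevant subobjects.
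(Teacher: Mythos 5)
Your proposal is correct and follows essentially the same route as the paper: the paper's argument is precisely that, by \cref{prop:mono}, any mono in $\hyp_\Sigma$ into a term graph has term-graph domain, and applying this to the equalizer (computed in $\hyp_\Sigma$, which exists since equalizers are connected limits) of a parallel pair of term graphs shows $\tg$ is closed under equalizers, whence $I_\Sigma$ creates them. Your additional spelling-out of the universal property via fullness and of preservation/reflection is exactly the routine bookkeeping the paper leaves implicit.
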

We have a similar result even for binary products.
\begin{prop} $\tg$ has binary products and $I_\Sigma$ creates them.
\end{prop}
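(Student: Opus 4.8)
The plan is to exploit the fact that $\hyp_{\Sigma}=\hyp/\mathcal{G}^{\Sigma}$ is a slice category, so that the binary product of two objects $l\colon\mathcal{G}\to\mathcal{G}^{\Sigma}$ and $l'\colon\mathcal{H}\to\mathcal{G}^{\Sigma}$ is nothing but their pullback over $\mathcal{G}^{\Sigma}$ computed in $\hyp$. Such pullbacks exist because $\hyp$ has all connected limits (\cref{lim}), and pullback diagrams are connected. Since $I_{\Sigma}$ is a full and faithful inclusion, to prove that it creates binary products it suffices to check that, whenever $\mathcal{G}$ and $\mathcal{H}$ are term graphs, the pullback computed in $\hyp_{\Sigma}$ is again a term graph; fullness then upgrades it to a product in $\tg$, exactly as in the argument preceding \cref{prop:equ}.

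First I would spell out the pullback explicitly. By \cref{lim} limits in $\hyp$ are created component-wise by the forgetful functors to $\catname{Set}$, so the pullback $\mathcal{P}$ of $l$ and $l'$ over $\mathcal{G}^{\Sigma}$ has node set $V_{\mathcal{P}}=V_{\mathcal{G}}\times V_{\mathcal{H}}$ (the node set of $\mathcal{G}^{\Sigma}$ being a singleton, the pullback of nodes is just a product) and hyperedge set $E_{\mathcal{P}}=\{(h,h')\in E_{\mathcal{G}}\times E_{\mathcal{H}}\mid l(h)=l'(h')\}$, i.e. pairs of hyperedges carrying the same label. The labelling forces $\ari(l(h))=\lgt(s_{\mathcal{G}}(h))=\lgt(s_{\mathcal{H}}(h'))$, so the induced source $s_{\mathcal{P}}(h,h')$ is the well-defined word over $V_{\mathcal{P}}$ whose two projections are $s_{\mathcal{G}}(h)$ and $s_{\mathcal{H}}(h')$; by \cref{rem:label} the targets $t_{\mathcal{G}}(h)$ and $t_{\mathcal{H}}(h')$ both have length $1$, whence $t_{\mathcal{P}}(h,h')=(t_{\mathcal{G}}(h),t_{\mathcal{H}}(h'))\in V_{\mathcal{P}}$.

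Next I would verify the defining condition of a term graph, namely injectivity of $t_{\mathcal{P}}$. If $t_{\mathcal{P}}(h_1,h_1')=t_{\mathcal{P}}(h_2,h_2')$, then comparing the two coordinates gives $t_{\mathcal{G}}(h_1)=t_{\mathcal{G}}(h_2)$ and $t_{\mathcal{H}}(h_1')=t_{\mathcal{H}}(h_2')$; since $\mathcal{G}$ and $\mathcal{H}$ are term graphs, these equalities force $h_1=h_2$ and $h_1'=h_2'$, so $(h_1,h_1')=(h_2,h_2')$. Hence $\mathcal{P}$ (with its evident labelling $l\circ\pi_1=l'\circ\pi_2$) is a term graph, $I_{\Sigma}$ creates the product, and the claim follows.

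I do not expect a serious obstacle here: once the slice/pullback reformulation is in place, the only real content is the length-$1$ property of targets (\cref{rem:label}), which makes the induced target map land in $V_{\mathcal{P}}$ and reduces its injectivity to that of $t_{\mathcal{G}}$ and $t_{\mathcal{H}}$ coordinate-wise. The mildest point of care is checking that the source words on $\mathcal{P}$ are well defined (equal lengths on the two sides), which is guaranteed precisely by the two paired edges sharing a label of fixed arity.
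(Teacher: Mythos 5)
Your proposal is correct and follows essentially the same route as the paper: the binary product in the slice $\hyp_{\Sigma}$ is realised as the pullback over $\mathcal{G}^{\Sigma}$ in $\hyp$, computed componentwise on nodes and hyperedges, and the term-graph condition is verified by pushing an equality of targets through the two projections and using that $\mathcal{G}$ and $\mathcal{H}$ are term graphs together with the joint injectivity of the pullback projections. The only difference is presentational: you unwind the pullback into an explicit set-theoretic description (pairs of equally-labelled edges, products of node sets), whereas the paper argues directly with the commuting pullback squares in $\catname{Set}$; both arguments carry the same content.
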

\begin{proof}
	Let $l:\mathcal{G}\to \mathcal{G}^{\Sigma}$ and $t:\mathcal{H}\to \mathcal{G}^{\Sigma}$ be two term graphs, their product in $\hyp_{\Sigma}$ is given by $p:\mathcal{P}\to \mathcal{G}^{\Sigma}$, where the square 
	\begin{center}
		\begin{tikzpicture}
		\node(A) at(2,0){$\mathcal{P}$};
		\node(B) at (4,0){$\mathcal{G}$};
		\node(C) at(4,-1.5){$\mathcal{G}^\Sigma$};
		\node(D) at (2,-1.5){$\mathcal{H}$};
		\draw[->](A)--(B)node[pos=0.5, above]{$(p_E, p_V)$};
		\draw[->](D)--(C)node[pos=0.5, below]{$(l', !_{V_{\mathcal{H}}})$};
		\draw[->](A)--(D)node[pos=0.5, left]{$(q_E, q_V)$};
		\draw[->](B)--(C)node[pos=0.5, right]{$(l, !_{V_{\mathcal{G}}})$};
		\end{tikzpicture}
	\end{center}
	is a pullback in $\hyp$ and $(p, !_{V_\mathcal{P}})$ is the unique diagonal filling it. Since $\hyp$ is a comma category, this means that the squares 
	\begin{center}
		\begin{tikzpicture}
		\node(A) at(2,0){$E_{\mathcal{P}}$};
		\node(B) at (3.5,0){$E_{\mathcal{G}}$};
		\node(C) at(3.5,-1.5){$E_{\mathcal{G}^{\Sigma}}$};
		\node(D) at (2,-1.5){$E_{\mathcal{H}}$};
		\draw[->](A)--(B)node[pos=0.5, above]{$p_E$};
		\draw[->](D)--(C)node[pos=0.5, below]{$l'$};
		\draw[->](A)--(D)node[pos=0.5, left]{$q_E$};
		\draw[->](B)--(C)node[pos=0.5, right]{$l$};
		\node(A) at(5,0){$V_{\mathcal{P}}$};
		\node(B) at (6.5,0){$V_{\mathcal{G}}$};
		\node(C) at(6.5,-1.5){$\{v\}$};
		\node(D) at (5,-1.5){$V_{\mathcal{H}}$};
		\draw[->](A)--(B)node[pos=0.5, above]{$p_V$};
		\draw[->](D)--(C)node[pos=0.5, below]{$!_{V_{\mathcal{H}}}$};
		\draw[->](A)--(D)node[pos=0.5, left]{$q_V$};
		\draw[->](B)--(C)node[pos=0.5, right]{$!_{V_{\mathcal{G}}}$};
		\end{tikzpicture}
	\end{center}
	are pullbacks in $\catname{Set}$. Moreover $t_{\mathcal{P}}$ is such that the diagram
	\begin{center}
	\begin{tikzpicture}
	\node(A) at(2,0){$E_{\mathcal{P}}$};
	\node(B) at (3.5,0){$E_{\mathcal{G}}$};
	\node(C) at(3.5,-1.5){$E_{\mathcal{G}^{\Sigma}}$};
	\node(D) at (2,-1.5){$E_{\mathcal{H}}$};
	\draw[->](A)--(B)node[pos=0.5, above]{$p_E$};
	\draw[->](D)--(C)node[pos=0.5, below]{$l'$};
	\draw[->](A)--(D)node[pos=0.5, left]{$q_E$};
	\draw[->](B)--(C)node[pos=0.5, right]{$l$};
	\node(E) at(0.5,1.5){$V_{\mathcal{P}}^\star$};
	\node(F) at (5,1.5){$V_{\mathcal{G}}^\star$};
	\node(G) at(5,-3){$\{v\}^\star$};
	\node(H) at (0.5,-3){$V_{\mathcal{H}}^\star$};
	\draw[->](E)--(F)node[pos=0.5, above]{$p^\star_V$};
	\draw[->](H)--(G)node[pos=0.5, below]{$\qty(!_{V_{\mathcal{H}}})^\star$};
	\draw[->](E)--(H)node[pos=0.5, left]{$q^\star_V$};
	\draw[->](F)--(G)node[pos=0.5, right]{$\qty(!_{V_{\mathcal{G}}})^\star$};
	\draw[->](A)--(E)node[pos=0.5, right]{$t_{\mathcal{P}}$};
		\draw[->](B)--(F)node[pos=0.5, left]{$t_{\mathcal{G}}$};
	\draw[->](D)--(H)node[pos=0.5, right, xshift=0.1cm]{$t_{\mathcal{H}}$};
	\draw[->](C)--(G)node[pos=0.5, left]{$t_{\mathcal{G}^{\Sigma}}$};	
	\end{tikzpicture}
\end{center}	
	
	is commutative. Take now $h_1, h_{2}\in  E_{\mathcal{P}}$ such that $t_{\mathcal{P}}(h_1)=t_{\mathcal{P}}(h_2)$, then 
	\begin{align*}
t_{\mathcal{G}}(p_E(h_1))&=p^\star_V(t_{\mathcal{P}}(h_1))=p^\star_V(t_{\mathcal{P}}(h_2))=t_{\mathcal{G}}(p_E(h_2))\\
t_{\mathcal{H}}(q_E(h_1))&=q^\star_V(t_{\mathcal{P}}(h_1))=q^\star_V(t_{\mathcal{P}}(h_2))=t_{\mathcal{H}}(q_E(h_2))
	\end{align*}
and thus $p_E(h_1)=p_E(h_2)$ and $q_E(h_1)=q_E(h_2)$, which implies $h_1=h_2$.	
\end{proof}
Since pullbacks can be computed from products and equalizers we also get the following.
\begin{cor}\label{cor:pb}$\tg$ has pullbacks and they are created by $I_\Sigma$.
\end{cor}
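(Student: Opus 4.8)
The plan is to build pullbacks out of binary products and equalizers, so that the preceding proposition on binary products and \cref{prop:equ} can be assembled directly. Recall that in any category admitting binary products and equalizers, the pullback of a cospan $\mathcal{G}\xrightarrow{u}\mathcal{K}\xleftarrow{v}\mathcal{H}$ is the equalizer of the parallel pair
\[
u\circ \pi_{\mathcal{G}},\; v\circ \pi_{\mathcal{H}}\colon \mathcal{G}\times \mathcal{H}\rightrightarrows \mathcal{K},
\]
where $\pi_{\mathcal{G}},\pi_{\mathcal{H}}$ are the two product projections. First I would fix a cospan of term graphs $\mathcal{G}\xrightarrow{u}\mathcal{K}\xleftarrow{v}\mathcal{H}$ and apply $I_\Sigma$ to view it in $\hyp_{\Sigma}$. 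The binary product $\mathcal{G}\times \mathcal{H}$ and the equalizer of the pair above exist in $\hyp_{\Sigma}$ (these are precisely the limits constructed in the preceding two propositions), hence the pullback exists there and is realized by this construction.

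Next I would lift the construction to $\tg$ in two steps. By the preceding proposition, $I_\Sigma$ creates the product $\mathcal{G}\times \mathcal{H}$, so it already lies in $\tg$; and because $\tg$ is \emph{full} in $\hyp_{\Sigma}$, the two composites $u\circ \pi_{\mathcal{G}}$ and $v\circ \pi_{\mathcal{H}}$ are morphisms of $\tg$, hence form a genuine parallel pair there. By \cref{prop:equ}, $I_\Sigma$ creates the equalizer of this pair; so the equalizer object belongs to $\tg$ and its image under $I_\Sigma$ is exactly the equalizer computed in $\hyp_{\Sigma}$, namely the pullback. Since creation entails both preservation and reflection, $\tg$ has the pullback and $I_\Sigma$ creates it.

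The argument is essentially bookkeeping, the genuine content having already been settled in the two preceding results. The only point requiring some care is that the construction of a pullback as an equalizer out of a product is compatible with creation, i.e. that the lifted product and the lifted equalizer combine into a single limiting cone over the cospan; here fullness of $I_\Sigma$ is exactly what keeps the comparison morphisms inside $\tg$, so no new obstruction arises. I therefore expect no serious difficulty beyond assembling the pieces correctly.
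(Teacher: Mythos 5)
Your proposal is correct and is essentially the paper's own argument: the paper's proof is precisely the one-line observation that pullbacks can be computed from binary products and equalizers, both of which were just shown to be created by $I_\Sigma$. Your write-up merely makes explicit the bookkeeping (fullness keeping the comparison maps inside $\tg$) that the paper leaves implicit.
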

\begin{rem}
$\tg$ in general does not have terminal objects. Since $U_{\tg}$ preserves limits, if a terminal object exists it must have the singleton as set of nodes, therefore the set of hyperedges must be empty or a singleton $\{h\}$. Now take as signature the one given by two operations $\{a , b\}$ of arity $0$; we have three term graphs with only one node $v$:
\begin{center}\begin{tikzpicture}
		
		\node[circle,fill=black,inner sep=0pt,minimum size=6pt,label=left:{$v$}] (V) at (5,0) {};
		\node[circle,fill=black,inner sep=0pt,minimum size=6pt,label=left:{$v$}] (U) at (3,0) {};
		\node at(5,1.25){$a$};	
		\node at(5,1.7){$h_1$};	
		
		\draw[-](V)..controls(5.5,0.5)and(5.5,1.25)..(5.25,1.25);
		
		\draw[rounded corners] (4.75, 1) rectangle (5.25, 1.5) {};

		\node[circle,fill=black,inner sep=0pt,minimum size=6pt,label=left:{$v$}] (V) at (7,0) {};
		
		\node at(7,1.25){$b$};	
		\node at(7,1.7){$h_2$};	
		
		\draw[-](V)..controls(7.5,0.5)and(7.5,1.25)..(7.25,1.25);
		
		\draw[rounded corners] (6.75, 1) rectangle (7.25, 1.5) {};
		
	\end{tikzpicture}
\end{center}
 There are no morphisms in $\tg$ between the last two and from the last two to the first one, therefore none of them can be terminal.
\end{rem}

\begin{rem}
	$\tg$ is not an adhesive category. In particular it does not have pushouts along all monomorphisms. Take the graphs of the previous remark and call them $\Delta_{\tg{\Sigma}}(\{v\})$, $l:\mathcal{G}_1\to \mathcal{G}^{\Sigma}$ and $l':\mathcal{G}_2\to \mathcal{G}^{\Sigma}$. The identity $\{v\}\to \{v\}$ induces 
	a span
	
	\begin{center}\begin{tikzpicture}
		
		\node[circle,fill=black,inner sep=0pt,minimum size=6pt,label=above:{$v$}] (W) at (5,0) {};
		\node[circle,fill=black,inner sep=0pt,minimum size=6pt,label=left:{$v$}] (U) at (3,-1) {};
		\node at(3,0.25){$a$};	
		\node at(3,0.7){$h_1$};	
		
		\draw[-](U)..controls(3.5,-0.5)and(3.5,0.25)..(3.25,0.25);
		
		\draw[rounded corners] (2.75, 0) rectangle (3.25, 0.5) {};

		\node[circle,fill=black,inner sep=0pt,minimum size=6pt,label=right:{$v$}] (V) at (7,-1) {};
		
		\node at(7,0.25){$b$};	
		\node at(7,0.7){$h_2$};	
		
		\draw[-](V)..controls(7.5,-0.5)and(7.5,0.25)..(7.25,0.25);
		
		\draw[rounded corners] (6.75, 0) rectangle (7.25, 0.5) {};
		\draw[->] (4.5,-0.25)--(3.5, -0.75);
		
		\draw[->] (5.5,-0.25)--(6.5, -0.75);
		\end{tikzpicture}
	\end{center}
	which cannot be completed to any square. Indeed if $h:\mathcal{H}\to \mathcal{G}^\Sigma$ another term graph with $(g_E, g_V):l\to h$ and $(k_E, k_V):l'\to h$  complete the span, than $g_E(h_1)$ and $k_E(h_2)$ both have $g_V(v)=k_V(v)$ has target, thus  $g_E(h_1)=k_E(h_2)$, which implies
	\[a=l(h_1)=h(g_E(h_1))=h(k_E(h_2))=l'(h_2)=b\]
\end{rem}

\begin{defi}
Given a hypergraph  $\mathcal{G}$, we will say that $v\in V_{\mathcal{G}}$ is an \emph{input node} if it does not belong to the image of $t_{\mathcal{G}}$. 
\end{defi}

\begin{prop}\label{prop:image} 
Let $l:\mathcal{H}\to \mathcal{G}^{\Sigma}$ be a term graph and $(f,g):\mathcal{G}\to \mathcal{H}$ an arrow of $\hyp$ such that the image of any input node is still an input node. For every $h\in E_{\mathcal{H}}$, if $t_{\mathcal{H}}(h)\in g(V_{\mathcal{G}})$ then $h\in f(E_{\mathcal{G}})$.
\end{prop}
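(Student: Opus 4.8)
The plan is to trace a single target node backwards through the morphism $(f,g)$ and then exploit the defining property of term graphs, namely that an edge is determined by its target. First I would pin down the precise shape of the data: by \cref{rem:label}, since $l:\mathcal{H}\to\mathcal{G}^{\Sigma}$ is a term graph, $t_{\mathcal{H}}(h)$ is a word of length one for every $h\in E_{\mathcal{H}}$, so it may be read as an element of $V_{\mathcal{H}}$. The hypothesis $t_{\mathcal{H}}(h)\in g(V_{\mathcal{G}})$ then says exactly that $t_{\mathcal{H}}(h)=g(v)$ for some $v\in V_{\mathcal{G}}$.

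Next I would invoke the assumption on input nodes in contrapositive form. The node $g(v)=t_{\mathcal{H}}(h)$ lies in the image of $t_{\mathcal{H}}$, hence is \emph{not} an input node of $\mathcal{H}$. Since $(f,g)$ sends input nodes to input nodes, $v$ cannot be an input node of $\mathcal{G}$ either; that is, $v$ belongs to the image of $t_{\mathcal{G}}$, so there exists an edge $h'\in E_{\mathcal{G}}$ with $t_{\mathcal{G}}(h')=v$.

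Finally I would close the loop using the commuting target square of the morphism together with the uniqueness built into term graphs. Commutativity gives $t_{\mathcal{H}}\circ f=g^{\star}\circ t_{\mathcal{G}}$, whence
\[
t_{\mathcal{H}}(f(h'))=g^{\star}(t_{\mathcal{G}}(h'))=g^{\star}(v)=g(v)=t_{\mathcal{H}}(h),
\]
using that $g^{\star}$ acts on the length-one word $v$ as $g$. Since $\mathcal{H}$ is a term graph, any two edges with equal target coincide, so $f(h')=h$ and therefore $h\in f(E_{\mathcal{G}})$, as required.

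The step I expect to need the most care is the very first one, because the definition of input node refers to membership in the image of $t_{\mathcal{G}}$, which a priori is a set of \emph{words} in $V_{\mathcal{G}}^{\star}$ rather than of nodes. I would make sure this is harmless here by observing that the morphism condition also forces every target of $\mathcal{G}$ to have length one: indeed $t_{\mathcal{H}}(f(h'))=g^{\star}(t_{\mathcal{G}}(h'))$ has length one and $g^{\star}$ preserves length, so $t_{\mathcal{G}}(h')$ is length one. Hence the identification of nodes with length-one words is unambiguous throughout, and the retrieval of an edge $h'$ with $t_{\mathcal{G}}(h')=v$ from ``$v$ is not an input node'' is legitimate.
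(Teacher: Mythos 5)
Your proof is correct and follows essentially the same route as the paper's: pick $v\in V_{\mathcal{G}}$ with $g(v)=t_{\mathcal{H}}(h)$, use preservation of input nodes (in contrapositive) to obtain an edge $h'\in E_{\mathcal{G}}$ with $t_{\mathcal{G}}(h')=v$, and then apply commutativity of the target square plus the term-graph uniqueness property to conclude $f(h')=h$. Your closing remark on identifying nodes with length-one words only makes explicit a point the paper leaves implicit, so there is nothing to add.
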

\begin{proof}
	Let $w\in V_{\mathcal{G}}$ such that $g(w)=t_{\mathcal{H}}(h)$, since $(f,g)$ sends input nodes to input nodes, there exists a $k\in E_{\mathcal{G}}$ such that $t_{\mathcal{G}}(k)=w$. Now,
	\[t_{\mathcal{H}}(f(k))=g(t_{\mathcal{G}}(k))=g(w)=t_{\mathcal{H}}(h)\]
	Since $\mathcal{H}$ is a term graph we can conclude that $f(k)=h$.
\end{proof}

We are now ready to show that regular monos are exactly monos sending input nodes to input nodes.

\begin{lem}\label{lem:reg} A mono $(i,j)$ between two term graphs  $l:\mathcal{G}\to \mathcal{G}^{\Sigma}$ and  $l':\mathcal{H}\to \mathcal{G}^{\Sigma}$ is regular if and only if it sends input nodes to input nodes.
\end{lem}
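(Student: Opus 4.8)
The plan is to prove the two implications separately. Throughout I would use that, by \cref{prop:mono} together with the fact that $I_\Sigma$ creates pullbacks (\cref{cor:pb}), an arrow $(i,j)$ of $\tg$ is mono exactly when both $i$ and $j$ are injective, and that, by \cref{prop:equ}, the regular monos of $\tg$ are precisely the equalizers of parallel pairs, computed componentwise in $\catname{Set}$ since $I_\Sigma$ creates them. I also recall from \cref{rem:label} that in a term graph every $t_\mathcal{G}(h)$ has length $1$, so $t_\mathcal{G}$ may be read as an \emph{injective} map $E_\mathcal{G}\to V_\mathcal{G}$ whose image is exactly the set of non-input nodes.

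For the direction ($\Rightarrow$) I would assume $(i,j)$ is the equalizer of a pair $(f_1,g_1),(f_2,g_2)\colon\mathcal{H}\rightrightarrows\mathcal{K}$ in $\tg$, so that $j$ identifies $V_\mathcal{G}$ with $\{w\mid g_1(w)=g_2(w)\}$ and $i$ identifies $E_\mathcal{G}$ with $\{h\mid f_1(h)=f_2(h)\}$. Let $v$ be an input node of $\mathcal{G}$ and suppose, for a contradiction, $j(v)=t_\mathcal{H}(h)$ for some $h\in E_\mathcal{H}$. As $j(v)$ lies in the node-equalizer, $g_1(j(v))=g_2(j(v))$, whence $t_\mathcal{K}(f_1(h))=g_1(t_\mathcal{H}(h))=g_2(t_\mathcal{H}(h))=t_\mathcal{K}(f_2(h))$; because $\mathcal{K}$ is a term graph this forces $f_1(h)=f_2(h)$, so $h=i(h')$ for some $h'$, giving $j(v)=t_\mathcal{H}(i(h'))=j(t_\mathcal{G}(h'))$ and, by injectivity of $j$, $v=t_\mathcal{G}(h')$, contradicting that $v$ is an input node. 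This direction is short and uses only the term-graph property and source/target preservation.

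For the direction ($\Leftarrow$) I would mimic the construction of \cref{prop:fatt}, building a parallel pair whose equalizer is $(i,j)$. Put $V':=V_\mathcal{H}\smallsetminus j(V_\mathcal{G})$, let $V_\mathcal{K}:=V_\mathcal{H}\sqcup V'$ with inclusions $\iota_1\colon V_\mathcal{H}\to V_\mathcal{K}$ and $\iota_2\colon V'\to V_\mathcal{K}$, and set $g_1:=\iota_1$, while $g_2(w):=\iota_1(w)$ for $w\in j(V_\mathcal{G})$ and $g_2(w):=\iota_2(w)$ otherwise. For edges, $\mathcal{K}$ carries, for each $h\in E_\mathcal{H}$, an edge with target $\iota_1(t_\mathcal{H}(h))$, source $\iota_1^{\star}(s_\mathcal{H}(h))$ and the label of $h$, and, whenever $t_\mathcal{H}(h)\in V'$, a further edge with target $\iota_2(t_\mathcal{H}(h))$, source $g_2^{\star}(s_\mathcal{H}(h))$ and the same label. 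Since $t_\mathcal{H}$ is injective and $\iota_1,\iota_2$ have disjoint images, distinct edges get distinct targets, so $\mathcal{K}$ is a term graph; I set $f_1(h)$ to be the first edge and $f_2(h)$ the first when $t_\mathcal{H}(h)\in j(V_\mathcal{G})$, the second otherwise.

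The crucial point making $(f_2,g_2)$ a morphism is that morphisms preserve sources: if $t_\mathcal{H}(h)\in j(V_\mathcal{G})$ then $h=i(h')\in i(E_\mathcal{G})$ by \cref{prop:image}, so $s_\mathcal{H}(h)=j^{\star}(s_\mathcal{G}(h'))$ has all its letters in $j(V_\mathcal{G})$, where $g_1$ and $g_2$ agree; hence $g_1^{\star}(s_\mathcal{H}(h))=g_2^{\star}(s_\mathcal{H}(h))$ and both maps are genuine $\tg$-morphisms over $\mathcal{G}^{\Sigma}$. By construction $g_1(w)=g_2(w)$ iff $w\in j(V_\mathcal{G})$, and $f_1(h)=f_2(h)$ iff $t_\mathcal{H}(h)\in j(V_\mathcal{G})$, which by \cref{prop:image} is equivalent to $h\in i(E_\mathcal{G})$; so the equalizer, lying in $\tg$ by \cref{prop:equ}, is the sub-term-graph $(i(E_\mathcal{G}),j(V_\mathcal{G}))$, presented isomorphically by the mono $(i,j)$, which is therefore regular. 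I expect the main obstacle to be exactly this ($\Leftarrow$) construction: checking that the ``duplicated'' pair is well defined as term-graph morphisms and agrees precisely on the image, the key input being that \cref{prop:image} upgrades ``input nodes to input nodes'' into downward closure on edges while source-preservation keeps the two maps consistent on the image.
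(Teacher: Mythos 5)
Your proof is correct and takes essentially the same route as the paper: the $(\Rightarrow)$ direction is exactly the equalizer argument that the paper leaves implicit in its appeal to \cref{prop:equ}, and the $(\Leftarrow)$ direction is the paper's own doubling construction $V_{\mathcal{H}}\sqcup(V_{\mathcal{H}}\smallsetminus j(V_{\mathcal{G}}))$ with the two parallel morphisms (the plain inclusion and the ``relocation'' map) whose equalizer is the image of $(i,j)$. The only deviations are cosmetic: you index the duplicated edges by the condition $t_{\mathcal{H}}(h)\notin j(V_{\mathcal{G}})$ rather than $h\notin i(E_{\mathcal{G}})$, which you correctly note is equivalent via \cref{prop:image}, and you verify that the constructed graph is a term graph by the injectivity of $t_{\mathcal{H}}$ instead of the paper's four-case analysis.
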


\begin{proof}
	$(\Rightarrow)$. This follows at once from \cref{prop:equ}.
		
	\noindent	
	$(\Leftarrow)$. Take $V$ and $E$ to be, respectively, $V_{\mathcal{H}}\sqcup( V_{\mathcal{H}}\smallsetminus j(V_{\mathcal{G}}))$ and $E_{\mathcal{H}}\sqcup ( E_{\mathcal{H}}\smallsetminus i(E_{\mathcal{G}}))$, with inclusions
		\[ j_1:V_{\mathcal{H}}\to V\qquad j_2:V_{\mathcal{H}}\smallsetminus j(V_{\mathcal{G}})\to V \qquad  i_1:E_{\mathcal{H}}\to E \qquad i_2:E_{\mathcal{H}}\smallsetminus i(E_{\mathcal{G}})\to E \]
		Now, we are going to use another auxiliary function 
		\[r:V_{\mathcal{H}}\to V \qquad v\mapsto \begin{cases}
		j_1(v) & v \in j(V_{\mathcal{G}})\\
		j_2(v) & v\notin j(V_{\mathcal{G}})
		\end{cases}\]
		which is clearly injective.
		
		So equipped we can define $s,t:E\rightrightarrows V^\star$ as the functions induced by
		\begin{equation*}
		\begin{split}
	s_1&: E_{\mathcal{H}}\to V^\star  \quad h\mapsto j^\star_1\qty(s_{\mathcal{H}}\qty(h))\\
		s_2&:E_{\mathcal{H}}\smallsetminus i(E_{\mathcal{G}})\to V^\star \quad h\mapsto r^\star(s_{\mathcal{H}}(h))
		\end{split} \qquad 
		\begin{split}t_1&: E_{\mathcal{H}}\to V^\star  \quad h\mapsto j^\star_1\qty(t_{\mathcal{H}}\qty(h))
		\\ 
		t_2&:E_{\mathcal{H}}\smallsetminus i(E_{\mathcal{G}})\to V^\star \qquad h\mapsto r^\star(t_{\mathcal{H}}(h))
		\end{split}
		\end{equation*}
	
	Let now $\mathcal{K}$ be the hypergraph $(E, V, s, t)$, and take as label $q:\mathcal{K}\to \mathcal{G}^{\Sigma}$ the morphism induced by $l':E_{\mathcal{H}}\to E_{\mathcal{G}^{\Sigma}}$ and its restriction to $E_{\mathcal{H}}\smallsetminus i(E_{\mathcal{G}})$.	We have now to check that  $q:\mathcal{K}\to \mathcal{G}^{\Sigma}$ is actually a term graph. Suppose that $t(h_1)=t(h_2)$, we have three cases.
	\begin{itemize}
		\item $h_1=i_1(h)$ and $h_2=i_1(k)$ for some $h$, $k$ in $E_{\mathcal{H}}$. Then \[j^{\star}_{1}\qty(t_{\mathcal{H}}(h))=t(i_1(h))=t(h_1)=t(h_2)=t(i_1(k))=j^{\star}_{1}\qty(t_{\mathcal{H}}(k))\] 
		But $j^\star_1$ is injective and thus
		\[t_{\mathcal{H}}(h)=t_{\mathcal{H}}(k)\]
		from which the thesis follows since $l':\mathcal{H}\to \mathcal{G}^{\Sigma}$ is a term graph.
		\item $h_1=i_2(h)$ and $h_2=i_2(k)$ for some $h$, $k$ in $E_{\mathcal{H}}\smallsetminus i(E_{\mathcal{G}})$. As before we can compute to get \[r^\star\qty(t_{\mathcal{H}}(h))=t(i_2(h))=t(h_1)=t(h_2)=t(i_2(k))=r^\star\qty(t_{\mathcal{H}}(k))\] 
		and thus, exploiting \cref{rem:mono}, $h_1=h_2$.
		\item $h_1=i_1(h)$ and $h_2=i_2(k)$ for some $h\in E_{\mathcal{H}}$, $k$ in $E_{\mathcal{H}}\smallsetminus i(E_{\mathcal{G}})$. By the definition of $t$, this can happen only if $t_{\mathcal{H}}(k)\in j(V_{\mathcal{G}})$, therefore, using \cref{prop:image}, $k$ must be an element of $i(E_{\mathcal{G}})$, which is absurd.
		\item $h_1=i_2(h)$ and $h_2=i_1(k)$ for some $h\in E_{\mathcal{H}}$, $k$ in $E_{\mathcal{H}}\smallsetminus i(E_{\mathcal{G}})$. This is done as in the previous point, switching the roles of $h_1$ and $h_2$.
	\end{itemize}
	Now, by construction $(i_1, j_1)$ defines an arrow $\mathcal{H}\to \mathcal{K}$, which is also a morphism of $\tg$. On the other hand we can construct another arrow $(f,r)$ parallel to it defining
	\begin{align*}
	f:E_{\mathcal{H}}\to E \quad h \mapsto
	\begin{cases}
	i_1(h) &h\in i(E_{\mathcal{G}})\\
	i_2(h) &h\notin i(E_{\mathcal{G}})
	\end{cases}
	\end{align*}
	and noticing that 
	\begin{align*}s(f(h))&=\begin{cases}
	s_1(h)&h\in i(E_{\mathcal{G}})\\
	s_2(h) &h\notin i(E_{\mathcal{G}})
	\end{cases}=\begin{cases}
	j^\star_1(s_{\mathcal{H}}(h))&h\in i(E_{\mathcal{G}})\\
	r^\star(s_{\mathcal{H}}(h)) &h\notin i(E_{\mathcal{G}})
	\end{cases}=r^\star\qty(s_{\mathcal{H}}\qty(h))\\
	t(f(h))&=\begin{cases}
	t_1(h)&h\in i(E_{\mathcal{G}})\\
	t_2(h) &h\notin i(E_{\mathcal{G}})
	\end{cases}=\begin{cases}
	j^\star_1(t_{\mathcal{H}}(h))&h\in i(E_{\mathcal{G}})\\
	r^\star(t_{\mathcal{H}}(h)) &h\notin i(E_{\mathcal{G}})
	\end{cases}=r^\star\qty(t_{\mathcal{H}}\qty(h))
	\end{align*}
	Where the last equalities follows since $h\in i(E_{\mathcal{G}})$ implies that
	\[s_{\mathcal{H}}(h)=s_{\mathcal{H}}\qty(i\qty(k))=j^\star\qty(s_{\mathcal{G}}(k))\qquad t_{\mathcal{H}}(h)=t_{\mathcal{H}}\qty(i\qty(k))=j^\star\qty(t_{\mathcal{G}}(k))\]

	By construction
	$q(f(h))=l'(h)$, thus $(f,g)$ is a morphism in $\tg$. Now, $\mathcal{G}$ is the equalizer of $(f,g)$ and $(i,j)$ in $\hyp$, thus it is their equalizer even in $\hyp_{\Sigma}$, and the thesis follows since the inclusion $\tg\to \hyp_{\Sigma}$ reflects limits.
\end{proof}

\begin{prop}\label{prop:push}
	Let $l_0:\mathcal{G}\to \mathcal{G}^{\Sigma}$, $l_1:\mathcal{H}\to \mathcal{G}^{\Sigma}$ and $l_2:\mathcal{K}\to \mathcal{G}^{\Sigma}$ be term graphs and $(f_1, g_1):\mathcal{G}\to \mathcal{H}$, $(f_2, g_2):\mathcal{G}\to \mathcal{K}$ two morphisms between them and suppose that $(f_1, g_1)$ is a regular mono. Then their pushout $p:\mathcal{P}\to \mathcal{G}^{\Sigma}$ in $\hyp_{\Sigma}$ is a term graph too.
\end{prop}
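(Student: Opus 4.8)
The plan is to compute the pushout explicitly and then check the defining property of a term graph directly on the induced target map. Since $\hyp_{\Sigma}=\hyp/\mathcal{G}^{\Sigma}$ is a slice category, its pushouts are created by the forgetful functor to $\hyp$, and by \cref{colim} (applied with $L=\id{\catname{Set}}$, which preserves all colimits) the pushout is computed componentwise in $\catname{Set}$. So I would write $\mathcal{P}$ with $E_{\mathcal{P}}$ the pushout of $E_{\mathcal{H}}\xleftarrow{f_1}E_{\mathcal{G}}\xrightarrow{f_2}E_{\mathcal{K}}$ and $V_{\mathcal{P}}$ the pushout of $V_{\mathcal{H}}\xleftarrow{g_1}V_{\mathcal{G}}\xrightarrow{g_2}V_{\mathcal{K}}$, with coprojections $(p_1,q_1):\mathcal{H}\to\mathcal{P}$ and $(p_2,q_2):\mathcal{K}\to\mathcal{P}$. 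Because $(f_1,g_1)$ is mono, \cref{prop:mono} makes $f_1,g_1$ injective, so \cref{lem:push} tells me that the opposite coprojections $p_2$ and $q_2$ are injective and that every edge of $\mathcal{P}$ lies in the image of $p_1$ or of $p_2$. Also, by \cref{rem:label} every target in $\mathcal{P}$ (and in $\mathcal{H},\mathcal{K}$) is a single node, so I may treat $t_{\mathcal{P}}$ as valued in $V_{\mathcal{P}}$.

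Then I would take $h_1,h_2\in E_{\mathcal{P}}$ with $t_{\mathcal{P}}(h_1)=t_{\mathcal{P}}(h_2)$ and split into cases according to this surjectivity. The easy case is when both $h_i$ come from $E_{\mathcal{K}}$, say $h_i=p_2(k_i)$: then $q_2(t_{\mathcal{K}}(k_1))=q_2(t_{\mathcal{K}}(k_2))$, injectivity of $q_2$ gives $t_{\mathcal{K}}(k_1)=t_{\mathcal{K}}(k_2)$, and since $\mathcal{K}$ is a term graph, $k_1=k_2$, whence $h_1=h_2$. The remaining cases all involve at least one edge $a\in E_{\mathcal{H}}$ with $h=p_1(a)$, where I cannot use injectivity of $p_1,q_1$.

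The crux is to reduce these cases to the easy one using regularity of $(f_1,g_1)$. By \cref{lem:reg} a regular mono between term graphs sends input nodes to input nodes, so $(f_1,g_1)$ satisfies the hypothesis of \cref{prop:image}. Now in each remaining case the equality of targets, combined with the explicit description of the node pushout in \cref{lem:push} (points 4 and 5), yields witnesses in $V_{\mathcal{G}}$ showing that $t_{\mathcal{H}}(a)\in g_1(V_{\mathcal{G}})$. \cref{prop:image} then forces $a\in f_1(E_{\mathcal{G}})$, say $a=f_1(e)$, and commutativity of the pushout square ($p_1\circ f_1=p_2\circ f_2$) rewrites $h=p_1(a)=p_2(f_2(e))$ as an element of $p_2(E_{\mathcal{K}})$. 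This pushes every such edge into $p_2(E_{\mathcal{K}})$, bringing me back to the injective situation of the easy case, where the term-graph property of $\mathcal{K}$ closes the argument.

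The main obstacle is precisely this asymmetry: the coprojections $p_1,q_1$ coming from $\mathcal{H}$ need not be injective, so the naive argument (cancel the coprojection, then invoke the source term graph) fails on that side. Everything hinges on the observation that an edge of $\mathcal{H}$ whose target lies in the image of $g_1$ must already come from $\mathcal{G}$ — this is where regularity, via \cref{lem:reg} and \cref{prop:image}, is indispensable, and it is the only place where the hypothesis that $(f_1,g_1)$ is \emph{regular} rather than merely mono is used.
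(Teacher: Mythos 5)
Your strategy is in essence the paper's own: compute the pushout componentwise in $\catname{Set}$, use \cref{lem:push} for the explicit description, and use regularity of $(f_1,g_1)$ --- through \cref{lem:reg} and \cref{prop:image} --- to transfer any edge of $\mathcal{H}$ whose target lies in $g_1(V_{\mathcal{G}})$ back to $\mathcal{G}$ and hence, across the commuting pushout square, into the injective $\mathcal{K}$-side where the term-graph property of $\mathcal{K}$ concludes. That is exactly the engine of the paper's proof, so the comparison reduces to checking your case analysis.

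There is, however, one sub-case where your reduction as stated breaks down. You claim that \emph{in each} remaining case the equality of targets yields witnesses in $V_{\mathcal{G}}$ proving $t_{\mathcal{H}}(a)\in g_1(V_{\mathcal{G}})$. This fails when both edges come from $\mathcal{H}$, say $h_i=p_1(a_i)$, and $t_{\mathcal{H}}(a_1)=t_{\mathcal{H}}(a_2)$ is a node of $V_{\mathcal{H}}\smallsetminus g_1(V_{\mathcal{G}})$: then $q_1(t_{\mathcal{H}}(a_1))=q_1(t_{\mathcal{H}}(a_2))$ holds with no witnesses in $V_{\mathcal{G}}$ at all, so \cref{prop:image} gives nothing and the edges $a_i$ need not come from $\mathcal{G}$. (You are implicitly taking point (4) of \cref{lem:push} at face value; as printed it is inaccurate precisely when $c_1=c_2\notin m(A)$ --- the biconditional is only correct for $c_1\neq c_2$.) The repair is immediate, and it is exactly the paper's case (b): in this situation $\mathcal{H}$ is itself a term graph, so $t_{\mathcal{H}}(a_1)=t_{\mathcal{H}}(a_2)$ forces $a_1=a_2$ and hence $h_1=h_2$ directly, with no appeal to regularity. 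When instead the two $\mathcal{H}$-edges have distinct targets in $\mathcal{H}$, the corrected point (4) does produce the witnesses and your transfer argument goes through, as it does in the mixed case via point (5). So your proof becomes complete once this one sub-case is split off and handled by the term-graph property of $\mathcal{H}$ rather than by reduction to the $\mathcal{K}$-side; as written, the case analysis is not exhaustive.
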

\begin{rem}
By definition $\hyp_{\Sigma}$  the comma category on $\id{\hyp}$ and the costant functor in $\mathcal{G}^\Sigma$. Now, this last functor preserves pushouts, thus we know how to compute this kind of colimits in $\hyp_{\Sigma}$. In particular $p:\mathcal{P}\to \mathcal{G}^{\Sigma}$ is given by the pushout $\mathcal{P}$ in $\hyp$ equipped with the labeling induced by $l_1$ and $l_2$.
\end{rem}
\begin{proof}
	By the previous remark we know that we have pushout squares in $\catname{Set}$
	\begin{center}
		\begin{tikzpicture}
		\node(A) at(0,0){$E_{\mathcal{G}}$};
		\node(B) at (1.5,0){$E_{\mathcal{K}}$};
		\node(C) at(1.5,-1.5){$E_{\mathcal{P}}$};
		\node(D) at (0,-1.5){$E_{\mathcal{H}}$};
		\draw[->](A)--(B)node[pos=0.5, above]{$g_E$};
		\draw[->](D)--(C)node[pos=0.5, below]{$h_E$};
		\draw[->](A)--(D)node[pos=0.5, left]{$f_E$};
		\draw[->](B)--(C)node[pos=0.5, right]{$k_E$};

		\node(A) at(3,0){$V_{\mathcal{G}}$};
		\node(B) at (4.5,0){$V_{\mathcal{K}}$};
		\node(C) at(4.5,-1.5){$V_{\mathcal{P}}$};
		\node(D) at (3,-1.5){$V_{\mathcal{H}}$};
		\draw[->](A)--(B)node[pos=0.5, above]{$g_V$};
		\draw[->](D)--(C)node[pos=0.5, below]{$h_V$};
		\draw[->](A)--(D)node[pos=0.5, left]{$f_V$};
		\draw[->](B)--(C)node[pos=0.5, right]{$k_V$};
		\end{tikzpicture}
	\end{center} 
	And diagrams
	
		\begin{center}
		\begin{tikzpicture}
		\node(A) at(-4,0){$E_{\mathcal{G}}$};
		\node(B) at (-2.5,0){$E_{\mathcal{K}}$};
		\node(C) at(-2.5,-1.5){$E_{\mathcal{P}}$};
		\node(D) at (-4,-1.5){$E_{\mathcal{H}}$};
		\draw[->](A)--(B)node[pos=0.5, above]{$g_E$};
		\draw[->](D)--(C)node[pos=0.5, below]{$h_E$};
		\draw[->](A)--(D)node[pos=0.5, left]{$f_E$};
		\draw[->](B)--(C)node[pos=0.5, right]{$k_E$};
		\node(E) at(-5.5,1.5){$V_{\mathcal{G}}^\star$};
		\node(F) at (-1,1.5){$V_{\mathcal{K}}^\star$};
		\node(G) at(-1,-3){$V_{\mathcal{P}}^\star$};
		\node(H) at (-5.5,-3){$V_{\mathcal{H}}^\star$};
		\draw[->](E)--(F)node[pos=0.5, above]{$g^\star_V$};
		\draw[->](H)--(G)node[pos=0.5, below]{$h^\star_V$};
		\draw[->](E)--(H)node[pos=0.5, left]{$f^\star_V$};
		\draw[->](F)--(G)node[pos=0.5, right]{$k^\star_V$};
		\draw[->](A)--(E)node[pos=0.5, right]{$s_{\mathcal{G}}$};
		\draw[->](B)--(F)node[pos=0.5, left]{$s_{\mathcal{K}}$};
		\draw[->](D)--(H)node[pos=0.5, right, xshift=0.1cm]{$s_{\mathcal{H}}$};
		\draw[->](C)--(G)node[pos=0.5, left]{$s_{\mathcal{P}}$};

		\node(A') at(2,0){$E_{\mathcal{G}}$};
		\node(B') at (3.5,0){$E_{\mathcal{K}}$};
		\node(C') at(3.5,-1.5){$E_{\mathcal{P}}$};
		\node(D') at (2,-1.5){$E_{\mathcal{H}}$};
		\draw[->](A')--(B')node[pos=0.5, above]{$g_E$};
		\draw[->](D')--(C')node[pos=0.5, below]{$h_E$};
		\draw[->](A')--(D')node[pos=0.5, left]{$f_E$};
		\draw[->](B')--(C')node[pos=0.5, right]{$k_E$};
		\node(E') at(0.5,1.5){$V_{\mathcal{G}}^\star$};
		\node(F') at (5,1.5){$V_{\mathcal{K}}^\star$};
		\node(G') at(5,-3){$V_{\mathcal{P}}^\star$};
		\node(H') at (0.5,-3){$V_{\mathcal{H}}^\star$};
		\draw[->](E')--(F')node[pos=0.5, above]{$g^\star_V$};
		\draw[->](H')--(G')node[pos=0.5, below]{$h^\star_V$};
		\draw[->](E')--(H')node[pos=0.5, left]{$f^\star_V$};
		\draw[->](F')--(G')node[pos=0.5, right]{$k^\star_V$};
		\draw[->](A')--(E')node[pos=0.5, right]{$t_{\mathcal{G}}$};
		\draw[->](B')--(F')node[pos=0.5, left]{$t_{\mathcal{K}}$};
		\draw[->](D')--(H')node[pos=0.5, right, xshift=0.1cm]{$t_{\mathcal{H}}$};
		\draw[->](C')--(G')node[pos=0.5, left]{$t_{\mathcal{P}}$};	
		\end{tikzpicture}
	\end{center}	
	 Now, suppose that there exists $h_1, h_2\in E_{\mathcal{P}}$ such that $t_{\mathcal{P}}(h_1)=t_{\mathcal{P}}(h_2)$, by \cref{rem:label}  we know that $t_{\mathcal{P}}(h_1), t_{\mathcal{P}}(h_2)\in V_{\mathcal{P}}$, thus, by \cref{lem:push} there are two possible cases.
	 \begin{enumerate}[label=(\alph*)]
	 	\item There exists a unique $w\in V_{\mathcal{K}}$ such that $t_{\mathcal{P}}(h_1)=k_V(w)=t_{\mathcal{P}}(h_2)$. Using again \cref{lem:push} we can split this case in four subcases.
	 	\medskip 
	 	
	 	\noindent (a.i) There exist $k_1$ and $k_2\in E_{\mathcal{K}}$ such that 
	 	\[k_E(k_1)=h_1 \qquad k_E(k_2)=h_2\]
	 	Then uniqueness of $w$ implies that
	 	\[t_{\mathcal{K}}(k_1)=w=t_{\mathcal{K}}(k_2)\]
	 	and we can conclude since $\mathcal{K}$ is the hypergraph undelying a term graph.
	 	\medskip 
	 	
	 	\noindent (a.ii) There exist $h'_1$ and $h'_2\in E_{\mathcal{H}}$ such that 
	 	\[h_E(h'_1)=h_1 \qquad h_E(h'_2)=h_2\]
	 	Therefore
	 	\[h_V(t_{\mathcal{H}}(h'_1))=h_V(t_{\mathcal{H}}(h'_2))\]
	 	hence there exist $v_1, v_2\in V_{\mathcal{G}}$ such that
	 	\[g_V(v_1)=g^\star_V(v_2) \qquad k_V(g_V(v_1))=t_{\mathcal{P}}(h_1)=t_{\mathcal{P}}(h_2)=k_V(g_V(v_2))\]
	 	Thus 
	 	\[g_V(v_1)=w=g_V(v_2)\]
	 	On the other hand \cref{prop:image} implies that there exist $g_1, g_2\in E_{\mathcal{G}}$ such that
	 	\[h'_1=f_E(g_1) \qquad h'_2=f_E(g_2) \qquad t_{\mathcal{G}}(g_1)=v_1 \qquad t_{\mathcal{G}}(g_2)=v_2\]	
	 Hence
	 \[t_{\mathcal{K}}(g_E(g_1))=w=t_{\mathcal{K}}(g_E(g_2))\]
	 and we can deduce that 	
	 \[g_E(g_1)=g_E(g_2)\]
	from which $h_1=h_2$ follows using \cref{lem:push}.
	 	 \medskip
	 	 
	 	\noindent (a.iii) There exist $k\in E_{\mathcal{K}}$ and $h'\in E_{\mathcal{H}}$ such that
	 	\[h_E(h')=h_1\qquad k_E(k)=k_2\]
	 	hence
	 	\[h_V(t_{\mathcal{H}}(h'))=k_V(t_{\mathcal{K}}(k))\]
	 	and we can conclude that there exists $v\in V_{\mathcal{G}}$ with the property that
	 	\[f_V(v)=t_{\mathcal{H}}(h')\qquad g_V(v)=w\] 
	 	Using \cref{prop:image} we can also deduce the existence of $g\in E_\mathcal{G}$ satisfying 
	 	\[ f_E(g)=h' \qquad t_{\mathcal{G}}(g)=v\]
	 	Since $\mathcal{K}$ underlies a term graph it follows that $g_E(g)=k$. Appealing again to \cref{lem:push} we get the thesis.
	 	\medskip 
	 	
	 	\noindent (a.iv) This is case is dealt as the previous one, simply swapping $h_1$ and $h_2$.
	 	\item  There exists a unique $v\in V_\mathcal{H}\smallsetminus f_V(V_{\mathcal{G}})$ such that \[t_{\mathcal{P}}(h_1)=h_V(v)=t_{\mathcal{P}}(h_2)\]
	 	Now, if $h\in E_{\mathcal{K}}$ is such that $t_{\mathcal{P}}(k_E(h))=h_V(v)$ then, by \cref{lem:push}, there must be $w\in V_{\mathcal{G}}$ such that $f_V(w)=v$ and $g_V(w)=k$, but this is absurd under our hypothesis. We conclude that there  exist  $h'_1$ and $h'_2\in E_{\mathcal{H}}$ such that 
	 	\[h_E(h'_1)=h_1 \qquad h_E(h'_2)=h_2\]
	 and the uniqueness of $v$ implies that 
	 \[t_{\mathcal{H}}(h'_1)=t_{\mathcal{H}}(h'_2)\]
	 The thesis now follows.	 	 \qedhere 
	 \end{enumerate}
\end{proof}
\cref{cor:varie3,lem:reg,prop:push} allow us to recover the following result, previously proved by direct computation in \cite[Thm.~4.2]{corradini2005term}.
\begin{cor}
	The category $\tg$ is quasi-adhesive.
\end{cor}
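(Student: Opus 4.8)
The plan is to derive quasi-adhesivity of $\tg$ as a single instance of \cref{cor:varie3}. I take as ambient category $\catname{B}:=\hyp_{\Sigma}$, which is adhesive by \cref{prop:mono} and hence $\mono{\hyp_{\Sigma}},\arr{\hyp_{\Sigma}}$-adhesive (\cref{rem:salva}). Since quasi-adhesivity is exactly $\reg{\tg},\arr{\tg}$-adhesivity (again \cref{rem:salva}), I set $\mathcal{M}':=\reg{\tg}$ and $\mathcal{N}':=\arr{\tg}$ and verify the hypotheses of \cref{cor:varie3}: that $\tg$ is a full subcategory of $\hyp_{\Sigma}$ (true by definition), that $\mathcal{M}'\subseteq\mono{\hyp_{\Sigma}}$ and $\mathcal{N}'\subseteq\arr{\hyp_{\Sigma}}$, that $\tg$ is closed in $\hyp_{\Sigma}$ under pullbacks and under $\mathcal{M}',\mathcal{N}'$-pushouts, and that $\mathcal{M}',\mathcal{N}'$ satisfy conditions (i)--(iii) of \cref{def:class}.

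Most of these are immediate from earlier results. The inclusion $\mathcal{N}'=\arr{\tg}\subseteq\arr{\hyp_{\Sigma}}$ is trivial. For $\mathcal{M}'\subseteq\mono{\hyp_{\Sigma}}$ I use \cref{prop:equ}: a regular mono of $\tg$ is an equalizer in $\tg$, and since $I_{\Sigma}$ creates equalizers it is also an equalizer, hence a monomorphism, in $\hyp_{\Sigma}$. Closure of $\tg$ under pullbacks in $\hyp_{\Sigma}$ is exactly \cref{cor:pb}, and closure under $\mathcal{M}',\mathcal{N}'$-pushouts (spans one of whose legs is a regular mono, the other arbitrary) is exactly \cref{prop:push}, recalling from the remark preceding it that such pushouts are computed as in $\hyp_{\Sigma}$.

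The substantive step, and the main obstacle, is checking conditions (i)--(iii) of \cref{def:class} for $\reg{\tg}$ (for $\arr{\tg}$ they are vacuous). The crucial tool is the characterisation of \cref{lem:reg}: a regular mono of $\tg$ is precisely a monomorphism that sends input nodes to input nodes. Containment of isomorphisms is clear, and closure under composition and decomposition are easy: monomorphisms compose and left-cancel, while ``preserving input nodes'' composes, and if $g\circ f$ preserves input nodes then so must $f$, since if $f(v)$ were the target of an edge then $g(f(v))$ would be the target of its image, so $g\circ f$ would fail to send $v$ to an input node. The genuine work is stability under pullbacks and pushouts, where I exploit the defining property of term graphs---an edge is uniquely determined by its target, and every morphism commutes with targets---together with the explicit (co)limit formulas of \cref{lem:push}.

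For stability under pullbacks, take a pullback square in $\tg$ (computed as in $\hyp_{\Sigma}$ by \cref{cor:pb}) whose right-hand leg $n$ is regular; the opposite leg $m$ is then a monomorphism, being a pullback of one. If $v$ is an input node of the apex, with images $v_C$ and $v_B$ under the two legs out of it, and $v_C$ were the target of an edge $e_C$, then commutativity and the regularity of $n$ would force $v_B$ to be the target of an edge $e_B$; since the base is a term graph, $e_B$ and $e_C$ have the same image there, so $(e_B,e_C)$ is an edge of the pullback with target $v$, contradicting that $v$ is an input node. Hence $m$ preserves input nodes and is regular. Stability under pushouts concerns only the pushouts that exist, namely the $\mathcal{M}',\mathcal{N}'$-pushouts of \cref{prop:push}; there the comparison leg $n$ is a monomorphism by adhesivity of $\hyp_{\Sigma}$, and a parallel argument---detecting via \cref{lem:push}(5) when a coprojected node is a target, and then transporting that target back across the regular mono $m$ using \cref{lem:reg}---shows that $n$ preserves input nodes, hence is regular. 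With conditions (i)--(iii) verified, \cref{cor:varie3} yields that $\tg$ is $\reg{\tg},\arr{\tg}$-adhesive, that is, quasi-adhesive.
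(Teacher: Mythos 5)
Your proposal is correct and takes essentially the same route as the paper: its proof consists precisely of invoking \cref{cor:varie3} with $\tg$ as a full subcategory of the adhesive category $\hyps$, taking $\mathcal{M}'=\reg{\tg}$ and $\mathcal{N}'=\arr{\tg}$, and relying on \cref{lem:reg} and \cref{prop:push} exactly as you do. The only difference is that you spell out the verification of conditions (i)--(iii) of \cref{def:class} and the closure hypotheses (via \cref{prop:equ}, \cref{cor:pb} and \cref{lem:push}), which the paper leaves implicit.
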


\section{Conclusions}\label{sec:concl}
In this paper we have introduced a new criterion for $\mathcal{M}, \mathcal{N}$-adhesivity, based on the verification of some properties of functors connecting the category of interest to a family of suitably adhesive categories.  This criterion can be seen as a distilled abstraction of many \emph{ad hoc} proofs of adhesivity found in literature.
This criterion allows us to prove in a uniform and systematic way some previous results about the adhesivity of categories built by products, exponents, and comma construction. We have applied the criterion to several significant examples, such as term graphs and directed (acyclic) graphs; moreover, using the modularity of our approach, we have readily proved suitable adhesivity properties to categories constructed by combining simpler ones.  In particular, we have been able to tackle the adhesivity problem for several categories of hierarchical (hyper)graphs, including Milner's bigraphs, bigraphs with sharing, and a new version of bigraphs with recursion.

As future work, we plan to analyse other categories of graph-like objects using our criterion; an interesting case is that of \emph{directed bigraphs} \cite{gm:directedbig,bgm:calco09,bmp:sac}.
Moreover, it is worth to verify whether the $\mathcal{M}, \mathcal{N}$-adhesivity that we obtain from the results of this paper is suited for modelling specific rewriting systems, e.g.~based on the DPO approach.  As an example, $\tg$ is quasiadhesive but this does not suffice in most applications, because the rules are often spans of monomorphisms, and not of regular monos \cite{corradini2005term}. 

\bibliographystyle{alphaurl}
\bibliography{bibliog.bib}

\end{document}